\pgfplotsset{compat=newest}
   \newlength\figureheight
   \newlength\figurewidth
\newtheorem{theorem}{Theorem}[section]
\newtheorem{lemma}[theorem]{Lemma}
\newtheorem{proposition}[theorem]{Proposition}
\newtheorem{corollary}[theorem]{Corollary}
\theoremstyle{definition}
\newtheorem{definition}[theorem]{Definition}
\newtheorem{assumption}[theorem]{Assumption}
\newtheorem{example}[theorem]{Example}
\theoremstyle{remark}
\newtheorem{remark}[theorem]{Remark}
\newcommand{\R}{\mathbb{R}}
\newcommand{\N}{\mathbb{N}}
\newcommand{\rmd}{\mathrm{d}}
\newcommand{\ddt}{\frac{\rmd}{\rmd t}}
\newcommand{\rme}{\mathrm{e}}
\newcommand{\hatmuR}{\hat{\mu}^R_t}
\newcommand{\hatmuE}{\hat{\mu}^E_t}
\newcommand{\hatmuC}{\hat{\mu}^C_t}
\newcommand{\hatmuF}{\hat{\mu}^F_t}
\newcommand{\gamR}{\gamma^R_t}
\newcommand{\gamE}{\gamma^E_t}
\newcommand{\gamC}{\gamma^C_t}
\newcommand{\gamF}{\gamma^F_t}
\newcommand{\ones}{\mathbf{1}_d}
\newcommand{\gam}[2]{\gamma^{#1}_{#2}}
\newcommand{\tstrut}{\rule{0pt}{2.6ex}}
\newcommand{\bstrut}{\rule[-0.9ex]{0pt}{0pt}}
\DeclareMathOperator{\E}{\mathbb{E}}
\DeclareMathOperator{\cov}{cov}
\DeclareMathOperator{\tr}{tr}
\begin{document}
 
 \title{Expert Opinions and Logarithmic Utility Maximization for Multivariate Stock Returns with Gaussian Drift}
 \author{J\"orn Sass\thanks{Department of Mathematics, University of Kaiserslautern, P.O. Box 3049, 67653 Kaiserslautern, Germany; \newline \textit{E-mail address:} sass@mathematik.uni-kl.de}, Dorothee Westphal\footnote{Department of Mathematics, University of Kaiserslautern, P.O. Box 3049, 67653 Kaiserslautern, Germany; \newline \textit{E-mail address:} westphal@mathematik.uni-kl.de} and Ralf Wunderlich\footnote{Mathematical Institute, Brandenburg University of Technology Cottbus - Senftenberg, Postfach 101344, 03013 Cottbus, Germany; \newline \textit{E-mail address:} ralf.wunderlich@b-tu.de}}
 \date{March 10, 2016}
 
 \maketitle
 
 \begin{abstract}
  This paper investigates optimal trading strategies in a financial market with multidimensional stock returns where the drift is an unobservable multivariate Ornstein-Uhlenbeck process. Information about the drift is obtained by observing stock returns and expert opinions. The latter provide unbiased estimates on the current state of the drift at discrete points in time.
  
  The optimal trading strategy of investors maximizing expected logarithmic utility of terminal wealth depends on the filter which is the conditional expectation of the drift given the available information. We state filtering equations to describe its dynamics for different information settings. Between expert opinions this is the Kalman filter. The conditional covariance matrices of the filter follow ordinary differential equations of Riccati type. We rely on basic theory about matrix Riccati equations to investigate their properties. Firstly, we consider the asymptotic behaviour of the covariance matrices for an increasing number of expert opinions on a finite time horizon. Secondly, we state conditions for the convergence of the covariance matrices on an infinite time horizon with regularly arriving expert opinions.
  
  Finally, we derive the optimal trading strategy of an investor. The optimal expected logarithmic utility of terminal wealth, the value function, is a functional of the conditional covariance matrices. Hence, our analysis of the covariance matrices allows us to deduce properties of the value function.
  
  \smallskip
  
  \noindent
  \textit{Keywords:} Conditional covariance matrix, Ornstein-Uhlenbeck process, partial information, portfolio optimization, unbiased expert opinions
  
  \noindent
  \textit{2010 Mathematics Subject Classification:} Primary 91G10; Secondary 93E11, 93E20.
 \end{abstract}

 \section{Introduction}
  
  Trading decisions in financial markets are always made based on the often very limited information on stock developments available to the investors. Such information might comprise former and present observed stock returns. Although these returns are influenced by some drift term, there is always random variation in observed data. For making trading decisions it is however of huge importance to know as much as possible about the underlying drift. Another source of information that investors often rely on when it comes to trading are expert opinions. Experts might have some deeper knowledge about the current developments in the market and are therefore able to give a more or less accurate estimate of drift terms at certain times.
  The aim of this paper is to investigate optimal portfolio trading strategies in a financial market where the drift of the stock returns is an unobserved Gaussian process. Information about the drift process is obtained from observing stock returns as well as incoming expert opinions that give an unbiased estimate of the current state of the drift at discrete points in time. An investor's objective is to find a trading strategy that maximizes expected logarithmic utility of her terminal wealth.
  
  Without expert opinions this is a classical utility maximization problem under partial information, meaning that an investor only has the information coming from observing the stock returns and cannot see the underlying stochastic drift process directly. The best estimate in a mean-square sense then is the filter. While under suitable integrability assumptions existence of optimal trading strategies can be shown, see Bj\"ork, Davis and Land\'en~\cite{bjoerk_davis_landen_2010} and  Lakner~\cite{lakner_1995}, we need models which allow for finite-dimensional filters to solve the problem completely including the computation of an optimal policy. There are essentially two cases which lead to finite-dimensional filters. Firstly, the drift process can be modeled as an Ornstein-Uhlenbeck process (OUP) as above (including the degenerate case of a static but unobserved random variable), or as a continuous time Markov chain (CTMC). The filters are the well-known Kalman and Wonham filters, respectively, see e.g.\ Elliott, Aggoun and Moore~\cite{elliott_aggoun_moore_1994}, Liptser and Shiryaev~\cite{liptser_shiryaev_1994}. In these two models the solution of the utility maximization problem is known, see Brendle~\cite{brendle_2006}, Lakner~\cite{lakner_1998}, Putsch\"ogl and Sass~\cite{putschoegl_sass_2008} and Honda~\cite{honda_2003}, Rieder and B\"auerle~\cite{rieder_baeuerle_2005}, Sass and Haussmann~\cite{sass_haussmann_2004}, respectively. 
  
  Including unbiased expert opinions reduces the variance of the filter. The better estimate then improves the expected utility. This can be seen as a continuous-time version of the static Black-Litterman approach which combines an estimate of the asset returns with expert opinions on the performance of the assets, see Black and Litterman~\cite{black_litterman_1992}.
  Frey, Gabih and Wunderlich~\cite{frey_gabih_wunderlich_2012, frey_gabih_wunderlich_2014} solve the case of an underlying CTMC with power utility, and Gabih, Kondakji, Sass and Wunderlich~\cite{gabih_kondakji_sass_wunderlich_2014} for OUP with logarithmic utility. As an approximation, also expert opinions arriving continuously in time can be introduced. This allows for more explicit solutions for the portfolio optimization problem. Davis and Lleo~\cite{davis_lleo_2013} consider this approach for an underlying OUP, Sass, Seifried and Wunderlich~\cite{sass_seifried_wunderlich_2016} address the CTMC.
  
  This paper generalizes the results from \cite{gabih_kondakji_sass_wunderlich_2014}, obtained for a market with one stock, to a financial market with $d\geqslant 1$ stocks and corresponding expert opinions. The filtering equations we derive are extensions of the one-dimensional case. The portfolio optimization results carry over to the multivariate case as well, see Theorem~\ref{thm:optimal_value}. But the convergence results of \cite{gabih_kondakji_sass_wunderlich_2014} for the conditional variance have no direct equivalents in the multivariate case, since they require and state very detailed monotonicity properties and bounds. Instead we choose suitable norms, e.g.\ the spectral norm, and here lie our main contributions. The convergence of the norms of the conditional covariances for an increasing number of expert opinions to zero can now be shown, see Theorem~\ref{thm:asymptotics_for_N_to_infinity}. The convergence of the norms for equidistant expert opinions on an infinite time horizon is more delicate, in particular when requiring monotonicity between the expert opinions which reflects the decreasing impact of the expert opinions over time. Here we state several results, showing convergence under certain conditions, see e.g.\ Theorem~\ref{thm:trace_of_gamma_E}, as well as providing counterexamples if these conditions do not hold.  
  
  
  In detail we proceed as follows. In Section~\ref{sec:market_model_and_filtering_equations} we introduce our financial market model. We assume that the drift of the stock returns is a multivariate Ornstein-Uhlenbeck process with dynamics
  \[ \rmd \mu_t = \alpha (\delta - \mu_t)\,\rmd t + \beta\,\rmd B_t, \]
  where $\alpha, \beta\in\R^{d\times d}$, $\delta\in\R^d$ and $B=(B_t)_{t\in [0,T]}$ is a $d$-dimensional Brownian motion. The drift cannot be observed by the participants in the market. Aside from the stock prices, further estimates on the current state of the drift arrive in form of expert opinions. We introduce the concept of expert opinions and specify different settings of information that is available to an investor. We assume that one investor observes stock returns only, another one only uses expert opinions for making trading decisions. A third investor is assumed to have access to both of these sources of information. The second part of Section~\ref{sec:market_model_and_filtering_equations} states the corresponding filtering equations. These give the dynamics of the filter and of the conditional covariance matrices. In the case of return observations only, the filter is the classical Kalman filter, see for example Liptser and Shiryaev~\cite{liptser_shiryaev_1994}. When we include expert opinions we make use of the discrete-time Kalman filter as described in Elliott, Aggoun and Moore~\cite{elliott_aggoun_moore_1994}.
  
  Section~\ref{sec:properties_of_the_conditional_covariance_matrix} analyzes the conditional covariance matrices. In particular, Theorem~\ref{thm:asymptotics_for_N_to_infinity} shows the limiting behaviour for an increasing number of expert opinions with some minimal reliability on a finite time horizon. This is a generalization of Proposition~4.3 from Gabih et al.~\cite{gabih_kondakji_sass_wunderlich_2014}. In Section~\ref{sec:asymptotic_results_for_an_infinite_time_horizon}, we analyze the limiting behaviour of the conditional covariance matrices on an infinite time horizon with regularly arriving expert opinions. In this context it is important to mention that the conditional covariance matrices for the investor who observes stock returns only and for the investor observing stock returns as well as expert opinions follow a matrix Riccati equation. In contrast to the one-dimensional situation this ordinary differential equation does not have a closed-form solution which makes the analysis harder. It is nevertheless possible to prove some limiting behaviour in these cases by using basic properties of Riccati differential equations as for example provided in Ku\u{c}era~\cite{kucera_1973}, Wonham~\cite{wonham_1968}, Bucy~\cite{bucy_1967} and M\aa{}rtensson~\cite{martensson_1971}.
  
  The properties of the conditional covariance matrices are helpful when looking at the portfolio optimization problem that we address in Section~\ref{sec:portfolio_optimization_problem}. We consider maximization of expected logarithmic utility of terminal wealth. The optimal strategy and value function for the different investors are computed along the lines of Gabih et al.~\cite{gabih_kondakji_sass_wunderlich_2014}. It turns out that the optimal value is a function of the corresponding conditional covariance matrices. Hence, the remaining part of the section concentrates on proving properties of the value function that can be deduced from properties of the covariance matrices.
  
  In Section~\ref{sec:numerical_results} we provide some simulations of filters and value functions that illustrate our theoretical results. We also take a short look at the concept of efficiency to analyze the value of information obtained from different sources of information.
  
  \textbf{Notation:} Throughout this paper, when considering symmetric matrices $A$ and $B$ of the same size, we will write $A\geqslant B$ or $B\leqslant A$ if the difference $A-B$ is positive semidefinite. Unless stated otherwise, whenever $A$ is a matrix, $\lVert A\rVert$ denotes the spectral norm of $A$. For a symmetric positive semidefinite matrix $A\in\R^{d\times d}$ we call a symmetric positive semidefinite matrix $B\in\R^{d\times d}$ the \emph{square root} of $A$ if $B^2=A$. The square root is unique and will be denoted by $A^{\frac{1}{2}}$.
  
 \section{Market Model and Filtering Equations}\label{sec:market_model_and_filtering_equations}
 
  \subsection{Financial Market Model}
   
   Let $T>0$ denote our finite investment horizon. We consider a filtered probability space $(\Omega,\mathcal{G},\mathbb{G},\mathbb{P})$ where the filtration $\mathbb{G}=(\mathcal{G}_t)_{t \in [0,T]}$ satisfies the usual conditions. All processes are assumed to be $\mathbb{G}$-adapted.
   In our financial market model there is one risk-free bond $S^0$ with dynamics
   \[ \rmd S^0_t = S^0_tr_t \,\rmd t, \qquad S^0_0=1. \]
   Here, $r=(r_t)_{t\in [0,T]}$ is some deterministic continuous process.
   Furthermore, the market allows investments in $d$ risky stocks $S^1, \dots, S^d$ with
   \[ \rmd S^i_t =S^i_t \biggl(\mu^i_t \,\rmd t + \sum_{j=1}^m \sigma^{ij} \,\rmd W^j_t\biggr), \qquad i=1, \dots, d,\]
   where $W=(W_t)_{t\in [0,T]}$ is an $m$-dimensional Brownian motion. We assume that the matrix $\sigma\sigma^T$ with $\sigma=(\sigma^{ij})_{i,j}$ is positive definite.
   
   Whereas the matrix $\sigma$ is constant over time, the drift process $\mu=(\mu^1, \dots, \mu^d)^T$ follows the dynamics of a multivariate Ornstein-Uhlenbeck process. More precisely,
   \[ \rmd \mu_t = \alpha (\delta - \mu_t)\,\rmd t + \beta \,\rmd B_t, \]
   where $\alpha, \beta\in\R^{d\times d}$, $\delta\in\R^d$ and $B=(B_t)_{t\in [0,T]}$ is a $d$-dimensional Brownian motion independent of $W$. The initial drift $\mu_0$ is multivariate normally distributed, $\mu_0 \sim \mathcal{N}(m_0, \Sigma_0)$, for some vector $m_0 \in\R^d$ and covariance matrix $\Sigma_0 \in\R^{d\times d}$ which is symmetric and positive semidefinite. We assume that $\mu_0$ is independent of $B$ and $W$.
   The drift process $\mu$ can be written as
   \[\mu_t = \delta + \rme^{-\alpha t}\biggl( \mu_0 - \delta + \int_0^t \rme^{\alpha s}\beta \,\rmd B_s \biggr).\]
   The mean $m_t := \E[\mu_t]$ and covariance matrix $\Sigma_t := \cov(\mu_t)$ of $\mu_t$ are given by the formulas
   \begin{align*}
    m_t &= \delta + \rme^{-\alpha t}(m_0-\delta), \\
    \Sigma_t &= \rme^{-\alpha t}\biggl( \Sigma_0 + \int_0^t \rme^{\alpha s} \beta \beta^T \rme^{\alpha^T s}\,\rmd s \biggr) \rme^{-\alpha^T t}.
   \end{align*}
   
   In our model we are interested in estimating the drift $\mu$ from observed stock prices $S^i$, $i=1, \dots, d$. Rather than working directly with the stock prices, it will prove easier to work with the stock returns $R^i$ instead, where
   \[ \rmd R^i_t = \frac{\rmd S^i_t}{S^i_t}. \]
   The return dynamics can be written as $\rmd R_t = \mu_t \,\rmd t + \sigma \,\rmd W_t$. Note that we can write the returns depending on the stock prices as
   \[ R^i_t=\log S^i_t + \sum_{j=1}^m \frac{1}{2} (\sigma^{ij})^2 t \]
   which implies that the filtration generated by the stock prices is the same as the one generated by the return processes. This is why in the following we assume that investors in the market observe stock returns instead of stock prices.
   
   In addition to observing stock returns, information on the drift process $\mu$ can be drawn from expert opinions that arrive at discrete time points and give an unbiased estimate of the drift. We model these expert opinions by fixing deterministic time points $0=t_0 < t_1 < \cdots < t_{N-1} < T$. The expert views at time $t_k$ are modeled as a random vector $Z_k = (Z^1_k, \dots, Z^d_k)^T$ with
   \[ Z_k = \mu_{t_k}+(\Gamma_k)^{\frac{1}{2}}\varepsilon_k \]
   where the matrices $\Gamma_k\in\R^{d\times d}$ are symmetric positive definite and $\varepsilon_k=(\varepsilon^1_k, \dots, \varepsilon^d_k)^T$. Here, the $\varepsilon^i_k$, $i=1, \dots, d$, $k=0, \dots, N-1$, are independent identically $\mathcal{N}(0,1)$-distributed random variables. We also assume that the $\varepsilon^i_k$ are independent from both $\mu_0$ and the Brownian motions $W$ and $B$.
   Note that $Z_k$ is multivariate $\mathcal{N}(\mu_{t_k}, \Gamma_k)$-distributed which implies that the expert opinions give an unbiased estimate of the true state of the drift at time $t_k$. The matrix $\Gamma_k$ is a means of modelling the reliability of the expert. Note that in the one-dimensional situation $\Gamma_k$ is just the variance of the expert's estimate at time $t_k$.
   
   \begin{remark}
    It is possible to allow relative expert views, meaning that an expert may also give an estimation of the difference of drifts of two stocks at time $t_k$. These relative estimations can be expressed in the form
    \[ Q_k=P_k\mu_{t_k}+\xi_k\in\R^d \]
    for some matrix $P_k\in\R^{l\times d}$, and some random variable $\xi_k$ that is multivariate normally distributed with expectation zero. Here, $l\leqslant d$ is the number of estimates an expert makes. The \emph{pick matrix} $P_k$ which we assume to have full rank contains information about which stocks are included in these estimates, see Section~3.1 of Schöttle, Werner and Zagst~\cite{schoettle_werner_zagst_2010} for a detailed description and an example.
    Note that since $P_k$ has full rank there exists some $\varphi_k\in\R^d$ such that $\xi_k=P_k\varphi_k$, for example $\varphi_k=P_k^T(P_kP_k^T)^{-1}\xi_k$. Hence,
    \[ Q_k=P_k(\mu_{t_k}+\varphi_k) \]
    where $Z_k:=\mu_{t_k}+\varphi_k$ is an absolute expert view about the state of the drift as introduced above, since $\varphi_k$ is normally distributed with expectation zero and can therefore be written as $(\Gamma_k)^{1/2}\varepsilon_k$.
   \end{remark}
   
   It remains to describe the information available to an investor. Following Gabih et al.~\cite{gabih_kondakji_sass_wunderlich_2014}, we distinguish four different investors with corresponding investor filtrations. Define $\mathbb{F}^H=(\mathcal{F}^H_t)_{t\in [0,T]}$ for $H\in\{R,E,C,F\}$. The first investor we consider can observe stock returns but not the incoming expert opinions. Therefore, her filtration $\mathcal{F}^R_t$ is for each $t\in[0,T]$ generated by the return processes $\{R^i_s \;|\; s\leqslant t, i=1,\dots, d \}$. Another investor cannot observe these stock returns or simply decides to rely on the expert opinions $Z_k$ only. Therefore, the corresponding investor filtration $\mathcal{F}^E_t$ is generated by the expert opinions $\{Z_k \;|\; t_k\leqslant t\}$. As a combination of the above filtrations, $\mathcal{F}^C_t$ is generated by $\{R^i_s \;|\; s\leqslant t, i=1, \dots, d\} \cup \{Z_k \;|\; t_k\leqslant t\}$. This filtration corresponds to an investor who has access to both stock returns and expert opinions as sources of information. For completeness, we also include an investor who can observe the drift process $\mu$ itself. In this last case of full information the investor filtration is simply given by $\mathbb{F}^F=\mathbb{G}$.
   
  \subsection{Filtering Equations}
   
   At the end of the previous subsection we have defined four investors with access to different sources of information. Only the fully informed investor can observe the drift process $(\mu_t)_{t\in[0,T]}$ directly. The other investors do not observe the drift but have to estimate it from the information available to them. Let $\mathbb{F}^H$ for $H\in\{R,E,C\}$ be the underlying investor filtration as defined in the previous subsection. In the mean-square sense, an optimal estimator for the drift $\mu_t$ at time $t$ under partial information is the conditional expectation $\hat{\mu}^H_t=\E[\mu_t | \mathcal{F}^H_t]$. These estimators are also called \emph{filters} and the aim of this subsection is to find filtering equations describing their dynamics. Furthermore, we also investigate the conditional covariance matrix
   \[ \gamma^H_t = \E\bigl[(\mu_t-\hat{\mu}^H_t)(\mu_t-\hat{\mu}^H_t)^T \bigl| \mathcal{F}^H_t\bigr] \]
   for $H\in\{R,E,C\}$ which is a measure for the distance between $\mu_t$ and its filter $\hat{\mu}^H_t$ given information $\mathcal{F}^H_t$.
   
   The investors with partial information cannot observe the drift directly. The only source of information for the first investor we consider are the return processes, meaning that $\mathbb{F}^R$ is the corresponding investor filtration.
   
   \begin{lemma}\label{lem:hatmuR_dynamics}
    The filter $\hatmuR$ follows the dynamics
    \[ \rmd\hatmuR = \alpha(\delta-\hatmuR)\,\rmd t + \gamR(\sigma\sigma^T)^{-1}(\rmd R_t-\hatmuR\,\rmd t), \]
    where $\gamR$ is the solution of the ordinary differential equation
    \[ \frac{\rmd}{\rmd t} \gamR = -\alpha \gamR -\gamR\alpha^T + \beta\beta^T - \gamR(\sigma\sigma^T)^{-1}(\gamR)^T. \]
    The initial values are $\hat{\mu}^R_0=m_0$ and $\gamma^R_0=\Sigma_0$.
   \end{lemma}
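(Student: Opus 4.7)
The plan is to recognize the system $(\mu, R)$ as a linear conditionally Gaussian filtering problem and invoke the continuous-time Kalman--Bucy filter theorem, e.g.\ Theorem~10.3 of Liptser and Shiryaev~\cite{liptser_shiryaev_1994}. The signal equation $\rmd \mu_t = (\alpha\delta - \alpha \mu_t)\,\rmd t + \beta\,\rmd B_t$ is linear in $\mu_t$ with constant (hence measurable and bounded) coefficients, and the observation equation $\rmd R_t = \mu_t\,\rmd t + \sigma\,\rmd W_t$ is linear in $\mu_t$ with observation matrix equal to the identity and constant noise coefficient $\sigma$ satisfying the nondegeneracy condition $\sigma\sigma^T>0$ assumed in Section~\ref{sec:market_model_and_filtering_equations}.

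First I would verify the standard hypotheses: the initial condition $\mu_0 \sim \mathcal{N}(m_0, \Sigma_0)$ is Gaussian and independent of $B$ and $W$; the Brownian motions $B$ and $W$ driving signal and observation are independent, so in the notation of Liptser--Shiryaev the cross-covariance term between signal and observation noise vanishes. These ingredients guarantee that $\mu_t$ conditional on $\mathcal{F}^R_t$ is again Gaussian and that the filter is characterized by $\hatmuR$ and $\gamR$ alone.

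Next I would identify the coefficients in the general Kalman--Bucy formulas. With observation matrix $I_d$, vanishing signal-observation noise correlation, observation noise covariance $\sigma\sigma^T$ and signal noise covariance $\beta\beta^T$, the Kalman gain reduces to $K_t = \gamR (\sigma\sigma^T)^{-1}$. The innovations process $\rmd I_t := \rmd R_t - \hatmuR\,\rmd t$ is an $\mathbb{F}^R$-Brownian motion, so the filter dynamics become
\[
\rmd\hatmuR = \alpha(\delta - \hatmuR)\,\rmd t + \gamR (\sigma\sigma^T)^{-1}\bigl(\rmd R_t - \hatmuR\,\rmd t\bigr),
\]
and the conditional covariance satisfies the matrix Riccati equation
\[
\ddt \gamR = -\alpha\gamR - \gamR \alpha^T + \beta\beta^T - \gamR (\sigma\sigma^T)^{-1}(\gamR)^T,
\]
with initial conditions $\hat{\mu}^R_0 = \E[\mu_0] = m_0$ and $\gamma^R_0 = \cov(\mu_0) = \Sigma_0$, which is the claim.

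There is no substantive obstacle here; the argument is a direct specialization of a standard result. The only point worth attention is the bookkeeping of dimensions (the signal is $d$-dimensional while $W$ is $m$-dimensional), but since only $\sigma\sigma^T$ enters the filter this presents no difficulty once $\sigma\sigma^T$ is invertible. If one prefers a self-contained derivation, the alternative is to apply the general nonlinear filtering equation to $f(\mu_t) = \mu_t$ and $f(\mu_t) = \mu_t\mu_t^T$, use that the conditional law is Gaussian to close the hierarchy of moments, and read off the drift and diffusion of $\hatmuR$ and the ODE for $\gamR$; this route is longer and in this paper unnecessary.
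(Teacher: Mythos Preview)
Your proposal is correct and matches the paper's approach exactly: the paper's proof is the single sentence ``The dynamics follow immediately from the well-known Kalman filter, see for example Theorem~10.3 of Liptser and Shiryaev~\cite{liptser_shiryaev_1994}.'' Your write-up simply spells out the coefficient identification and hypothesis checks that the paper leaves implicit.
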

   \begin{proof}
    The dynamics follow immediately from the well-known Kalman filter, see for example Theorem~10.3 of Liptser and Shiryaev~\cite{liptser_shiryaev_1994}.
   \end{proof}
   
   Note that $\gamR$ follows an ordinary differential equation, called Riccati equation, and is hence deterministic. By definition, $\gamR$ is symmetric positive semidefinite. In the one-dimensional situation it is possible to write down a closed-form solution of the ordinary differential equation which yields an explicit form of $\gamR$, see equation~(3.3) in Gabih et al.~\cite{gabih_kondakji_sass_wunderlich_2014}. In the multidimensional case, we do not have such an explicit form of $\gamR$ in general. We will make use of basic properties of Riccati differential equations that can be found for example in Bucy~\cite{bucy_1967}, Ku\u{c}era~\cite{kucera_1973}, M\aa{}rtensson~\cite{martensson_1971} and Wonham~\cite{wonham_1968}.
   
   As a next step, we consider an investor whose filtration is $\mathbb{F}^E$, meaning that she knows the expert's opinions but does not observe the stock returns.
   
   \begin{lemma}\label{lem:hatmuE_dynamics}\mbox{}
    \begin{enumerate}
     \item[(i)] Let $t\in[0,T]$ and denote by $k$ the maximal index $j$ such that $t_j\leqslant t$. Then $t\in[t_k, t_{k+1})$ or in the case $k=N-1$ we have $t\in[t_{N-1},T]$, and it holds that
     \begin{align*}
      \hatmuE &= \rme^{-\alpha(t-t_k)}\hat{\mu}^E_{t_k} + \bigl(I_d - \rme^{-\alpha(t-t_k)}\bigr)\delta, \\
      \gamE &= \rme^{-\alpha(t-t_k)}\biggr(\gamma^E_{t_k} + \int_{t_k}^t \rme^{\alpha(s-t_k)}\beta\beta^T\rme^{\alpha^T(s-t_k)} \,\rmd s\biggr) \rme^{-\alpha^T(t-t_k)}.
     \end{align*}
     Here, $I_d$ denotes the unit matrix in $\R^{d\times d}$.
     \item[(ii)] At the information dates $t_k$, $k=0, \dots, N-1$, we get the formulas
     \begin{align*}
      \hat{\mu}^E_{t_k} &= \Lambda^E_k\hat{\mu}^E_{t_k-} + (I_d-\Lambda^E_k)Z_k, \\
      \gamma^E_{t_k} &= \Lambda^E_k\gamma^E_{t_k-}.
     \end{align*}
     Here, $\Lambda^E_k=\Gamma_k(\gamma^E_{t_k-}+\Gamma_k)^{-1}$. We set $\hat{\mu}^E_{0-}=m_0$ and $\gamma^E_{0-}= \Sigma_0$.
    \end{enumerate}
   \end{lemma}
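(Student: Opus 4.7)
The plan is to prove both parts by induction on the observation index $k$, maintaining throughout the property that the conditional distribution of $\mu_{t_k}$ given $\mathcal{F}^E_{t_k}$ is Gaussian with mean $\hat{\mu}^E_{t_k}$ and covariance $\gamma^E_{t_k}$. The base case follows immediately from $\mu_0\sim\mathcal{N}(m_0,\Sigma_0)$ and the triviality of $\mathcal{F}^E_{0-}$. Assuming the claim at step $k$, I would first use part~(i) to propagate the filter on $[t_k,t_{k+1})$, and then apply part~(ii) for the update at $t_{k+1}$.

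For part~(i), fix $t\in[t_k,t_{k+1})$ (or $[t_{N-1},T]$ if $k=N-1$). On this interval no new information enters the filtration, so $\mathcal{F}^E_t=\mathcal{F}^E_{t_k}$. Starting from the explicit OU representation
\[ \mu_t=\delta+\rme^{-\alpha(t-t_k)}(\mu_{t_k}-\delta)+\int_{t_k}^t\rme^{-\alpha(t-s)}\beta\,\rmd B_s, \]
the stochastic integral is independent of $\mathcal{F}^E_{t_k}$, since the latter is generated by $\mu_0$, by $B$ on $[0,t_k]$, and by $\varepsilon_0,\ldots,\varepsilon_k$. Taking the conditional expectation yields the stated formula for $\hatmuE$. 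Subtracting, the error $\mu_t-\hatmuE$ splits as $\rme^{-\alpha(t-t_k)}(\mu_{t_k}-\hat{\mu}^E_{t_k})$ plus the independent stochastic integral, so by the It\^o isometry the conditional covariance equals $\rme^{-\alpha(t-t_k)}\gamma^E_{t_k}\rme^{-\alpha^T(t-t_k)}+\int_{t_k}^t\rme^{-\alpha(t-s)}\beta\beta^T\rme^{-\alpha^T(t-s)}\,\rmd s$. The substitution $u=s-t_k$ allows the matrix exponentials to be pulled outside the integral, giving the asserted representation of $\gamE$.

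For part~(ii), at the observation date $t_k$ the new datum $Z_k=\mu_{t_k}+\Gamma_k^{1/2}\varepsilon_k$ is added, where $\varepsilon_k$ is independent of $\mathcal{F}^E_{t_k-}$. By the inductive hypothesis, $\mu_{t_k}$ conditional on $\mathcal{F}^E_{t_k-}$ is $\mathcal{N}(\hat{\mu}^E_{t_k-},\gamma^E_{t_k-})$, so the conditional law of $(\mu_{t_k},Z_k)$ is jointly Gaussian with marginal covariance $\gamma^E_{t_k-}+\Gamma_k$ for $Z_k$ and cross-covariance $\gamma^E_{t_k-}$. The Gaussian conditioning formulas, i.e.\ the discrete-time Kalman update as in Elliott, Aggoun and Moore~\cite{elliott_aggoun_moore_1994}, then give
\[ \hat{\mu}^E_{t_k}=\hat{\mu}^E_{t_k-}+\gamma^E_{t_k-}(\gamma^E_{t_k-}+\Gamma_k)^{-1}(Z_k-\hat{\mu}^E_{t_k-}),\quad \gamma^E_{t_k}=\gamma^E_{t_k-}-\gamma^E_{t_k-}(\gamma^E_{t_k-}+\Gamma_k)^{-1}\gamma^E_{t_k-}. \]
The elementary identities $I_d-\gamma^E_{t_k-}(\gamma^E_{t_k-}+\Gamma_k)^{-1}=\Gamma_k(\gamma^E_{t_k-}+\Gamma_k)^{-1}=\Lambda^E_k$ and $\Gamma_k(\gamma^E_{t_k-}+\Gamma_k)^{-1}\gamma^E_{t_k-}=\gamma^E_{t_k-}-\gamma^E_{t_k-}(\gamma^E_{t_k-}+\Gamma_k)^{-1}\gamma^E_{t_k-}$ rewrite these as the formulas in the statement, and the closure of Gaussianity under conditioning maintains the inductive hypothesis.

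The main obstacle is justifying that Gaussianity is preserved at each stage, since only then do the closed-form Kalman updates apply. This is the reason for framing everything as an induction: all random ingredients of $\mathcal{F}^E_{t_k}$ are jointly Gaussian with $\mu_{t_k}$, because they are deterministic linear functionals of $\mu_0$, of $B$ on $[0,t_k]$, and of $\varepsilon_0,\ldots,\varepsilon_k$; hence the regular conditional distribution of $\mu_{t_k}$ given $\mathcal{F}^E_{t_k}$ is automatically Gaussian with the parameters computed in (i) and (ii).
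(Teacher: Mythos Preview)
Your proof is correct and follows essentially the same approach as the paper: the OU representation plus independence of the stochastic integral for part~(i), and the discrete-time Kalman/Bayesian update formulas from Elliott, Aggoun and Moore for part~(ii), with the same algebraic rewriting via $\Lambda^E_k$. The only difference is one of presentation: you make the inductive structure and the preservation of conditional Gaussianity explicit, whereas the paper simply invokes the Kalman update formulas directly (which implicitly rely on the same Gaussianity); your added care is sound but not a different argument.
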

   \begin{proof}
    (i) Note that we can write the drift $\mu_t$ at time $t$ as
    \[ \mu_t = \delta + \rme^{-\alpha(t-t_k)}\biggl(\mu_{t_k}-\delta+\int_{t_k}^t \rme^{\alpha(s-t_k)}\beta \,\rmd B_s\biggr). \]
    Also, there is no incoming information between $t_k$ and $t$, so $\mathcal{F}^E_t=\mathcal{F}^E_{t_k}$. Hence,
    \begin{align*}
     \hatmuE &= \E[\mu_t|\mathcal{F}^E_t] = \E[\mu_t|\mathcal{F}^E_{t_k}] \\
     &= \E\biggl[\delta + \rme^{-\alpha(t-t_k)}\Bigl(\mu_{t_k}-\delta+\int_{t_k}^t \rme^{\alpha(s-t_k)}\beta \,\rmd B_s\Bigr)\;\bigg|\;\mathcal{F}^E_{t_k}\biggr] \\
     &= \delta + \rme^{-\alpha(t-t_k)}\biggl(\hat{\mu}^E_{t_k}-\delta + \E\Bigl[\int_{t_k}^t \rme^{\alpha(s-t_k)}\beta \,\rmd B_s\Bigr]\biggr) \\
     &= \rme^{-\alpha(t-t_k)}\hat{\mu}^E_{t_k} + \bigl(I_d - \rme^{-\alpha(t-t_k)}\bigr)\delta,
    \end{align*}
    where we have used that the stochastic integral is independent of $\mathcal{F}^E_{t_k}$ and that it has expectation zero.
    For the conditional covariance matrix we get
    \begin{align*}
     \gamE &= \E\bigl[(\mu_t-\hatmuE)(\mu_t-\hatmuE)^T\big|\mathcal{F}^E_t\bigr] \\
     &= \E\Biggl[\biggl(\delta + \rme^{-\alpha(t-t_k)}\Bigl(\mu_{t_k}-\delta+\int_{t_k}^t \rme^{\alpha(s-t_k)}\beta \,\rmd B_s\Bigr)-\hatmuE\biggr) \\
     & \qquad\;\; \cdot\biggl(\delta + \rme^{-\alpha(t-t_k)}\Bigl(\mu_{t_k}-\delta+\int_{t_k}^t \rme^{\alpha(s-t_k)}\beta \,\rmd B_s\Bigr)-\hatmuE\biggr)^T \;\Bigg|\; \mathcal{F}^E_t\Biggr].
    \end{align*}
    When inserting the formula for $\hatmuE$ that was just proven, some terms cancel. The remaining conditional expectation can then be written as
    \begin{align*}
     & \E\biggl[\Bigl(\rme^{-\alpha(t-t_k)}(\mu_{t_k}-\hat{\mu}^E_{t_k})+\rme^{-\alpha(t-t_k)}\int_{t_k}^t \rme^{\alpha(s-t_k)}\beta \,\rmd B_s\Bigr) \\
     & \;\;\; \cdot\Bigl(\rme^{-\alpha(t-t_k)}(\mu_{t_k}-\hat{\mu}^E_{t_k})+\rme^{-\alpha(t-t_k)}\int_{t_k}^t \rme^{\alpha(s-t_k)}\beta \,\rmd B_s\Bigr)^T \;\bigg|\; \mathcal{F}^E_t\biggr].
    \end{align*}
    The expansion of this product is
    \begin{align*}
     &\E\bigl[\rme^{-\alpha(t-t_k)}(\mu_{t_k}-\hat{\mu}^E_{t_k})(\mu_{t_k}-\hat{\mu}^E_{t_k})^T \rme^{-\alpha^T(t-t_k)} \;\big|\; \mathcal{F}^E_t\bigr] \\
     & \qquad\;\; + \E\biggl[\rme^{-\alpha(t-t_k)}(\mu_{t_k}-\hat{\mu}^E_{t_k})\Bigl(\int_{t_k}^t \rme^{\alpha(s-t_k)}\beta \,\rmd B_s\Bigr)^T \rme^{-\alpha^T(t-t_k)} \;\bigg|\; \mathcal{F}^E_t\biggr] \\
     & \qquad\;\; + \E\bigl[\rme^{-\alpha(t-t_k)}\Bigl(\int_{t_k}^t \rme^{\alpha(s-t_k)}\beta \,\rmd B_s\Bigr)(\mu_{t_k}-\hat{\mu}^E_{t_k})^T \rme^{-\alpha^T(t-t_k)} \;\big|\; \mathcal{F}^E_t\bigr] \\
     & \qquad\;\; + \E\biggl[\rme^{-\alpha(t-t_k)}\Bigl(\int_{t_k}^t \rme^{\alpha(s-t_k)}\beta \,\rmd B_s\Bigr) \Bigl(\int_{t_k}^t \rme^{\alpha(s-t_k)}\beta \,\rmd B_s\Bigr)^T \rme^{-\alpha^T(t-t_k)}\;\bigg|\; \mathcal{F}^E_t\biggr] \\
     &= \rme^{-\alpha(t-t_k)}\Biggl(\gamma^E_{t_k} +  \E\biggl[\Bigl(\int_{t_k}^t \rme^{\alpha(s-t_k)}\beta \,\rmd B_s\Bigr)\Bigl(\int_{t_k}^t \rme^{\alpha(s-t_k)}\beta \,\rmd B_s\Bigr)^T\biggr]\Biggr)\rme^{-\alpha^T(t-t_k)}.
    \end{align*}
    In the last step, the mixed terms cancel because of independence. For the remaining expectation we can show that
    \[ \E\biggl[\Bigl(\int_{t_k}^t \rme^{\alpha(s-t_k)}\beta \,\rmd B_s\Bigr)\Bigl(\int_{t_k}^t \rme^{\alpha(s-t_k)}\beta \,\rmd B_s\Bigr)^T\biggr] = \int_{t_k}^t \rme^{\alpha(s-t_k)}\beta\beta^T\rme^{\alpha^T(s-t_k)}\,\rmd s \]
    and the claim follows.
    
    (ii) For the update formulas at information dates we interpret the situation as a degenerate discrete time Kalman filter with time points $t_k-$ and $t_k$. From formulas (5.12) and (5.13) in Elliott, Aggoun and Moore~\cite{elliott_aggoun_moore_1994} we get
    \begin{align*}
     \hat{\mu}^E_{t_k} &= \hat{\mu}^E_{t_k-} + \gamma^E_{t_k-}(\gamma^E_{t_k-}+\Gamma_k)^{-1}(Z_k-\hat{\mu}^E_{t_k-}) \\
     &= \bigl(I_d-\gamma^E_{t_k-}(\gamma^E_{t_k-}+\Gamma_k)^{-1}\bigr)\hat{\mu}^E_{t_k-} + \gamma^E_{t_k-}(\gamma^E_{t_k-}+\Gamma_k)^{-1}Z_k \\
     &= \Lambda^E_k\hat{\mu}^E_{t_k-} + (I_d-\Lambda^E_k)Z_k
    \end{align*}
    for the conditional expectation and
    \begin{align*}
     \gamma^E_{t_k} &= \E[(\mu_{t_k}-\hat{\mu}^E_{t_k})(\mu_{t_k}-\hat{\mu}^E_{t_k})^T | \mathcal{F}^E_{t_k}] \\
     &= \gamma^E_{t_k-}-\gamma^E_{t_k-}(\gamma^E_{t_k-}+\Gamma_k)^{-1}\gamma^E_{t_k-} \\
     &= \bigl(I_d-\gamma^E_{t_k-}(\gamma^E_{t_k-}+\Gamma_k)^{-1}\bigr)\gamma^E_{t_k-} \\
     &= \Lambda^E_k\gamma^E_{t_k-}
    \end{align*}
    for the conditional covariance matrix. These are the update formulas for the filter and the conditional covariance matrices at information dates. Alternatively, we can also compute the estimator $\hat{\mu}^E_{t_k}$ and its conditional covariance matrix as a Bayesian update of $\hat{\mu}^E_{t_k-}$ given the $\mathcal{N}(\mu_{t_k},\Gamma_k)$-distributed expert opinion $Z_k$, see for example Theorem~II.8.2 in Shiryaev~\cite{shiryaev_1996}.
   \end{proof}
   
   From the second part of the previous lemma one sees that at the information dates $t_k$ the filter $\hat{\mu}^E_{t_k}$ is a weighted mean of the filter $\hat{\mu}^E_{t_k-}$ before the update and the expert opinion $Z_k$. The corresponding weights depend on the matrix $\Gamma_k$ which is the covariance matrix of the expert opinion.
   
   \begin{proposition}\label{prop:update_decreases_covariance}
    For fixed $k\in\{0,\dots,N-1\}$ it holds $\gamma^E_{t_k} \leqslant \Gamma_k$ and $\gamma^E_{t_k} \leqslant \gamma^E_{t_k-}$.
   \end{proposition}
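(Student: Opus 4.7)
The plan is to work directly from the update formula derived in the proof of Lemma~\ref{lem:hatmuE_dynamics}(ii), namely
\[ \gamma^E_{t_k} = \gamma^E_{t_k-} - \gamma^E_{t_k-}\bigl(\gamma^E_{t_k-}+\Gamma_k\bigr)^{-1}\gamma^E_{t_k-}, \]
and to show that each of the two differences $\gamma^E_{t_k-}-\gamma^E_{t_k}$ and $\Gamma_k - \gamma^E_{t_k}$ can be written as a symmetric ``sandwich'' $M(\gamma^E_{t_k-}+\Gamma_k)^{-1}M$ with $M$ symmetric, hence is automatically positive semidefinite.

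For readability I would abbreviate $A:=\gamma^E_{t_k-}$ and $C:=\Gamma_k$. Note $A$ is symmetric positive semidefinite, $C$ is symmetric positive definite, so $A+C$ is positive definite and $(A+C)^{-1}$ is well-defined and positive definite. The first inequality $\gamma^E_{t_k} \leqslant A$ is immediate from the displayed formula, since
\[ A - \gamma^E_{t_k} = A(A+C)^{-1}A \]
is of the form $A^T(A+C)^{-1}A$ with $(A+C)^{-1}$ positive definite, hence positive semidefinite.

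For the second inequality $\gamma^E_{t_k}\leqslant C$ I would first derive the alternative representation $\gamma^E_{t_k} = C - C(A+C)^{-1}C$. This uses the simple algebraic identity
\[ A - A(A+C)^{-1}A = A(A+C)^{-1}\bigl[(A+C)-A\bigr] = A(A+C)^{-1}C, \]
together with the analogous computation starting from $C$, which gives $C - C(A+C)^{-1}C = C(A+C)^{-1}A$. The symmetry of $\gamma^E_{t_k}$ (which we know a priori because it is a conditional covariance matrix) forces $A(A+C)^{-1}C = C(A+C)^{-1}A$, and one can verify this by noting that both equal $C - C(A+C)^{-1}C$ since $(A+C)(A+C)^{-1}C = C$. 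From the resulting identity
\[ C - \gamma^E_{t_k} = C(A+C)^{-1}C \]
and the positive definiteness of $(A+C)^{-1}$, the second inequality follows.

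The only mildly delicate step is the identification $A - A(A+C)^{-1}A = C - C(A+C)^{-1}C$; everything else is a one-line verification. Once this identity is in hand, both claimed inequalities come out in parallel, so I would present them together.
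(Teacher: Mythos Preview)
Your proof is correct and follows essentially the same approach as the paper: both arguments rewrite $\gamma^E_{t_k}$ as $A - A(A+C)^{-1}A$ and as $C - C(A+C)^{-1}C$ (with $A=\gamma^E_{t_k-}$, $C=\Gamma_k$), and then observe that the subtracted ``sandwich'' terms are positive semidefinite because $(A+C)^{-1}$ is positive definite. The only cosmetic difference is that the paper starts from the product form $\Gamma_k(\gamma^E_{t_k-}+\Gamma_k)^{-1}\gamma^E_{t_k-}$ and obtains each representation by adding and subtracting $\Gamma_k$ respectively $\gamma^E_{t_k-}$, whereas you start from the subtractive form and pass through the intermediate expression $A(A+C)^{-1}C=C(A+C)^{-1}A$; the underlying algebra is identical.
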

   \begin{proof}
    Using the update formula for $\gamma^E_{t_k}$ and expanding one term by $\Gamma_k$ we get the representation
    \begin{align*}
     \gamma^E_{t_k} &= \Gamma_k(\gamma^E_{t_k-}+\Gamma_k)^{-1}\gamma^E_{t_k-} \\
     &= \Gamma_k(\gamma^E_{t_k-}+\Gamma_k)^{-1}(\gamma^E_{t_k-}+\Gamma_k -\Gamma_k) \\
     &= \Gamma_k - \Gamma_k(\gamma^E_{t_k-}+\Gamma_k)^{-1}\Gamma_k.
    \end{align*}    
    Since $(\gamma^E_{t_k-}+\Gamma_k)^{-1}$ is symmetric positive definite there exists some matrix $A_k$ such that $(\gamma^E_{t_k-}+\Gamma_k)^{-1}=A_kA_k^T$. Then
    \[ \Gamma_k(\gamma^E_{t_k-}+\Gamma_k)^{-1}\Gamma_k = \Gamma_kA_kA_k^T\Gamma_k = \Gamma_kA_k(\Gamma_kA_k)^T \]
    by symmetry of $\Gamma_k$. Hence, $\Gamma_k(\gamma^E_{t_k-}+\Gamma_k)^{-1}\Gamma_k$ is symmetric positive semidefinite which yields $\gamma^E_{t_k} \leqslant \Gamma_k$.
    Likewise, when adding and subtracting $\gamma^E_{t_k-}$ instead,
    \begin{align*}
     \gamma^E_{t_k} &= \Gamma_k(\gamma^E_{t_k-}+\Gamma_k)^{-1}\gamma^E_{t_k-} \\
     &= (\Gamma_k+\gamma^E_{t_k-}-\gamma^E_{t_k-})(\gamma^E_{t_k-}+\Gamma_k)^{-1}\gamma^E_{t_k-} \\
     &= \gamma^E_{t_k-} - \gamma^E_{t_k-}(\gamma^E_{t_k-}+\Gamma_k)^{-1}\gamma^E_{t_k-}.
    \end{align*}
    As above, we can also show that $\gamma^E_{t_k-}(\gamma^E_{t_k-}+\Gamma_k)^{-1}\gamma^E_{t_k-}$ is positive semidefinite, hence $\gamma^E_{t_k} \leqslant \gamma^E_{t_k-}$.
   \end{proof}
   
   So far, we have considered $\mathbb{F}^R$ and $\mathbb{F}^E$ as investor filtrations. A rational investor in a market will however use all available information. So the case that we are most interested in is the investor filtration $\mathbb{F}^C$ which includes return observations as well as expert opinions. The formulas for the filter $\hatmuC$ and the conditional covariance matrices $\gamC$ can be deduced similarly to the cases of only return observations or only expert opinions.
   
   \begin{lemma}\label{lem:hatmuC_dynamics}\mbox{}
    \begin{enumerate}
     \item[(i)] Let $t\in[0,T]$ and denote by $k$ the maximal index $j$ such that $t_j\leqslant t$ under the convention that $t_N=T$. Then for $t\in[t_k,t_{k+1})$ it holds
     \[ \rmd \hatmuC = \alpha (\delta-\hatmuC)\,\rmd t +\gamC (\sigma\sigma^T)^{-1}(\rmd R_t-\hatmuC \,\rmd t) \]
     where $\gamC$ follows the ordinary differential equation
     \[ \frac{\rmd}{\rmd t} \gamC = -\alpha\gamC -\gamC\alpha^T +\beta\beta^T - \gamC(\sigma\sigma^T)^{-1}(\gamC)^T. \]
     The initial values are $\hat{\mu}^C_{t_k}$ and $\gamma^C_{t_k}$, respectively.
     \item[(ii)] The update formulas at information dates $t_k$ are
     \begin{align*}
      \hat{\mu}^C_{t_k} &= \Lambda^C_k\hat{\mu}^C_{t_k-}+(I_d-\Lambda^C_k)Z_k, \\
      \gamma^C_{t_k} &= \Lambda^C_k\gamma^C_{t_k-},
     \end{align*}
     where $\Lambda^C_k=\Gamma_k(\gamma^C_{t_k-}+\Gamma_k)^{-1}$. Here, we set $\hat{\mu}^C_{0-}=m_0$ and $\gamma^C_{0-}=\Sigma_0$.
    \end{enumerate}
   \end{lemma}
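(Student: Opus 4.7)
The plan is to mimic the arguments of Lemmas \ref{lem:hatmuR_dynamics} and \ref{lem:hatmuE_dynamics}, splitting the analysis into inter-update intervals (where only return observations enter) and update instants (where only the expert opinion enters). The underlying technical point, which I would establish by induction over the update times $t_0,\dots,t_{N-1}$, is that the conditional distribution of $\mu_t$ given $\mathcal{F}^C_t$ is Gaussian with mean $\hat{\mu}^C_t$ and covariance $\gamma^C_t$; this Gaussianity is what makes both the Kalman filter and the Bayesian update formulas applicable and is propagated through each step of the recursion.

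For part (i), I would fix $k$ and work on $[t_k,t_{k+1})$. Since no expert opinion arrives strictly between $t_k$ and $t_{k+1}$, the filtration is generated by $\mathcal{F}^C_{t_k}$ together with the return increments $\{R_s-R_{t_k}:s\in[t_k,t]\}$. Using the induction hypothesis that the conditional law of $\mu_{t_k}$ given $\mathcal{F}^C_{t_k}$ is $\mathcal{N}(\hat{\mu}^C_{t_k},\gamma^C_{t_k})$, we can invoke the continuous-time Kalman filter (Theorem~10.3 of \cite{liptser_shiryaev_1994}) on $[t_k,t_{k+1})$ applied to the signal $\mu$ and the observation $R$, with initial mean $\hat{\mu}^C_{t_k}$ and initial covariance $\gamma^C_{t_k}$. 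This immediately yields the stated SDE for $\hatmuC$ and the Riccati ODE for $\gamC$, exactly as in Lemma~\ref{lem:hatmuR_dynamics}, the only change being the initial condition.

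For part (ii), at the isolated instant $t_k$ the only new information is $Z_k=\mu_{t_k}+\Gamma_k^{1/2}\varepsilon_k$, which is conditionally independent of $\mathcal{F}^C_{t_k-}$ given $\mu_{t_k}$. By the induction hypothesis applied just before the update, the prior for $\mu_{t_k}$ given $\mathcal{F}^C_{t_k-}$ is $\mathcal{N}(\hat{\mu}^C_{t_k-},\gamma^C_{t_k-})$, and the observation kernel $Z_k\mid\mu_{t_k}\sim\mathcal{N}(\mu_{t_k},\Gamma_k)$ is Gaussian. A standard Gaussian Bayesian update (formulas (5.12)--(5.13) of \cite{elliott_aggoun_moore_1994}, or Theorem~II.8.2 of \cite{shiryaev_1996}) then gives the posterior mean and covariance in exactly the form stated, and the algebraic rewriting with $\Lambda^C_k=\Gamma_k(\gamma^C_{t_k-}+\Gamma_k)^{-1}$ is identical to that performed in the proof of Lemma~\ref{lem:hatmuE_dynamics}(ii). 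This also verifies the induction step, so the Gaussian conditional law is carried forward to the next interval $[t_k,t_{k+1})$.

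The only subtle point, and the one I would expect to be the main obstacle, is the rigorous justification that observing the return process on $[t_k,t_{k+1})$ does not spoil the induction: one must argue that the innovations approach works with the Gaussian posterior at $t_k$ as a new initial condition, so that the concatenated filter remains Gaussian. This is handled by the standard conditional form of the Kalman--Bucy theorem — conditionally on $\mathcal{F}^C_{t_k}$, the pair $(\mu,R)$ restricted to $[t_k,t_{k+1})$ is a linear Gaussian system with Gaussian initial condition, to which Theorem~10.3 of \cite{liptser_shiryaev_1994} applies verbatim. Apart from this, everything reduces to computations already carried out in the proofs of Lemmas \ref{lem:hatmuR_dynamics} and \ref{lem:hatmuE_dynamics}.
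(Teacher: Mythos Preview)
Your proposal is correct and follows essentially the same approach as the paper: apply the continuous-time Kalman filter between information dates (reducing to Lemma~\ref{lem:hatmuR_dynamics}) and the Gaussian Bayesian/discrete Kalman update at information dates (reducing to Lemma~\ref{lem:hatmuE_dynamics}(ii)). The paper's proof is in fact even more terse than yours---it simply points to those two lemmas without spelling out the inductive Gaussianity argument---so your explicit induction over the $t_k$ and the remark about the conditional Kalman--Bucy theorem with Gaussian initial law are a welcome elaboration rather than a departure.
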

   \begin{proof}
    (i) Between two information dates, no additional expert opinions arrive. Hence, only return observations contribute to the filtration, meaning that $\mathcal{F}^C_t=\mathcal{F}^C_{t_k}\vee\sigma(R_s \;|\; t_k<s\leqslant t)$. Therefore, in $[t_k,t_{k+1})$, $k=0, \dots, N-2$, and in $[t_{N-1},T]$ we are in the standard situation of the Kalman filter. The dynamics follow as in Lemma~\ref{lem:hatmuR_dynamics}.
    
    (ii) At the information dates $t_k$ we use, as in the proof of Lemma~\ref{lem:hatmuE_dynamics}, the degenerate discrete time Kalman filter or a Bayesian update formula.
   \end{proof}
   
   The result from Proposition~\ref{prop:update_decreases_covariance} can also be stated in an analogue way for the investor who observes stock returns as well as expert opinions.
   
   \begin{proposition}\label{prop:update_decreases_covariance_C}
    For fixed $k\in\{0,\dots,N-1\}$ it holds $\gamma^C_{t_k}\leqslant\Gamma_k$ and $\gamma^C_{t_k}\leqslant\gamma^C_{t_k-}$.
   \end{proposition}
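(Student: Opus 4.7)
The plan is to copy the argument of Proposition~\ref{prop:update_decreases_covariance} verbatim, since the update formulas stated in Lemma~\ref{lem:hatmuC_dynamics}(ii) for $\gamma^C_{t_k}$ have exactly the same algebraic structure as those for $\gamma^E_{t_k}$ in Lemma~\ref{lem:hatmuE_dynamics}(ii): in both cases the posterior covariance equals $\Gamma_k(\gamma_{t_k-}+\Gamma_k)^{-1}\gamma_{t_k-}$, with $\gamma_{t_k-}$ a symmetric positive semidefinite matrix and $\Gamma_k$ symmetric positive definite.

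First, I would rewrite $\gamma^C_{t_k} = \Gamma_k(\gamma^C_{t_k-}+\Gamma_k)^{-1}\gamma^C_{t_k-}$ in two different ways by adding and subtracting either $\Gamma_k$ or $\gamma^C_{t_k-}$ inside the rightmost factor, obtaining
\[ \gamma^C_{t_k} = \Gamma_k - \Gamma_k(\gamma^C_{t_k-}+\Gamma_k)^{-1}\Gamma_k \]
and
\[ \gamma^C_{t_k} = \gamma^C_{t_k-} - \gamma^C_{t_k-}(\gamma^C_{t_k-}+\Gamma_k)^{-1}\gamma^C_{t_k-}. \]
Since $\gamma^C_{t_k-}$ is symmetric positive semidefinite and $\Gamma_k$ is symmetric positive definite, their sum is symmetric positive definite, so its inverse is symmetric positive definite as well and admits a factorization $(\gamma^C_{t_k-}+\Gamma_k)^{-1}=A_kA_k^T$.

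Then, exactly as in the proof of Proposition~\ref{prop:update_decreases_covariance}, we can write $\Gamma_k(\gamma^C_{t_k-}+\Gamma_k)^{-1}\Gamma_k = (\Gamma_kA_k)(\Gamma_kA_k)^T$ using symmetry of $\Gamma_k$, and analogously $\gamma^C_{t_k-}(\gamma^C_{t_k-}+\Gamma_k)^{-1}\gamma^C_{t_k-} = (\gamma^C_{t_k-}A_k)(\gamma^C_{t_k-}A_k)^T$. Both matrices are therefore symmetric positive semidefinite, which yields $\gamma^C_{t_k}\leqslant\Gamma_k$ and $\gamma^C_{t_k}\leqslant\gamma^C_{t_k-}$, respectively. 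There is no real obstacle here; the only thing to note is that Lemma~\ref{lem:hatmuC_dynamics}(ii) is the ingredient that makes the argument identical to the expert-only case, so the proof is essentially a remark.
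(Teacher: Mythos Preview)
Your proposal is correct and is exactly the approach the paper takes: the paper's proof simply states that one uses the update formulas from Lemma~\ref{lem:hatmuC_dynamics} and proceeds analogously to the proof of Proposition~\ref{prop:update_decreases_covariance}. Your write-up spells out those analogous steps in full, but the underlying argument is identical.
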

   \begin{proof}
    The proof uses the update formulas from Lemma~\ref{lem:hatmuC_dynamics} and works analogously to the proof of Proposition~\ref{prop:update_decreases_covariance}.
   \end{proof}
   
   For the sake of completeness we consider as a last case the situation of full information, i.e.\ where the investor filtration is $\mathbb{F}^F=\mathbb{G}$. This case corresponds to an investor who is able to observe the drift process directly. This situation will not occur in practice. We consider it as a reference case however to compare it to the other settings of information. It is clear that in this situation $\hatmuF=\E[\mu_t|\mathcal{F}^F_t]=\mu_t$ and $\gamF=\E[(\mu_t-\hatmuF)(\mu_t-\hatmuF)^T|\mathcal{F}^F_t]=\mathbf{0}_d$ for all $t\in[0,T]$. Here, $\mathbf{0}_d$ denotes the zero matrix in $\R^{d\times d}$.
   
 \section{Properties of the Conditional Covariance Matrix}\label{sec:properties_of_the_conditional_covariance_matrix}
  
  We have seen that for any of the cases $H\in\{R,E,C,F\}$ the conditional covariance matrix of the filter, $\gamma^H_t$, is deterministic. Since it gives information about the quality of the filter as an estimator for the drift, we are interested in stating some properties of $\gamma^H_t$.
  
  \begin{assumption}
   In Sections~\ref{sec:properties_of_the_conditional_covariance_matrix} and \ref{sec:asymptotic_results_for_an_infinite_time_horizon}, we assume that $\alpha$ is a symmetric positive definite matrix and that $\beta\beta^T$ is also positive definite.
  \end{assumption}
  
  For $H\in\{R,E,C,F\}$ one can easily prove that
  \begin{equation}\label{eq:second_moments_filter}
   \E[\hat{\mu}^H_t(\hat{\mu}^H_t)^T]=\E[\mu_t\mu_t^T]-\gamma^H_t = \Sigma_t+m_tm_t^T-\gamma^H_t
  \end{equation}
  for $t\in[0,T]$. This equality will be useful for connecting the filter with its covariance matrix.
  
  \subsection{Comparison of Different Investors}
   
   First, we compare the covariance matrix of an investor who observes both returns and expert opinions with that of an investor who has access to only one of these sources of information. It can be expected that the additional information yields a more precise estimate of the drift $\mu_t$.
   
   \begin{proposition}\label{prop:comparison_of_gammas}
    For all $t\in[0,T]$ we have the inequalities $\gamC\leqslant\gamR$ and $\gamC\leqslant\gamE$.
   \end{proposition}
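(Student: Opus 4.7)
The plan is to exploit the nesting of filtrations $\mathcal{F}^R_t\subseteq\mathcal{F}^C_t$ and $\mathcal{F}^E_t\subseteq\mathcal{F}^C_t$, together with the fact that the conditional covariance matrices are deterministic. Indeed, Lemmas~\ref{lem:hatmuR_dynamics}, \ref{lem:hatmuE_dynamics} and \ref{lem:hatmuC_dynamics} show that $\gamR$, $\gamE$ and $\gamC$ satisfy deterministic recursions (a Riccati ODE between expert times and a Bayesian update depending only on the already-deterministic pre-update covariance). Hence, for each $H\in\{R,E,C\}$, the conditional covariance coincides with the unconditional one, so that taking unconditional expectation yields
\[
 \gamma^H_t = \E\bigl[(\mu_t-\hat{\mu}^H_t)(\mu_t-\hat{\mu}^H_t)^T\bigr].
\]

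For the inequality $\gamC\leqslant\gamR$, I would apply the tower property to obtain $\hatmuR=\E[\hatmuC\,|\,\mathcal{F}^R_t]$ and decompose
\[
 \mu_t-\hatmuR = (\mu_t-\hatmuC) + (\hatmuC-\hatmuR).
\]
The first summand is orthogonal in $L^2$ to every $\mathcal{F}^C_t$-measurable square-integrable random variable (the defining projection property of conditional expectation), while the second summand is $\mathcal{F}^C_t$-measurable. Forming the outer product and taking unconditional expectation, the two cross terms vanish, leaving
\[
 \gamR = \gamC + \E\bigl[(\hatmuC-\hatmuR)(\hatmuC-\hatmuR)^T\bigr].
\]
The second summand on the right is a symmetric positive semidefinite matrix (a second-moment matrix of an $\R^d$-valued random vector), which gives $\gamC\leqslant\gamR$. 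The inequality $\gamC\leqslant\gamE$ follows by exactly the same argument with $R$ replaced by $E$, using $\mathcal{F}^E_t\subseteq\mathcal{F}^C_t$.

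The main obstacle, while modest, is the reduction of the conditional covariance to an unconditional one; this is what allows the orthogonality decomposition to deliver a deterministic matrix inequality rather than merely an averaged one. An alternative route would be a comparison theorem for the Riccati ODE governing $\gamR$ and $\gamC$ between information dates, combined with the update formulas from Lemma~\ref{lem:hatmuC_dynamics} to control the jumps at times $t_k$, but this would be considerably more technical and does not immediately address the $\gamE$ comparison, so the projection argument above is preferable.
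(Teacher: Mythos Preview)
Your proposal is correct and follows essentially the same approach as the paper: both arguments use the nesting $\mathcal{F}^R_t\subseteq\mathcal{F}^C_t$ (respectively $\mathcal{F}^E_t\subseteq\mathcal{F}^C_t$), the $L^2$-projection property of conditional expectation, and the fact that the conditional covariances are deterministic so that the inequality in expectation becomes a deterministic matrix inequality. The only cosmetic difference is that the paper works scalar-wise with $x^T\gamma^H_t x$ and invokes the best-mean-square-estimate characterisation directly, whereas you write out the full Pythagorean identity $\gamR=\gamC+\E\bigl[(\hatmuC-\hatmuR)(\hatmuC-\hatmuR)^T\bigr]$; your version has the minor bonus of exhibiting the difference $\gamR-\gamC$ explicitly.
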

   \begin{proof}
    Fix some $x\in\R^d$. We use the fact that for any random variable $X$ and $\sigma$-algebra $\mathcal{H}$ the conditional expectation $\E[X|\mathcal{H}]$ is the best mean-square estimate for $X$, meaning that
    \[ \E\bigl[(X-\E[X|\mathcal{H}])^2\bigr]\leqslant\E\bigl[(X-Y)^2\bigr] \]
    for all $\mathcal{H}$-measurable random variables $Y$. Now,
    \begin{align*}
     x^T\gamC x &= x^T\E\bigl[(\mu_t-\hatmuC)(\mu_t-\hatmuC)^T\;\big|\;\mathcal{F}^C_t\bigr]x \\
     &= \E\bigl[x^T(\mu_t-\hatmuC)(\mu_t-\hatmuC)^Tx\;\big|\;\mathcal{F}^C_t\bigr] \\
     &= \E\bigl[\bigl(x^T(\mu_t-\hatmuC)\bigr)^2\;\big|\;\mathcal{F}^C_t\bigr] \\
     &= \E\bigl[\bigl(x^T\mu_t-\E[x^T\mu_t|\mathcal{F}^C_t]\bigr)^2\;\big|\;\mathcal{F}^C_t\bigr].
    \end{align*}
    Since $\mathcal{F}^R_t \subseteq \mathcal{F}^C_t$ for all $t\geqslant 0$, it follows
    \[ \E[x^T\gamC x] = \E\bigl[\bigl(x^T\mu_t-\E[x^T\mu_t|\mathcal{F}^C_t]\bigr)^2\bigr] \leqslant \E\bigl[\bigl(x^T\mu_t-\E[x^T\mu_t|\mathcal{F}^R_t]\bigr)^2\bigr] = \E[x^T\gamR x]. \]
    We already know that $\gamC$ and $\gamR$ are deterministic, hence $x^T\gamC x\leqslant x^T\gamR x$. The proof of the second inequality $x^T\gamC x\leqslant x^T\gamE x$ goes completely analogously.
   \end{proof}
   
   Now that we have derived the filtering equations for the different investors in the market and stated some first properties of the conditional covariance matrices, we take a short look at the dynamics of $\gamma^H_t$ for $H\in\{R,E,C\}$ in an example.
   
   \begin{example}\label{ex:three_gammas}
    We assume that we have an investment horizon $T$ of one year and equidistant expert opinions each month which corresponds to setting $N=12$. We consider a financial market with $d=3$ stocks and $m=d$. The model parameters for the drift dynamics are
    \[ \alpha=\begin{pmatrix*}[r] 0.11 & -0.48 & 0.65 \\ -0.48 & 2.28 & -3.06 \\ 0.65 & -3.06 & 4.18 \end{pmatrix*} \quad\text{ and }\quad \beta=\begin{pmatrix*}[r] 0.87 & -0.53 & -0.22 \\ -0.53 & 0.87 & -0.02 \\ -0.22 & -0.02 & 0.29 \end{pmatrix*}, \]
    and the matrices
    \[ \sigma=\begin{pmatrix*}[r] \phantom{-}0.09 & -0.13 & 0.16 \\ 0.14 & 0.03 & -0.17 \\ 0.05 & -0.13 & -0.06 \end{pmatrix*} \quad\text{ and }\quad \Sigma_0=\begin{pmatrix*}[r] \phantom{-}0.16 & 0.12 & 0.01 \\ 0.12 & 0.19 & -0.04 \\ 0.01 & -0.04 & 0.27 \end{pmatrix*} \]
    are the volatility matrix of the returns and the covariance matrix of $\mu_0$, respectively. The expert's reliability is given by the covariance matrices
    \[ \Gamma_k=\begin{pmatrix*}[r] \phantom{-}1.14 & 0.15 & 0.58 \\ 0.15 & 1.67 & -0.73 \\ 0.58 & -0.73 & 2.67 \end{pmatrix*} \]
    for each $k=0, \dots, N-1$, in particular the covariance matrix of the expert's estimates does not depend on the current time point.
    
    In Figure~\ref{fig:three_gammas} the spectral norms of $\gamma^R_t$, $\gamma^E_t$ and $\gamma^C_t$ are plotted against time for the parameters defined above. For the investor who observes stock returns only, one can see that the spectral norm of $\gamR$ starts in $\lVert\Sigma_0\rVert$ and seems to converge to some value for increasing $t$. Note that the mapping $t\mapsto\lVert\gamR\rVert$ is not monotone, other than in the one-dimensional situation.
    When looking at the investor who observes expert opinions only, one realizes that at each information date, the norm of $\gamE$ decreases. This is due to what we have shown in Proposition~\ref{prop:update_decreases_covariance}. For large $k$ we see that the norm of $\gamE$ increases between information dates $t_k$ and $t_{k+1}$. The norms of $\gamma^E_{t_k}$ and of $\gamma^E_{t_k-}$ approximate some finite value.
    For the investor who observes stock returns as well as expert opinions we see that the norm of $\gamC$ always lies below the minimum of the norm of $\gamR$ and the norm of $\gamE$. As in the case for expert opinions only, the norm decreases at each information date, see Proposition~\ref{prop:update_decreases_covariance_C}. Also, the norms of $\gamma^C_{t_k}$ and of $\gamma^C_{t_k-}$ seem to converge, and for all $k$ large enough the norm is strictly increasing in between information dates $t_k$ and $t_{k+1}$.
   \end{example}
   
   \begin{figure}[ht]
    \centering
    \setlength\figureheight{6cm}
    \setlength\figurewidth{0.8\textwidth}
    \input{three_gammas_new.tikz}
    \caption{Development of $\lVert\gamma^H_t\rVert$ for $t\in[0,T]$, $H\in\{R,E,C\}$, in Example~\ref{ex:three_gammas}}
    \label{fig:three_gammas}
   \end{figure}
   
  \subsection{Asymptotics for an Increasing Number of Expert Opinions}
   
   We now address the question what happens when the number of dates at which expert opinions arrive goes to infinity. It stands to reason that when increasing the number of expert opinions such that the time between any two information dates goes to zero, we get an arbitrarily accurate estimate of the drift process $\mu$, at least when we assume a minimal level of reliability of the experts. The corresponding statement in a financial market with one stock is proven in Proposition~4.3 from Gabih et al.~\cite{gabih_kondakji_sass_wunderlich_2014}. The result in a market with $d$ stocks is formalized in the following theorem.
   
   \begin{theorem}\label{thm:asymptotics_for_N_to_infinity}
    Let $0=t_0^{(N)}<t_1^{(N)}<\cdots<t_{N-1}^{(N)}<T$ be a sequence of partitions of the interval $[0,T]$. To shorten notation, we will write $t_N^{(N)}=T$ for all $N$. Assume that for the mesh size
    \[ \Delta_N=\max_{k=1, \dots, N} \bigl(t_k^{(N)}-t_{k-1}^{(N)}\bigr) \]
    we have $\lim_{N\to \infty} \Delta_N=0$. Denote by $\Gamma_k^{(N)}$, $k=0, \dots, N-1$, the covariance matrices of the expert opinions at time $t_k^{(N)}$, and assume that there exists some $C>0$ such that for all $N\in\N$, $k=0, \dots, N-1$, it holds $\lVert\Gamma_k^{(N)}\rVert \leqslant C$ and that $\Gamma_0^{(N)}=\Gamma_0$ does not depend on $N$.
    
    Then for all $u\in(0,T]$ the conditional covariance matrices $\gamma_u^{E,N}$ and $\gamma_u^{C,N}$ that correspond to these $N$ expert opinions fulfill
    \[ \lim_{N\to\infty} \bigl\lVert \gamma_u^{E,N} \bigr\rVert = \lim_{N\to\infty} \bigl\lVert \gamma_u^{C,N} \bigr\rVert = 0. \]
   \end{theorem}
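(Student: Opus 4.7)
The plan is to sidestep any direct analysis of the multivariate Bayesian recursion from Lemma~\ref{lem:hatmuE_dynamics}, whose fixed points are hard to track in dimension $d>1$, and instead exploit the fact that $\hat{\mu}^{E,N}_u$ is the $L^2$-orthogonal projection of $\mu_u$ onto the space of $\mathcal{F}^{E,N}_u$-measurable, square-integrable random variables. Since $\gamma^{E,N}_u$ is deterministic and positive semidefinite, $\lVert\gamma^{E,N}_u\rVert\leqslant\tr(\gamma^{E,N}_u)=\E[\|\mu_u-\hat{\mu}^{E,N}_u\|^2]\leqslant\E[\|\mu_u-Y\|^2]$ for \emph{any} $\mathcal{F}^{E,N}_u$-measurable square-integrable $Y\in\R^d$. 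So it is enough to construct, for each $N$, one concrete candidate $Y^{(N)}$ whose mean-squared error tends to zero.

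Fix $u\in(0,T]$. Given $\varepsilon>0$, I would first use $L^2$-continuity of the Ornstein-Uhlenbeck process at $u$ to choose $\eta\in(0,u)$ with $\sup_{t\in[u-\eta,u]}\E[\|\mu_u-\mu_t\|^2]<\varepsilon$; this follows from the explicit representation of $\mu_t$ given in Section~\ref{sec:market_model_and_filtering_equations} together with dominated convergence. Because $\Delta_N\to 0$, for all sufficiently large $N$ the window $[u-\eta,u]$ contains $n_N\geqslant\lfloor\eta/\Delta_N\rfloor-1\to\infty$ information dates $t_{k_1}^{(N)}<\dots<t_{k_{n_N}}^{(N)}$. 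The natural candidate is the sample-mean estimator $Y^{(N)}:=\frac{1}{n_N}\sum_{j=1}^{n_N}Z_{k_j}^{(N)}$, which is clearly $\mathcal{F}^{E,N}_u$-measurable.

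Inserting $Z_{k_j}^{(N)}=\mu_{t_{k_j}^{(N)}}+(\Gamma_{k_j}^{(N)})^{1/2}\varepsilon_{k_j}$ and then invoking the bound $\|a+b\|^2\leqslant 2\|a\|^2+2\|b\|^2$, convexity of $x\mapsto\|x\|^2$ for the drift part, and independence of the $\varepsilon_{k_j}$ from each other and from $\mu$ for the noise part, one obtains
\[\E\bigl[\|\mu_u-Y^{(N)}\|^2\bigr]\leqslant 2\sup_{t\in[u-\eta,u]}\E\bigl[\|\mu_u-\mu_t\|^2\bigr]+\frac{2}{n_N^2}\sum_{j=1}^{n_N}\tr\bigl(\Gamma_{k_j}^{(N)}\bigr)\leqslant 2\varepsilon+\frac{2dC}{n_N},\]
using $\tr(\Gamma_k^{(N)})\leqslant d\lVert\Gamma_k^{(N)}\rVert\leqslant dC$ at the last step. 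Letting $N\to\infty$ and then $\varepsilon\downarrow 0$ yields $\lVert\gamma^{E,N}_u\rVert\to 0$. For the combined filter, Proposition~\ref{prop:comparison_of_gammas} applied to the $N$-th configuration gives $\gamma^{C,N}_u\leqslant\gamma^{E,N}_u$, and for symmetric positive semidefinite matrices $0\leqslant A\leqslant B$ implies $\lVert A\rVert\leqslant\lVert B\rVert$, so the same conclusion follows for $\gamma^{C,N}_u$. The main hurdle is purely conceptual: realising that the multivariate Riccati/Bayesian iteration need not be analysed at all once one invokes the best-estimator property of $\hat{\mu}^{E,N}_u$—after that, the rest is a routine variance-of-sample-mean computation.
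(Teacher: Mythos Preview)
Your proof is correct and takes a genuinely different route from the paper's. The paper works directly with the matrix recursion of Lemma~\ref{lem:hatmuE_dynamics}: it bounds $\lVert\gamma^N_t\rVert$ between information dates via submultiplicativity and $\lVert\rme^{-\alpha h}\rVert\leqslant 1$, then uses Weyl's inequality to show that each Bayesian update contracts the norm by a factor $C/(C+\lVert\gamma^N_{t_k-}\rVert)$, iterates the resulting inequality, and finishes with a contradiction argument. You bypass this recursion entirely by invoking the $L^2$-optimality of the filter and competing against the sample mean of nearby expert opinions; after that the problem reduces to the OU continuity bound plus the elementary variance-of-an-average estimate. Your argument is shorter, more conceptual, and in fact does not use the hypothesis that $\Gamma_0^{(N)}=\Gamma_0$ is independent of $N$, which the paper needs to anchor its iteration. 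What the paper's approach buys is explicit quantitative control over the recursion map itself (the contraction factor), which is closer in spirit to the machinery developed later in Section~\ref{sec:asymptotic_results_for_an_infinite_time_horizon}; your approach trades that structural information for brevity and a cleaner proof of the stated result.
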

   \begin{proof}
    Throughout the proof we write $\lambda_{\max}(A)$ and $\lambda_{\min}(A)$ for the maximal and minimal eigenvalue of an arbitrary symmetric matrix $A$. This is well-defined since all eigenvalues of a symmetric matrix are real-valued. Furthermore, since $\lVert A \rVert$ is the square root of the maximal eigenvalue of $A^TA$, we can conclude that for symmetric positive semidefinite matrices $A$ it holds $\lVert A \rVert = \lambda_{\max}(A)$.
    
    First, we note that
    \[ \bigl\lVert \gamma_u^{C,N} \bigr\rVert = \lambda_{\max}(\gamma_u^{C,N}) = \frac{v^T\gamma_u^{C,N}v}{v^Tv} \]
    for an eigenvector $v$ of $\gamma_u^{C,N}$ to the eigenvalue $\lambda_{\max}(\gamma_u^{C,N})$. Now, by Proposition~\ref{prop:comparison_of_gammas}
    \[ \frac{v^T\gamma_u^{C,N}v}{v^Tv} \leqslant \frac{v^T\gamma_u^{E,N}v}{v^Tv} \leqslant \max_{x\in \R^d, x\neq0} \frac{x^T\gamma_u^{E,N}x}{x^Tx} = \lambda_{\max}(\gamma_u^{E,N}) = \bigl\lVert \gamma_u^{E,N} \bigr\rVert. \]
    Hence, it suffices to prove the claim for $\gamma_u^{E,N}$. To shorten notation we write $\gamma_u^N$ for $\gamma_u^{E,N}$ in the following. We also write $t_k$ for time points $t_k^{(N)}$, keeping the dependency on $N$ in mind.
    
    Let $N\in\N$ and $k\in\{0,\dots,N-1\}$. For any $t\in[t_k,t_{k+1})$ we have shown in Lemma~\ref{lem:hatmuE_dynamics} that
    \begin{equation}\label{eq:dynamics_of_gamma_E_N}
     \gamma_t^N = \rme^{-\alpha(t-t_k)}\gamma_{t_k}^N\rme^{-\alpha(t-t_k)} + \int_{t_k}^t \rme^{-\alpha(t-s)}\beta\beta^T \rme^{-\alpha(t-s)} \,\rmd s.
    \end{equation}
    Recall that we assume $\alpha$ to be a symmetric positive definite matrix.
    At the information dates the update is given by
    \[ \gamma_{t_k}^N = \Lambda_k^N\gamma_{t_k-}^N, \quad \text{ where } \quad \Lambda_k^N = \Gamma_k^{(N)}\bigl(\gamma_{t_k-}^N + \Gamma_k^{(N)}\bigr)^{-1}. \]
    The spectral norm of the first summand in \eqref{eq:dynamics_of_gamma_E_N} fulfills due to submultiplicativity
    \begin{equation}\label{eq:gamma_E_N_first_summand}
     \bigl\lVert \rme^{-\alpha(t-t_k)}\gamma_{t_k}^N\rme^{-\alpha(t-t_k)} \bigr\rVert \leqslant \bigl\lVert \rme^{-\alpha(t-t_k)} \bigr\rVert \,\bigl\lVert \gamma_{t_k}^N \bigr\rVert \,\bigl\lVert \rme^{-\alpha(t-t_k)} \bigr\rVert.
    \end{equation}
    Now since $\alpha$ is symmetric positive definite, and for the spectrum of a matrix exponential it holds $\sigma(\rme^\alpha)=\{\rme^\lambda \;|\; \lambda\in\sigma(\alpha)\}$, we can conclude that $\rme^\alpha$ is also symmetric positive definite. Hence,
    \[ \bigl\lVert \rme^{-\alpha(t-t_k)} \bigr\rVert = \frac{1}{\lambda_{\min}(\rme^{\alpha(t-t_k)})} = \frac{1}{\min_{\lambda\in\sigma(\alpha)} \rme^{\lambda(t-t_k)}} = \frac{1}{\rme^{\lambda_{\min}(\alpha)(t-t_k)}} \leqslant 1. \]
    Combining this with \eqref{eq:gamma_E_N_first_summand} yields
    \begin{equation}\label{eq:gamma_E_N_first_summand_2}
     \bigl\lVert \rme^{-\alpha(t-t_k)}\gamma_{t_k}^N\rme^{-\alpha(t-t_k)} \bigr\rVert \leqslant \bigl\lVert \gamma_{t_k}^N \bigr\rVert.
    \end{equation}
    By the same argument, we can conclude for the norm of the second summand in \eqref{eq:dynamics_of_gamma_E_N} that
    \begin{align*}
     \biggl\lVert \int_{t_k}^t \rme^{-\alpha(t-s)}\beta\beta^T \rme^{-\alpha(t-s)} \,\rmd s \biggr\rVert &\leqslant \int_{t_k}^t \bigl\lVert \rme^{-\alpha(t-s)} \bigr\rVert \,\bigl\lVert \beta\beta^T \bigr\rVert\, \bigl\lVert \rme^{-\alpha(t-s)} \bigr\rVert \,\rmd s \\     
     &\leqslant \bigl\lVert \beta\beta^T \bigr\rVert (t-t_k) \leqslant \bigl\lVert \beta\beta^T \bigr\rVert \Delta_N.
    \end{align*}
    This, together with \eqref{eq:gamma_E_N_first_summand_2}, yields for any $t\in(0,T]$ with $t\in[t_k,t_{k+1})$ that
    \begin{equation}\label{eq:gamma_E_N_inequality}
     \bigl\lVert \gamma_t^N \bigr\rVert \leqslant \bigl\lVert \gamma_{t_k}^N \bigr\rVert + \Delta_N \bigl\lVert \beta\beta^T \bigr\rVert = \bigl\lVert \Lambda_k^N\gamma_{t_k-}^N \bigr\rVert + \Delta_N \bigl\lVert \beta\beta^T \bigr\rVert.
    \end{equation}
    Note that since $\beta\beta^T$ is positive definite the matrices $\gamma_{t_k-}^N$ are invertible for all $k\geqslant 1$. By our assumption on the mesh size we can conclude for any $t\in(0,T]$ that $t\geqslant t_1=t^{(N)}_1$ for all $N$ large enough. The first summand in \eqref{eq:gamma_E_N_inequality} can then be written as
    \begin{align*}
     \bigl\lVert \Lambda_k^N\gamma_{t_k-}^N \bigr\rVert &= \Bigl\lVert \Gamma_k^{(N)}\bigl(\gamma_{t_k-}^N + \Gamma_k^{(N)}\bigr)^{-1} \gamma_{t_k-}^N \Bigr\rVert \\
     &= \Bigl\lVert \Gamma_k^{(N)} \Bigl[\gamma_{t_k-}^N \bigl(I_d + (\gamma_{t_k-}^N)^{-1}\Gamma_k^{(N)}\bigr)\Bigr]^{-1} \gamma_{t_k-}^N \Bigr\rVert \\
     &= \Bigl\lVert \Gamma_k^{(N)} \bigl(I_d + (\gamma_{t_k-}^N)^{-1}\Gamma_k^{(N)}\bigr)^{-1} \Bigr\rVert \\
     &= \Bigl\lVert \Gamma_k^{(N)} \Bigl[\bigl((\Gamma_k^{(N)})^{-1}+(\gamma_{t_k-}^N)^{-1}\bigr)\Gamma_k^{(N)}\Bigr]^{-1} \Bigr\rVert \\
     &= \Bigl\lVert \bigl( (\Gamma_k^{(N)})^{-1}+(\gamma_{t_k-}^N)^{-1} \bigr)^{-1} \Bigr\rVert \\
     &= \frac{1}{\lambda_{\min}\bigl((\Gamma_k^{(N)})^{-1}+(\gamma_{t_k-}^N)^{-1}\bigr)}.
    \end{align*}
    Weyl's theorem, see for example Theorem~4.3.1 in Horn and Johnson~\cite{horn_johnson_1990}, states that for any symmetric matrices $A$ and $B$ we have the inequality $\lambda_{\min}(A+B) \geqslant \lambda_{\min}(A) + \lambda_{\min}(B)$.
    This implies that
    \begin{align*}
     \frac{1}{\lambda_{\min}\bigl((\Gamma_k^{(N)})^{-1}+(\gamma_{t_k-}^N)^{-1}\bigr)} &\leqslant \frac{1}{\lambda_{\min}\bigl((\Gamma_k^{(N)})^{-1}\bigr)+\lambda_{\min}\bigl((\gamma_{t_k-}^N)^{-1}\bigr)} \\
     &= \frac{1}{\frac{1}{\lVert\Gamma_k^{(N)}\rVert}+\frac{1}{\lVert\gamma_{t_k-}^N\rVert}} = \frac{\bigl\lVert\Gamma_k^{(N)}\bigr\rVert \bigl\lVert\gamma_{t_k-}^N\bigr\rVert}{\bigl\lVert\Gamma_k^{(N)}\bigr\rVert + \bigl\lVert\gamma_{t_k-}^N\bigr\rVert} \\
     &\leqslant \biggl(\frac{C}{C + \bigl\lVert\gamma_{t_k-}^N\bigr\rVert}\biggr) \bigl\lVert\gamma_{t_k-}^N\bigr\rVert,
    \end{align*}
    where we have used that $\lVert\Gamma_k^{(N)}\rVert\leqslant C$.
    Inserting this into \eqref{eq:gamma_E_N_inequality}, we get
    \begin{equation}\label{eq:iterate_this_inequality}
     \bigl\lVert \gamma_t^N \bigr\rVert \leqslant \biggl(\frac{C}{C + \bigl\lVert\gamma_{t_k-}^N\bigr\rVert}\biggr) \bigl\lVert\gamma_{t_k-}^N\bigr\rVert + \Delta_N \bigl\lVert \beta\beta^T \bigr\rVert.
    \end{equation}
    Next, we iterate \eqref{eq:iterate_this_inequality} to get
    \[ \bigl\lVert \gamma_t^N \bigr\rVert \leqslant \prod_{j=1}^k \biggl(\frac{C}{C + \bigl\lVert\gamma_{t_j-}^N\bigr\rVert}\biggr) \bigl\lVert \gamma^N_0 \bigr\rVert + \Delta_N \bigl\lVert \beta\beta^T \bigr\rVert \sum_{j=0}^k \prod_{l=1}^j \biggl(\frac{C}{C + \bigl\lVert\gamma_{t_{k+1-l}-}^N\bigr\rVert}\biggr). \]
    Setting $L_k^N = \max_{j=1, \dots, k} \Bigl(\frac{C}{C + \lVert\gamma_{t_j-}^N\rVert}\Bigr)$, we conclude
    \begin{equation}\label{eq:gamma_E_N_inequality_2}
     \bigl\lVert \gamma_t^N \bigr\rVert \leqslant (L_k^N)^{k} \bigl\lVert \gamma^N_0 \bigr\rVert + \Delta_N \bigl\lVert \beta\beta^T \bigr\rVert \sum_{j=0}^k (L_k^N)^j.
    \end{equation}
    Now let $u\in(0,T]$ and $\varepsilon>0$. For all $N\in\N$ let $k_N$ denote the index for which $u\in[t_{k_N},t_{k_N+1})$, or, in the case $u=T$, let $k_N=N$. Suppose that for all $N_0\in\N$ there is some $N\geqslant N_0$ such that
    \[ \bigl\lVert \gamma_{t_1-}^N \bigr\rVert, \dots, \bigl\lVert \gamma_{t_{k_N}-}^N \bigr\rVert \geqslant \varepsilon/2. \]
    Then for all $j=1, \dots, k_N$ it holds
    \[ \frac{C}{C + \bigl\lVert\gamma_{t_j-}^N\bigr\rVert} \leqslant \frac{C}{C+\varepsilon/2}, \quad \text{ hence } \quad L_{k_N}^N \leqslant \frac{C}{C+\varepsilon/2}. \]
    Now, equation \eqref{eq:gamma_E_N_inequality_2} implies
    \begin{align*}
     \bigl\lVert \gamma_{t_{k_N}-}^N \bigr\rVert &\leqslant \biggl(\frac{C}{C+\varepsilon/2}\biggr)^{k_N-1} \bigl\lVert \gamma^N_0 \bigr\rVert + \Delta_N \bigl\lVert \beta\beta^T \bigr\rVert \sum_{j=0}^{k_N-1} \biggl(\frac{C}{C+\varepsilon/2}\biggr)^j \\
     &\leqslant \biggl(\frac{C}{C+\varepsilon/2}\biggr)^{k_N-1} \bigl\lVert \gamma^N_0 \bigr\rVert + \Delta_N \bigl\lVert \beta\beta^T \bigr\rVert \frac{2C+\varepsilon}{\varepsilon}.
    \end{align*}
    Since our assumption on the mesh size implies $\lim_{N\to\infty} k_N = \infty$ and $\gamma^N_0=\Gamma_0(\Sigma_0+\Gamma_0)^{-1}\Sigma_0$ does not depend on $N$, the right-hand side of this inequality goes to zero when $N$ tends to infinity. So there is some $N_0\in\N$ such that for all $N\geqslant N_0$ it holds $\lVert \gamma_{t_{k_N}-}^N \rVert < \varepsilon/2$. This is a contradiction to our assumption.
     
    Hence, there is some $N_0\in\N$ such that for all $N\geqslant N_0$ there exists some index $1\leqslant l_N\leqslant k_N$ with $\lVert \gamma_{t_{l_N}-}^N \rVert < \varepsilon/2$. We denote by $l_N$ the maximal index less or equal $k_N$ with that property.
    If $l_N=k_N$, then
    \[ \bigl\lVert \gamma_u^N \bigr\rVert \leqslant \bigl\lVert \Lambda_{k_N}^N \gamma_{t_{k_N}-}^N \bigr\rVert + \Delta_N \bigl\lVert \beta\beta^T \bigr\rVert \leqslant \bigl\lVert \gamma_{t_{k_N}-}^N \bigr\rVert + \Delta_N \bigl\lVert \beta\beta^T \bigr\rVert < \varepsilon/2 +\Delta_N \bigl\lVert \beta\beta^T \bigr\rVert. \]
    If $l_N<k_N$, then for $j=l_N+1, \dots, k_N$ it holds $\lVert \gamma_{t_j-}^N \rVert \geqslant \varepsilon/2$. As above, one gets
    \begin{align*}
     \bigl\lVert \gamma_u^N \bigr\rVert &\leqslant \biggl(\frac{C}{C+\varepsilon/2}\biggr)^{k_N-l_N} \bigl\lVert \Lambda_{l_N}^N \gamma_{t_{l_N}-}^N \bigr\rVert + \Delta_N \bigl\lVert \beta\beta^T \bigr\rVert \frac{2C+\varepsilon}{\varepsilon} \\
     &\leqslant \bigl\lVert \gamma_{t_{l_N}-}^N \bigr\rVert + \Delta_N \bigl\lVert \beta\beta^T \bigr\rVert \frac{2C+\varepsilon}{\varepsilon} \\
     &< \varepsilon/2 + \Delta_N \bigl\lVert \beta\beta^T \bigr\rVert \frac{2C+\varepsilon}{\varepsilon}.
    \end{align*}
    We can choose $N_1\geqslant N_0$ such that $\Delta_N \lVert \beta\beta^T \rVert \frac{2C+\varepsilon}{\varepsilon} < \varepsilon/2$ for all $N\geqslant N_1$. Then $\lVert \gamma_u^N \rVert < \varepsilon$ for all $N\geqslant N_1$.
   \end{proof}
    
   Recalling that $\gamF=\mathbf{0}_d$ for all $t\in[0,T]$, the above theorem shows that the covariance matrices $\gamma_t^{E,N}$ and $\gamma_t^{C,N}$ converge to the covariance matrix in the case of full information when the number of expert opinions on $[0,T]$ tends to infinity. As the covariance matrices contain information about the quality of the drift estimators, this means that we get an arbitrarily good estimator by increasing the number of expert opinions. In this context it does not matter whether we have an investor who observes stock returns as well as expert opinions or an investor whose only source of information are the expert opinions. Note that the assumption $\lVert\Gamma^{(N)}_k\rVert\leqslant C$ for all $N\in\N$ and $k=0, \dots, N-1$ is a way of ensuring that the experts' estimates of the drift do not get arbitrarily bad. Instead one assumes some minimal level of reliability of the experts.
   
 \section{Asymptotic Results for an Infinite Time Horizon}\label{sec:asymptotic_results_for_an_infinite_time_horizon}
  
  In the following, other than before, we consider an infinite time horizon $T=\infty$. Throughout this section we assume that the expert opinions arrive at equidistant time points $t_k=k\Delta$ for some $\Delta>0$ and with constant covariance matrix $\Gamma$.
  Our aim is to derive some results about the convergence of the conditional covariance matrices $\gamma^H_t$ for the scenario that $t$ goes to infinity.
  
  \subsection{Return Observations Only}
   
   To start with, we consider $\gamR$. The following definition that can be found in Wonham~\cite{wonham_1968} and Ku\u{c}era~\cite{kucera_1973} proves to be useful when analyzing the asymptotic behaviour of $\gamR$.
   
   \begin{definition}\label{def:stabilizable_detectable}
    We call a matrix \emph{stable} if all its eigenvalues have negative real parts. A pair $(A,B)$ of matrices $A,B\in\R^{n\times n}$ is called \emph{stabilizable} if there exists some matrix $L\in\R^{n\times n}$ such that $A+BL$ is stable. It is called \emph{detectable} if there exists some matrix $F\in\R^{n\times n}$ such that $FA+B$ is stable.
   \end{definition}
   
   We now prove that, when $t$ tends to infinity, $\gamR$ converges to some finite matrix. Here, we make use of the results from Ku\u{c}era~\cite{kucera_1973}.
   
   \begin{theorem}\label{thm:convergence_of_gamma_R}
    Consider the same model as before but with an infinite time horizon $T=\infty$. Starting with any initial covariance matrix $\Sigma_0$ it holds
    \[ \lim_{t\to\infty} \gamR = \gamma^R_{\infty} \]
    for a finite positive semidefinite matrix $\gamma^R_{\infty}$.
    Furthermore, $\gamma^R_{\infty}$ is the unique positive semidefinite solution of the algebraic Riccati equation
    \[ -\alpha\gamma-\gamma\alpha+\beta\beta^T-\gamma(\sigma\sigma^T)^{-1}\gamma = \mathbf{0}_d. \]
   \end{theorem}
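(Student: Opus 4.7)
The plan is to recognize the ODE for $\gamR$ as the standard filter-type matrix Riccati equation and invoke the classical results on its asymptotic behaviour from Ku\u{c}era~\cite{kucera_1973}. First, I rewrite the equation by setting $A=-\alpha$, $Q=\beta\beta^T$ and $M=(\sigma\sigma^T)^{-1}$; because $\alpha$ is symmetric, Lemma~\ref{lem:hatmuR_dynamics} gives
\[
 \ddt \gamR \,=\, A\gamR + \gamR A^T + Q - \gamR M\gamR,
\]
which is the canonical form treated in the Riccati literature, with initial value $\gamma^R_0=\Sigma_0\geqslant 0$.

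Second, I verify the structural conditions in Definition~\ref{def:stabilizable_detectable}. By assumption $\alpha$ is symmetric positive definite, so $A=-\alpha$ has strictly negative real eigenvalues and is itself stable. This makes the pair $(A,M^{1/2})$ stabilizable (take the feedback matrix $L=\mathbf{0}_d$) and the pair $(Q^{1/2},A)$ detectable (take $F=\mathbf{0}_d$), regardless of how degenerate $Q$ or $M$ might be. These are precisely the hypotheses needed for Ku\u{c}era's convergence theorem.

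Third, I apply that theorem to obtain two conclusions simultaneously: existence of a unique positive semidefinite solution $\gamma^R_\infty$ of the algebraic Riccati equation $A\gamma+\gamma A^T + Q - \gamma M\gamma = \mathbf{0}_d$ and convergence $\gamR\to \gamma^R_\infty$ as $t\to\infty$ for every positive semidefinite initial value $\Sigma_0$. Translating back to the original coefficients yields exactly the ARE stated in the theorem.

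The main obstacle is to make sure the cited hypotheses really match: in many formulations uniqueness is only asserted for the \emph{stabilizing} positive semidefinite solution, not for the PSD solution in general. Here the stability of $A=-\alpha$ is the crucial comfort: any positive semidefinite equilibrium $\gamma$ satisfies $A\gamma+\gamma A^T = \gamma M\gamma - Q$, and the closed-loop matrix $A-\gamma M$ inherits stability from $A$, which rules out spurious PSD equilibria and matches ``PSD solution'' with ``stabilizing solution''. Once this identification is in place, the theorem applies verbatim and no further computation is needed.
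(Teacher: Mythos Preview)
Your approach is essentially the paper's: invoke Ku\u{c}era's convergence and uniqueness theorems for the matrix Riccati equation after checking stabilizability and detectability. Your verification of these hypotheses is actually cleaner than the paper's. The paper shows $(-\alpha,\tau)$ stabilizable and $(\beta^T,-\alpha)$ detectable by exhibiting the specific matrices $L=-I_d$ and $F=-\beta$ and then using Weyl's inequality to check that $-(\alpha+\tau)$ and $-(\beta\beta^T+\alpha)$ are stable. You observe instead that $A=-\alpha$ is already stable, so $L=\mathbf{0}_d$ and $F=\mathbf{0}_d$ work immediately; this bypasses Weyl entirely.

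One caution about your final paragraph. The assertion that ``the closed-loop matrix $A-\gamma M$ inherits stability from $A$'' is not justified as written: $A$ being stable and $\gamma M$ having nonnegative spectrum does not by itself force $A-\gamma M$ to be stable, because $\gamma M$ need not be symmetric and you cannot simply add eigenvalues. Fortunately you do not need this argument at all. Under stabilizability and detectability, Ku\u{c}era's Theorem~5 (which the paper cites) already gives uniqueness of the positive semidefinite solution of the ARE directly, so the worry you raise is resolved by the same reference that gives convergence. Drop the last paragraph and simply cite the two theorems.
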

   \begin{proof}
    We make use of the results in the review paper on matrix Riccati equations by Ku\u{c}era, \cite{kucera_1973}.
    After applying a simple time reversion to the differential equation considered in the paper, Theorem~17 states that the solution $P(t)$ of the differential equation
    \[ \frac{\rmd}{\rmd t} P(t) = -P(t)BB^TP(t) + P(t)A+A^TP(t) + C^TC, \qquad P(t_0) = S, \]
    satisfies
    \[ \lim_{t\to\infty} P(t) = P_{\infty} \]
    under the assumption that $(A,B)$ is stabilizable and $(C,A)$ is detectable. Theorem~5 ensures that $P_{\infty}$ is the unique positive semidefinite solution of the quadratic algebraic Riccati equation
    \[ -PBB^TP + PA+A^TP + C^TC = \mathbf{0}_d. \]
    In our model, $\gamR$ follows the dynamics
    \[ \frac{\rmd}{\rmd t} \gamR = -\alpha \gamR -\gamR\alpha + \beta\beta^T - \gamR(\sigma\sigma^T)^{-1}\gamR, \qquad \gamma_0 = \Sigma_0. \]
    Let $\tau$ denote the symmetric positive definite root of the matrix $(\sigma\sigma^T)^{-1}$, i.e.\ $\tau^2=(\sigma\sigma^T)^{-1}$. Hence, it is sufficient to show that $(-\alpha,\tau)$ is stabilizable and $(\beta^T,-\alpha)$ is detectable.
    Note that $(-\alpha)+\tau(-I_d)=-(\alpha+\tau)$ is symmetric which implies that all its eigenvalues are real. Now
    \[ \lambda_{\max}\bigl(-(\alpha+\tau)\bigr) = -\lambda_{\min}(\alpha+\tau) \leqslant -\bigl(\lambda_{\min}(\alpha)+\lambda_{\min}(\tau)\bigr) < 0, \]
    where we have used Weyl's inequality from Theorem~4.3.1 in Horn and Johnson~\cite{horn_johnson_1990} and the fact that both $\alpha$ and $\tau$ are positive definite. Hence, the pair $(-\alpha,\tau)$ is stabilizable.
    Furthermore, the matrix $(-\beta)\beta^T+(-\alpha) = -(\beta\beta^T+\alpha)$ is also symmetric and
    \[ \lambda_{\max}\bigl(-(\beta\beta^T+\alpha)\bigr) = -\lambda_{\min}(\beta\beta^T+\alpha) \leqslant -\bigl(\lambda_{\min}(\beta\beta^T)+\lambda_{\min}(\alpha)\bigr) < 0, \]
    where we have used again positive definiteness of $\alpha$ and positive semidefiniteness of $\beta\beta^T$. Hence, $(\beta^T,-\alpha)$ is detectable.
   \end{proof}
   
   In the one-dimensional situation, we get an explicit formula for $\gamma^R_{\infty}$, see Proposition~4.6 in Gabih et al.~\cite{gabih_kondakji_sass_wunderlich_2014}.
   
  \subsection{Return Observations and Expert Opinions}
   
   Now that we have seen what happens to $\gamR$ when $t$ tends to infinity, we consider the asymptotic behaviour of $\gamE$ and $\gamC$.
   
   \begin{lemma}\label{lem:monotonicity_gam_tk}
    Assume that the expert opinions arrive at equidistant time points $t_k=k\Delta$ for some $\Delta>0$, and that $\Gamma_k=\Gamma$ is some constant positive definite matrix. Let $H\in\{E,C\}$.
    If $\gamma^H_{t_0-}\leqslant \gamma^H_{t_1-}$, then $(\gamma^H_{t_k-})_{k\geqslant 0}$ and $(\gamma^H_{t_k})_{k\geqslant 0}$ are monotone non-decreasing sequences.
    If $\gamma^H_{t_0-}\geqslant \gamma^H_{t_1-}$, then $(\gamma^H_{t_k-})_{k\geqslant 0}$ and $(\gamma^H_{t_k})_{k\geqslant 0}$ are monotone non-increasing.
   \end{lemma}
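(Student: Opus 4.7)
The plan is to factor the passage from $\gamma^H_{t_k-}$ to $\gamma^H_{t_{k+1}-}$ as the composition of the Bayesian update map $\Phi(A):=\Gamma(A+\Gamma)^{-1}A = \Gamma-\Gamma(A+\Gamma)^{-1}\Gamma$ (applied at the information date) with a deterministic propagation map $\Psi^H$ (carrying $\gamma^H_{t_k}$ to $\gamma^H_{t_{k+1}-}$), and then to verify that both maps are non-decreasing with respect to the Loewner order on positive semidefinite matrices. Monotonicity of the composition then propagates the initial comparison $\gamma^H_{t_0-}\leqslant\gamma^H_{t_1-}$ (or the reverse) to all subsequent pairs by an immediate induction.

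Monotonicity of $\Phi$ follows from standard facts: if $A\leqslant B$ are positive semidefinite, then $A+\Gamma\leqslant B+\Gamma$ are positive definite, matrix inversion is antitone on positive definite matrices so $(A+\Gamma)^{-1}\geqslant(B+\Gamma)^{-1}$, and conjugation by the symmetric matrix $\Gamma$ preserves the order; hence $\Phi(A)\leqslant\Phi(B)$. For $\Psi^E$, Lemma~\ref{lem:hatmuE_dynamics}(i) gives the closed form
\[ \Psi^E(A) = \rme^{-\alpha\Delta}A\,\rme^{-\alpha^T\Delta} + \int_0^\Delta \rme^{-\alpha s}\beta\beta^T\rme^{-\alpha^T s}\,\rmd s, \]
whose monotonicity in $A$ is obvious. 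For $\Psi^C$, Lemma~\ref{lem:hatmuC_dynamics}(i) identifies $\Psi^C(A)$ as the value at time $\Delta$ of the matrix Riccati flow with initial value $A$; I would invoke the classical monotone dependence of the Riccati equation on initial data (see e.g.\ Ku\v{c}era~\cite{kucera_1973} and Wonham~\cite{wonham_1968}). A direct verification goes by setting $Q:=P_2-P_1$ for two solutions with $P_1(0)\leqslant P_2(0)$; subtracting the two Riccati equations and using symmetry yields the linear Lyapunov-type ODE
\[ \dot Q = -\bigl(\alpha + P_2(\sigma\sigma^T)^{-1}\bigr)Q - Q\bigl(\alpha^T + (\sigma\sigma^T)^{-1}P_1\bigr), \qquad Q(0)\geqslant \mathbf{0}_d, \]
whose flow preserves the cone of positive semidefinite matrices, so $Q(\Delta)\geqslant \mathbf{0}_d$.

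With both $\Phi$ and $\Psi^H$ monotone, the composed map $A\mapsto \Psi^H(\Phi(A))$ sending $\gamma^H_{t_k-}$ to $\gamma^H_{t_{k+1}-}$ is monotone, so induction on $k$ propagates the hypothesis $\gamma^H_{t_0-}\leqslant \gamma^H_{t_1-}$ to $\gamma^H_{t_k-}\leqslant \gamma^H_{t_{k+1}-}$ for every $k\geqslant 0$. Because $\gamma^H_{t_k}=\Phi(\gamma^H_{t_k-})$, the corresponding monotonicity of $(\gamma^H_{t_k})_{k\geqslant 0}$ follows at once by applying $\Phi$ to these inequalities. The reverse inequality case is handled by the same argument with $\leqslant$ replaced by $\geqslant$ throughout. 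The only mildly delicate step is the monotone dependence of the Riccati propagation $\Psi^C$ on initial data; everything else reduces to matrix algebra on the positive semidefinite cone.
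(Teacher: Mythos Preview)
Your argument is correct and matches the paper's: both establish monotonicity of the update $A\mapsto\Gamma-\Gamma(A+\Gamma)^{-1}\Gamma$ via antitone inversion, use the explicit formula from Lemma~\ref{lem:hatmuE_dynamics}(i) for the $E$-propagation, and invoke Theorem~10 in Ku\u{c}era~\cite{kucera_1973} for the Riccati case, then induct. One minor remark on your direct Lyapunov verification: the equation you wrote for $Q$ is not of the form $\dot Q=CQ+QC^T$ (the left and right coefficients are not transposes of each other unless $P_1=P_2$), so positive-semidefiniteness preservation is not immediate; it follows once you symmetrize via $P_2SP_2-P_1SP_1=\tfrac12\bigl((P_1{+}P_2)SQ+QS(P_1{+}P_2)\bigr)$, which yields $\dot Q=CQ+QC^T$ with $C=-\bigl(\alpha+\tfrac12(P_1{+}P_2)(\sigma\sigma^T)^{-1}\bigr)$ and hence $Q(t)=\Phi_C(t)\,Q(0)\,\Phi_C(t)^T\geqslant\mathbf{0}_d$.
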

   \begin{proof}
    We consider first the case $H=E$. Suppose for some $k\geqslant 1$ that $\gamma^E_{t_{k-1}-}\leqslant \gamma^E_{t_k-}$. Then clearly $\gamma^E_{t_{k-1}-}+\Gamma\leqslant \gamma^E_{t_k-}+\Gamma$ and hence $(\gamma^E_{t_{k-1}-}+\Gamma)^{-1}\geqslant (\gamma^E_{t_k-}+\Gamma)^{-1}$. It follows that
    \begin{align*}
     \gamma^E_{t_k}-\gamma^E_{t_{k-1}} &= \Gamma(\gamma^E_{t_k-}+\Gamma)^{-1}\gamma^E_{t_k-} - \Gamma(\gamma^E_{t_{k-1}-}+\Gamma)^{-1}\gamma^E_{t_{k-1}-} \\
     &= \bigl(\Gamma-\Gamma(\gamma^E_{t_k-}+\Gamma)^{-1}\Gamma\bigr) - \bigl(\Gamma-\Gamma(\gamma^E_{t_{k-1}-}+\Gamma)^{-1}\Gamma\bigr) \\
     &= \Gamma\Bigl((\gamma^E_{t_{k-1}-}+\Gamma)^{-1}-(\gamma^E_{t_k-}+\Gamma)^{-1}\Bigr)\Gamma \\
     &\geqslant \mathbf{0}_d.
    \end{align*}
    Combining this result with the formula from Lemma~\ref{lem:hatmuE_dynamics}, we see that
    \begin{align*}
     \gamma^E_{t_{k+1}-}-\gamma^E_{t_k-} &= \Bigl(\rme^{-\alpha\Delta}\gamma^E_{t_k}\rme^{-\alpha\Delta} + \int_{t_k}^{t_{k+1}} \rme^{-\alpha(t_{k+1}-s)}\beta\beta^T\rme^{-\alpha(t_{k+1}-s)}\,\rmd s\Bigr) \\
     &\qquad - \Bigl(\rme^{-\alpha\Delta}\gamma^E_{t_{k-1}}\rme^{-\alpha\Delta} + \int_{t_{k-1}}^{t_k} \rme^{-\alpha(t_k-s)}\beta\beta^T\rme^{-\alpha(t_k-s)}\,\rmd s\Bigr) \\
     &= \rme^{-\alpha\Delta}(\gamma^E_{t_k}-\gamma^E_{t_{k-1}})\rme^{-\alpha\Delta} + \int_0^{\Delta} \rme^{\alpha s}\beta\beta^T\rme^{\alpha s}\,\rmd s - \int_0^{\Delta} \rme^{\alpha s}\beta\beta^T\rme^{\alpha s}\,\rmd s \\
     &= \rme^{-\alpha\Delta}(\gamma^E_{t_k}-\gamma^E_{t_{k-1}})\rme^{-\alpha\Delta} \\
     &\geqslant \mathbf{0}_d.
    \end{align*}
    Inductively, it follows that $(\gamma^E_{t_k-})_{k\geqslant 0}$ and $(\gamma^E_{t_k})_{k\geqslant 0}$ are monotone non-decreasing. The proof that the sequences are monotone non-increasing in the case that $\gamma^E_{t_0-}\geqslant \gamma^E_{t_1-}$ goes in an analogous manner.
    
    Secondly, we consider the case $H=C$ and assume again for some $k\geqslant 1$ that $\gamma^C_{t_{k-1}-}\leqslant \gamma^C_{t_k-}$. As above, it follows from the update formula that $\gamma^C_{t_{k-1}}\leqslant \gamma^C_{t_k}$. In Lemma~\ref{lem:hatmuC_dynamics} we have seen that between two information dates $\gamC$ follows the dynamics
    \begin{equation}\label{eq:gamC_dynamics}
     \frac{\rmd}{\rmd t} \gamC = -\alpha\gamC -\gamC\alpha +\beta\beta^T - \gamC(\sigma\sigma^T)^{-1}\gamC.
    \end{equation}
    We consider the intervals $[t_{k-1},t_k)$ and $[t_k,t_{k+1})$. In both intervals, $\gamC$ evolves with the same dynamics, but for the initial values we have $\gamma^C_{t_{k-1}}\leqslant \gamma^C_{t_k}$. Since the differential equation \eqref{eq:gamC_dynamics} is a Riccati equation, it follows from Theorem~10 in Ku\u{c}era~\cite{kucera_1973} that $\gamma^C_{t_{k-1}+h}\leqslant \gamma^C_{t_k+h}$ for any time $h\in[0,\Delta)$, and in particular $\gamma^C_{t_k-}\leqslant\gamma^C_{t_{k+1}-}$. Inductively, it follows that $(\gamma^C_{t_k-})_{k\geqslant 0}$ and $(\gamma^C_{t_k})_{k\geqslant 0}$ are monotone non-decreasing sequences. The proof in the other case is again completely analogous.
   \end{proof}
   
   Under these monotonicity assumptions we can show convergence of the sequences $(\gamma^H_{t_k-})_{k\geqslant 0}$ and $(\gamma^H_{t_k})_{k\geqslant 0}$ when $k$ goes to infinity.
   
   \begin{proposition}\label{prop:limits_of_matrices}
    Let $H\in\{E,C\}$. Under the assumptions of Lemma~\ref{lem:monotonicity_gam_tk} and supposing that either $\gamma^H_{t_0-}\leqslant \gamma^H_{t_1-}$ or $\gamma^H_{t_0-}\geqslant \gamma^H_{t_1-}$, there exist finite matrices $U^H$ and $L^H$ in $\R^{d\times d}$ such that
    \[ \lim_{k\to\infty} \gamma^H_{t_k-} = U^H \quad \text{ and } \quad \lim_{k\to\infty} \gamma^H_{t_k} = L^H. \]
   \end{proposition}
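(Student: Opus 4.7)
The plan is to combine the monotonicity already established in Lemma~\ref{lem:monotonicity_gam_tk} with a uniform boundedness argument in the Loewner order, and then conclude via a scalar monotone convergence argument applied componentwise.

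First, suppose $\gamma^H_{t_0-} \geqslant \gamma^H_{t_1-}$. By Lemma~\ref{lem:monotonicity_gam_tk} both sequences are monotone non-increasing, and since every $\gamma^H_{t_k-}$ and $\gamma^H_{t_k}$ is a conditional covariance matrix we have the Loewner lower bound $\gamma^H_{t_k-}, \gamma^H_{t_k} \geqslant \mathbf{0}_d$. Hence boundedness is automatic in this case.

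Next, suppose $\gamma^H_{t_0-} \leqslant \gamma^H_{t_1-}$, so that both sequences are non-decreasing. Here I need a uniform upper bound independent of $k$. For $H=E$, Proposition~\ref{prop:update_decreases_covariance} gives $\gamma^E_{t_k} \leqslant \Gamma$ for all $k$. Substituting this into the between-dates formula from Lemma~\ref{lem:hatmuE_dynamics}(i), written at $t=t_{k+1}$ using symmetry of $\alpha$, yields
\[
\gamma^E_{t_{k+1}-} \;\leqslant\; \rme^{-\alpha\Delta}\Gamma\,\rme^{-\alpha\Delta} + \int_0^{\Delta} \rme^{-\alpha(\Delta-s)}\beta\beta^T \rme^{-\alpha(\Delta-s)}\,\rmd s,
\]
which is a finite symmetric matrix independent of $k$. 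For $H=C$, Proposition~\ref{prop:comparison_of_gammas} gives $\gamma^C_t \leqslant \gamma^R_t$ for every $t\geqslant 0$, and Theorem~\ref{thm:convergence_of_gamma_R} shows that $\gamma^R_t$ converges to $\gamma^R_{\infty}$ as $t\to\infty$, so $\gamma^R_t$ is uniformly bounded on $[0,\infty)$ and provides the required upper bound for both $\gamma^C_{t_k-}$ and $\gamma^C_{t_k}$.

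Finally, given a sequence of symmetric matrices $(A_k)$ in $\R^{d\times d}$ that is monotone in the Loewner order and bounded (above or below) by a symmetric matrix, the diagonal entries $(A_k)_{ii} = e_i^T A_k e_i$ form real monotone bounded sequences and therefore converge; applying the same reasoning to $(e_i+e_j)^T A_k (e_i+e_j)$ shows the off-diagonal entries converge as well. Hence $A_k$ converges componentwise to a finite symmetric limit. Applying this componentwise monotone convergence principle to the four sequences $(\gamma^H_{t_k-})$ and $(\gamma^H_{t_k})$ for $H\in\{E,C\}$ yields the desired limits $U^H$ and $L^H$. The only step requiring real care is the uniform upper bound in the non-decreasing case for $H=C$, which is why the comparison with $\gamma^R$ established in Proposition~\ref{prop:comparison_of_gammas} together with Theorem~\ref{thm:convergence_of_gamma_R} is essential.
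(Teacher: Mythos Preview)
Your argument is correct. The overall architecture---monotonicity plus a uniform Loewner bound, followed by a scalar monotone-convergence argument applied to quadratic forms---matches the paper's, but you obtain the upper bounds differently. For $H=E$ the paper bounds the no-update Lyapunov flow $\dot\gamma=-\alpha\gamma-\gamma\alpha+\beta\beta^T$ via stabilizability of $(-\alpha,\mathbf{0}_d)$ and Ku\u{c}era's Theorem~11, then uses that updates only decrease and a Riccati comparison (Ku\u{c}era's Theorem~10) to push the bound through the jumps; your route via $\gamma^E_{t_k}\leqslant\Gamma$ and the explicit between-dates formula is more elementary and avoids any appeal to Riccati theory. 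For $H=C$ the paper compares with $\gamma^E$ (also from Proposition~\ref{prop:comparison_of_gammas}) and re-uses the $E$-bound, whereas you compare with $\gamma^R$ and invoke Theorem~\ref{thm:convergence_of_gamma_R}; both are valid, though your choice pulls in a result that itself rests on Ku\u{c}era, so the paper's comparison with $\gamma^E$ would keep your proof entirely self-contained given the elementary $E$-bound you already have.
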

   \begin{proof}
    By Lemma~\ref{lem:monotonicity_gam_tk} the sequences $(\gamma^H_{t_k-})_{k\geqslant 0}$ and $(\gamma^H_{t_k})_{k\geqslant 0}$ are monotone. Recall from Lemma~\ref{lem:hatmuE_dynamics} that between two information dates, i.e.\ for $t\in[t_k,t_{k+1})$ it holds
    \begin{align*}
     \gamE &= \rme^{-\alpha(t-t_k)}\biggl(\gamma^E_{t_k} + \int_{t_k}^t \rme^{\alpha(s-t_k)}\beta\beta^T\rme^{\alpha(s-t_k)} \,\rmd s\biggr) \rme^{-\alpha(t-t_k)} \\
     &= \rme^{-\alpha(t-t_k)}\gamma^E_{t_k}\rme^{-\alpha(t-t_k)} + \int_{t_k}^t \rme^{-\alpha(t-s)}\beta\beta^T\rme^{-\alpha(t-s)} \,\rmd s.
    \end{align*}
    Therefore for any $t\in[t_k,t_{k+1})$ we have
    \begin{align*}
     \frac{\rmd}{\rmd t} \gamE &= -\alpha\rme^{-\alpha(t-t_k)}\gamma^E_{t_k}\rme^{-\alpha(t-t_k)} - \rme^{-\alpha(t-t_k)}\gamma^E_{t_k}\rme^{-\alpha(t-t_k)}\alpha \\
     &\qquad\;\; - \alpha\rme^{-\alpha(t-t_k)}\biggl(\int_{t_k}^t \rme^{\alpha(s-t_k)}\beta\beta^T\rme^{\alpha(s-t_k)}\,\rmd s\biggr) \rme^{-\alpha(t-t_k)} + \beta\beta^T \\
     &\qquad\;\; -\rme^{-\alpha(t-t_k)}\biggl(\int_{t_k}^t \rme^{\alpha(s-t_k)}\beta\beta^T\rme^{\alpha(s-t_k)}\,\rmd s\biggr) \rme^{-\alpha(t-t_k)}\alpha \\
     &= -\alpha\gamE-\gamE\alpha+\beta\beta^T.
    \end{align*}
    This is a degenerate Riccati differential equation where the quadratic term vanishes. From Definition~\ref{def:stabilizable_detectable} it follows immediately that the pair $(-\alpha,\mathbf{0}_d)$ is stabilizable. So by Theorem~11 in Ku\u{c}era~\cite{kucera_1973} the solution of this differential equation is bounded. Since at each information date Proposition~\ref{prop:update_decreases_covariance} ensures $\gamma^E_{t_k}\leqslant\gamma^E_{t_k-}$, and by applying again Theorem~10 in \cite{kucera_1973} we can conclude that there is some matrix $M\in\R^{d\times d}$ such that $x^T\gamE x\leqslant x^TMx$ for all $x\in\R^d$ and $t\geqslant 0$. By Proposition~\ref{prop:comparison_of_gammas} the same holds for $\gamC$.
    Hence, for $H\in\{E,C\}$, the sequences $(\gamma^H_{t_k-})_{k\geqslant 0}$ and $(\gamma^H_{t_k})_{k\geqslant 0}$ are monotone and bounded. Since they are symmetric, it can be shown that
    \[ \lim_{k\to\infty} \gamma^H_{t_k-} = U^H \quad \text{ and } \quad \lim_{k\to\infty} \gamma^H_{t_k} = L^H \]
    for finite matrices $U^H$ and $L^H$ in $\R^{d\times d}$.
   \end{proof}
   
   Note that since $\gamma^H_{t_0-}=\Sigma_0$, the condition $\gamma^H_{t_0-}\leqslant\gamma^H_{t_1-}$ is trivially fulfilled in the special case $\Sigma_0=\mathbf{0}_d$, i.e.\ where the initial drift $\mu_0$ is known.
   
   For the one-dimensional situation with $d=1$ it has been shown in the proof of Proposition~4.6 in Gabih et al.~\cite{gabih_kondakji_sass_wunderlich_2014} that there exists some index $k_0\geqslant 0$ such that $\gamE$ and $\gamC$ are increasing in all intervals $[t_k, t_{k+1})$ for $k\geqslant k_0$. The question arises whether this statement can be generalized to the multidimensional situation when looking at some norm of $\gamE$ and $\gamC$. First of all, one can show that there exists some $k_0\geqslant 0$ such that the spectral norm of $\gamE$, respectively $\gamC$, is increasing in all intervals $[t_k, t_{k+1})$ for $k\geqslant k_0$ if we assume that the single stocks evolve independently. This is the case if we assume that the parameter matrices $\alpha$, $\beta$ and $\sigma\sigma^T$ as well as $\Sigma_0$ and $\Gamma$ are diagonal matrices.
   
   However, the above statement is in general not true when we have more than one stock in the market and do not assume independence of the single stocks. It is possible to find parameter sets for which the spectral norm does not become monotone between information dates. The basis for this construction is the fact that norms of solutions of Riccati differential equations in the multivariate situation are not necessarily monotone.
   
   \begin{example}\label{ex:gamma_E_not_monotone}
    We consider some specific model parameters and plot $\lVert\gamE\rVert$ for $t\in[0,5]$. The parameter matrices $\alpha$, $\beta$ and $\Sigma_0$ are chosen in such a way that the graph of the mapping $t\mapsto\lVert\tilde{\gamma}_t\rVert$ is not monotone, where the matrices $\tilde{\gamma}_t$ solve the ordinary matrix differential equation
    \[ \frac{\rmd}{\rmd t} \tilde{\gamma}_t = -\alpha\tilde{\gamma}_t - \tilde{\gamma}_t\alpha +\beta\beta^T, \qquad \tilde{\gamma}_0=\Sigma_0. \]
    This is only possible in the multivariate case since in the one-dimensional situation the solution of a Riccati differential equation is monotone.
    Now by choosing an appropriate $\Delta$ and an appropriate expert covariance matrix $\Gamma$, we can construct a situation where $\gamE$ is a periodic function. In more detail, suppose some $\Delta>0$ is chosen with $\tilde{\gamma}_{\Delta}\geqslant \tilde{\gamma}_0$, in this example $\Delta=1$. Let $L=\tilde{\gamma}_0$ and $U=\tilde{\gamma}_{\Delta}$. Now we want to find a matrix $\Gamma$ with $\Gamma(U+\Gamma)^{-1}U=L$. When assuming that both $U$ and $L$ are invertible, this comes up to setting $\Gamma=(L^{-1}-U^{-1})^{-1}$.
    
    As one can see now in Figure~\ref{fig:gamma_E_not_monotone}, by choosing the parameters stated in Table~\ref{tab:gamma_E_not_monotone} we get a situation in which $\lVert\gamE\rVert$ is periodic and not monotone between information dates. Instead, the norm of $\gamE$ drops slightly at the beginning of each interval and then increases.
    The expert's covariance matrix $\Gamma$ that has to be chosen for getting this periodic solution of $\lVert\gamE\rVert$ is approximately
    \[ \begin{pmatrix*}[r] 5.68 & 4.52 & 7.58 \\ 4.52 & 3.75 & 6.18 \\ 7.58 & 6.18 & 10.37 \end{pmatrix*}. \]
    Note that this matrix has as eigenvalues approximately 0.05, 0.11 and 19.65. This is a rather extreme covariance matrix. It corresponds to an investor who estimates some combinations of the stocks quite well but gives a rather vague estimate for some specific combination of stocks corresponding to the eigenvector of the largest eigenvalue.
   \end{example}
   
   \begin{table}[ht]
    \centering
    \setlength{\tabcolsep}{2pt}
    \begin{tabular}{p{4mm}p{3mm}p{4cm}}
     \hline\hline
     \addlinespace[2mm]
     $\alpha$ & $=$ & $\begin{pmatrix*}[r] 2.34 & -1.27 & 1.50 \\ -1.27 & 1.06 & -1.43 \\ 1.50 & -1.43 & 3.16 \end{pmatrix*}$ \\ \addlinespace[2mm]
     $\beta$ & $=$ & $\begin{pmatrix*}[r] 1.32 & -0.53 & 0.12 \\ -0.53 & 1.30 & -0.35 \\ 0.12 & -0.35 & 0.96 \end{pmatrix*}$ \\ \addlinespace[2mm]
     $\Sigma_0$ & $=$ & $\begin{pmatrix*}[r] 0.44 & -0.05 & -0.09 \\ -0.05 & 0.93 & 0.16 \\ -0.09 & 0.16 & 0.27 \end{pmatrix*}$ \\
     \addlinespace[2mm]
     \hline\hline
    \end{tabular}
    \caption{Model parameters for Example~\ref{ex:gamma_E_not_monotone}}
    \label{tab:gamma_E_not_monotone}
   \end{table}
   
   \begin{figure}[ht]
    \centering
    \setlength\figureheight{3cm}
    \setlength\figurewidth{0.7\textwidth}
    \input{gamma_E_not_monotone_new.tikz}
    \caption{Development of $\lVert\gamma^E_t\rVert$ in Example~\ref{ex:gamma_E_not_monotone}}
    \label{fig:gamma_E_not_monotone}
   \end{figure}
   
   The same construction as in Example~\ref{ex:gamma_E_not_monotone} can be made for the investor who observes stock returns as well as expert opinions.
   
   \begin{example}\label{ex:gamma_C_not_monotone}
    An example of a periodic function mapping $t$ to $\lVert\gamC\rVert$ where the norm is not monotone between information dates is given in Figure~\ref{fig:gamma_C_not_monotone}. The underlying model parameters for this example are listed in Table~\ref{tab:gamma_C_not_monotone}. In this example, we have plotted the norm of the matrices $\gamC$ over a time of three years where the time span between two information dates is assumed to be half a year. Here, the norm increases slightly at the beginning of any interval, then decreases even below its starting value and eventually increases again. In particular, in this example it does not hold that
    \[ \liminf_{t\to\infty}\; \bigl\lVert\gamC\bigr\rVert = \lim_{k\to\infty} \bigl\lVert\gamma^C_{t_k}\bigr\rVert. \]
    The resulting expert's covariance matrix $\Gamma$ is calculated in the same way as in Example~\ref{ex:gamma_E_not_monotone}. It is approximately
    \[ \begin{pmatrix*}[r] 10.30 & 8.44 & -2.66 \\ 8.44 & 7.06 & -2.30 \\ -2.66 & -2.30 & 0.82 \end{pmatrix*}. \]
    Again, we take a look at the eigenvalues of $\Gamma$. These are approximately 0.02, 0.17 and 17.99. The same phenomenon as in Example~\ref{ex:gamma_E_not_monotone} can be observed. One of the eigenvalues of $\Gamma$ is significantly larger than the others, which corresponds to an expert who estimates some combinations of the stocks very precisely and at least one rather imprecisely. The combination of stocks that the expert cannot estimate that well is given by the eigenvector to the largest eigenvalue. 
   \end{example}
   
   \begin{table}[hb]
    \centering
    \setlength{\tabcolsep}{2pt}
    \begin{tabular}{p{3mm}p{3mm}p{5cm}p{4mm}p{3mm}p{4cm}}
     \hline\hline
     \addlinespace[2mm]
     $\alpha$ & $=$ & $\begin{pmatrix*}[r] 1.36 & -2.04 & 0.75 \\ -2.04 & 3.31 & -0.81 \\ 0.75 & -0.81 & 0.94 \end{pmatrix*}$ & $\beta$ & $=$ & $\begin{pmatrix*}[r] 1.37 & 0.20 & -0.39 \\ 0.20 & 0.82 & -0.02 \\ -0.39 & -0.02 & 0.45 \end{pmatrix*}$ \\ \addlinespace[2mm]
     $\sigma$ & $=$ & $\begin{pmatrix*}[r] 0.08 & -0.13 & \phantom{-}0.15 \\ -0.07 & -0.10 & 0.13 \\ 0.12 & 0.04 & 0.04 \end{pmatrix*}$ & $\Sigma_0$ & $=$ & $\begin{pmatrix*}[r] 0.19 & 0.11 & -0.03 \\ 0.11 & 0.11 & -0.01 \\ -0.03 & -0.01 & 0.03 \end{pmatrix*}$ \\ 
     \addlinespace[2mm]
     \hline\hline
    \end{tabular}
    \caption{Model parameters for Example~\ref{ex:gamma_C_not_monotone}}
    \label{tab:gamma_C_not_monotone}
   \end{table}
   
   \begin{figure}[ht]
    \centering
    \setlength\figureheight{3cm}
    \setlength\figurewidth{0.7\textwidth}
    \input{gamma_C_not_monotone_new.tikz}
    \caption{Development of $\lVert\gamma^C_t\rVert$ in Example~\ref{ex:gamma_C_not_monotone}}
    \label{fig:gamma_C_not_monotone}
   \end{figure}
   
   The next lemma identifies one set of parameters for which it is possible to show that the norm of $\gamE$ behaves just like $\gamE$ in the one-dimensional case.
   In the following, let
   \[ G_h=\rme^{-\alpha h}\biggl(L^E+\int_0^h \rme^{\alpha s}\beta\beta^T\rme^{\alpha s}\,\rmd s\biggr) \rme^{-\alpha h} \]
   for $h\in[0, \Delta]$, in particular $G_0=L^E$ and $G_{\Delta}=U^E$. Then it holds
   \[ \frac{\rmd}{\rmd h} G_h = \rme^{-\alpha h}\bigl(-\alpha L^E-L^E\alpha+\beta\beta^T\bigr)\rme^{-\alpha h}. \]
   
   \begin{lemma}\label{lem:alpha=aI_d}
    Suppose $\alpha=a I_d$ where $a$ is some positive real number. Under the assumption that the limit matrices $U^E$ and $L^E$ exist,
    \[ -\alpha L^E-L^E\alpha+\beta\beta^T \]
    is positive semidefinite.
   \end{lemma}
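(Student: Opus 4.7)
The plan is to exploit the hypothesis $\alpha=aI_d$, which turns $\rme^{\pm\alpha h}=\rme^{\pm ah}I_d$ into a scalar multiple of the identity and therefore lets $G_h$ be written as a genuinely one-dimensional convex combination. A direct computation (using that every matrix commutes with $\rme^{\pm\alpha h}$ in this case) gives
\[ G_h \;=\; \rme^{-2ah}L^E + \frac{1-\rme^{-2ah}}{2a}\,\beta\beta^T,\qquad h\in[0,\Delta], \]
and rearranging this identity yields the key relation
\[ G_h - L^E \;=\; \bigl(1-\rme^{-2ah}\bigr)\Bigl(\tfrac{1}{2a}\beta\beta^T - L^E\Bigr). \]

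Next I would invoke Proposition~\ref{prop:update_decreases_covariance}, which states $\gamma^E_{t_k}\leqslant\gamma^E_{t_k-}$ for every $k$. Passing to the limit $k\to\infty$ (the Loewner order is closed under limits) gives $L^E\leqslant U^E$. Since $U^E$ is by construction exactly the value $G_\Delta$, this is the same as $G_\Delta-L^E\geqslant\mathbf{0}_d$. Evaluating the displayed identity at $h=\Delta$ then yields
\[ \bigl(1-\rme^{-2a\Delta}\bigr)\Bigl(\tfrac{1}{2a}\beta\beta^T - L^E\Bigr)\;\geqslant\;\mathbf{0}_d. \]
Since $a,\Delta>0$, the scalar $1-\rme^{-2a\Delta}$ is strictly positive, so it can be divided out to give $\beta\beta^T - 2aL^E\geqslant\mathbf{0}_d$, and rewriting $2aL^E$ as $\alpha L^E+L^E\alpha$ produces the desired inequality.

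There is no genuine obstacle once $\alpha=aI_d$ is exploited: the entire argument is essentially a one-line computation followed by one application of Proposition~\ref{prop:update_decreases_covariance}. The role of the scalar hypothesis is precisely to collapse $G_h$ into a convex interpolation between the two fixed matrices $L^E$ and $\tfrac{1}{2a}\beta\beta^T$, so that the Loewner inequality at the single endpoint $h=\Delta$ propagates back to $\tfrac{\rmd}{\rmd h}G_h\bigr|_{h=0}=-\alpha L^E-L^E\alpha+\beta\beta^T\geqslant\mathbf{0}_d$. For a general symmetric positive definite $\alpha$, noncommutativity of $\rme^{-\alpha h}$ with $L^E$ would prevent this equivalence, which is consistent with the nonmonotonic behaviour illustrated by Example~\ref{ex:gamma_E_not_monotone}.
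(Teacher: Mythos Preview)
Your argument is correct and, if anything, slightly cleaner than the paper's. Both proofs hinge on the same two ingredients: the scalar hypothesis $\alpha=aI_d$, which makes $\rme^{-\alpha h}=\rme^{-ah}I_d$ commute with everything, and the inequality $L^E\leqslant U^E$ (obtained by passing to the limit in Proposition~\ref{prop:update_decreases_covariance}). Where the paper argues by contradiction---assuming a negative eigenvalue $\theta$ with eigenvector $v$, applying the mean value theorem to $f(h)=v^TG_hv$, and computing $f'(h^*)=\rme^{-2ah^*}\theta$ to contradict $v^TU^Ev\geqslant v^TL^Ev$---you instead compute $G_h$ in closed form and observe that $G_h-L^E$ is a strictly positive scalar multiple of the fixed matrix $\tfrac{1}{2a}\beta\beta^T-L^E$. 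The endpoint inequality $G_\Delta-L^E=U^E-L^E\geqslant\mathbf{0}_d$ then forces that fixed matrix, and hence $-\alpha L^E-L^E\alpha+\beta\beta^T$, to be positive semidefinite. Your route avoids both the contradiction setup and the mean value theorem; the paper's route has the minor advantage of never needing to evaluate $\int_0^h\rme^{2as}\,\rmd s$ explicitly, but this is negligible. One terminological quibble: you call $G_h$ a ``convex interpolation'' between $L^E$ and $\tfrac{1}{2a}\beta\beta^T$, but the coefficients $\rme^{-2ah}$ and $\tfrac{1-\rme^{-2ah}}{2a}$ do not sum to $1$ in general. This does not affect the argument, which only needs that $G_h-L^E$ is a positive scalar times a fixed matrix.
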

   \begin{proof}
    Note that the matrix $-\alpha L^E-L^E\alpha+\beta\beta^T$ is symmetric. Suppose it had a negative eigenvalue. Let $v\in\R^d$ be a corresponding normalized eigenvector to the eigenvalue $\theta<0$, i.e.\ 
    \[ v^T\bigl(-\alpha L^E-L^E\alpha+\beta\beta^T\bigr)v = \theta <0. \]
    Define $f\colon[0,\Delta]\to\R$, $h\mapsto v^TG_hv$. Then
    \[ f'(h)=v^T\rme^{-\alpha h}\bigl(-\alpha L^E-L^E\alpha+\beta\beta^T\bigr)\rme^{-\alpha h}v. \]
    It follows that
    \[ v^TU^Ev-v^TL^Ev = v^TG_{\Delta}v-v^TG_0v = f(\Delta)-f(0) = f'(h^*) \Delta \]
    for some $h^*\in(0,\Delta)$ by the mean value theorem. Now
    \begin{align*}
     f'(h^*) &= v^T\rme^{-\alpha h^*}\bigl(-\alpha L^E-L^E\alpha+\beta\beta^T\bigr)\rme^{-\alpha h^*}v \\
     &= \rme^{-2ah^*}v^T\bigl(-\alpha L^E-L^E\alpha+\beta\beta^T\bigr)v \\
     &= \rme^{-2ah^*}\theta.
    \end{align*}
    Hence, $v^TU^Ev-v^TL^Ev=\rme^{-2ah^*}\theta\Delta<\rme^{-2a\Delta}\theta\Delta<0.$ But this is a contradiction to $L^E\leqslant U^E$. So, $-\alpha L^E-L^E\alpha+\beta\beta^T$ is positive semidefinite.
   \end{proof}
   
   We further need uniform convergence of $(\gamma^E_{t_k+h})_{k\in\N}$.
   
   \begin{lemma}\label{lem:uniform_convergence}
    Let the assumptions of Proposition~\ref{prop:limits_of_matrices} be fulfilled. Then
    \[ \lim_{k\to\infty} \max\biggl\{\sup_{h\in[0,\Delta)} \bigl\lVert\gamma^E_{t_k+h}-G_h\bigr\rVert, \; \bigl\lVert\gamma^E_{t_{k+1}-}-G_{\Delta}\bigr\rVert \biggr\} = 0. \]
   \end{lemma}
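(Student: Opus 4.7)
The plan is to use the explicit formula from Lemma~\ref{lem:hatmuE_dynamics}(i) to reduce both expressions inside the maximum to a simple expression in $\gamma^E_{t_k}-L^E$, whose norm goes to zero by Proposition~\ref{prop:limits_of_matrices}.

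First, I would observe that since $\alpha$ is symmetric, Lemma~\ref{lem:hatmuE_dynamics}(i) gives for $h\in[0,\Delta)$
\[ \gamma^E_{t_k+h} = \rme^{-\alpha h}\biggl(\gamma^E_{t_k}+\int_0^{h}\rme^{\alpha u}\beta\beta^T\rme^{\alpha u}\,\rmd u\biggr)\rme^{-\alpha h}, \]
and by continuity of the mapping $h\mapsto \gamma^E_{t_k+h}$ on $[0,\Delta)$ together with its left limit at $\Delta$, the same formula also holds at $h=\Delta$ with $\gamma^E_{t_k+\Delta}$ replaced by $\gamma^E_{t_{k+1}-}$. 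Subtracting the definition of $G_h$, the integral term cancels and we obtain
\[ \gamma^E_{t_k+h}-G_h = \rme^{-\alpha h}\bigl(\gamma^E_{t_k}-L^E\bigr)\rme^{-\alpha h} \qquad \text{for all } h\in[0,\Delta], \]
with the convention $\gamma^E_{t_k+\Delta}=\gamma^E_{t_{k+1}-}$.

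Next, I would take spectral norms. As shown in the proof of Theorem~\ref{thm:asymptotics_for_N_to_infinity}, since $\alpha$ is symmetric positive definite one has $\lVert\rme^{-\alpha h}\rVert\leqslant 1$ for every $h\geqslant 0$. By submultiplicativity of the spectral norm,
\[ \bigl\lVert \gamma^E_{t_k+h}-G_h\bigr\rVert \leqslant \bigl\lVert \rme^{-\alpha h}\bigr\rVert^2\,\bigl\lVert \gamma^E_{t_k}-L^E\bigr\rVert \leqslant \bigl\lVert \gamma^E_{t_k}-L^E\bigr\rVert \]
uniformly in $h\in[0,\Delta]$. Therefore
\[ \max\biggl\{\sup_{h\in[0,\Delta)}\bigl\lVert\gamma^E_{t_k+h}-G_h\bigr\rVert,\;\bigl\lVert\gamma^E_{t_{k+1}-}-G_{\Delta}\bigr\rVert\biggr\} \leqslant \bigl\lVert \gamma^E_{t_k}-L^E\bigr\rVert. \]

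Finally, Proposition~\ref{prop:limits_of_matrices} (whose hypotheses are the assumptions of the lemma) yields $\gamma^E_{t_k}\to L^E$ as $k\to\infty$, so the right-hand side tends to zero, which proves the claim. The whole argument is short because the difference $\gamma^E_{t_k+h}-G_h$ factorises cleanly through $\gamma^E_{t_k}-L^E$; there is no real obstacle, the only point to check carefully is the symmetry of $\alpha$ (so that $\alpha^T$ in the formula of Lemma~\ref{lem:hatmuE_dynamics}(i) coincides with $\alpha$, giving the same exponential on both sides) and the contraction property $\lVert \rme^{-\alpha h}\rVert\leqslant 1$, both of which are already available under the standing assumption of Section~\ref{sec:properties_of_the_conditional_covariance_matrix}.
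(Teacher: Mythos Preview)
Your proof is correct and follows essentially the same approach as the paper: both exploit the identity $\gamma^E_{t_k+h}-G_h=\rme^{-\alpha h}(\gamma^E_{t_k}-L^E)\rme^{-\alpha h}$, bound its norm via submultiplicativity and $\lVert\rme^{-\alpha h}\rVert\leqslant 1$, and conclude from $\gamma^E_{t_k}\to L^E$. If anything, you are slightly more explicit than the paper about where the symmetry of $\alpha$ and the contraction bound on $\rme^{-\alpha h}$ enter.
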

   \begin{proof}
    As in the proof of Proposition~\ref{prop:limits_of_matrices} it holds $\lim_{k\to\infty} \gamma^E_{t_k+h}= G_h$ for all $h\in[0,\Delta)$ as well as $\lim_{k\to\infty} \gamma^E_{t_{k+1}-} = G_{\Delta}$. Let $\varepsilon>0$ and $\tilde{k}\in\N$ with $\lVert\gamma^E_{t_k}-L^E\rVert<\varepsilon$ for all $k\geqslant\tilde{k}$. Then for all $h\in[0,\Delta)$ we have
    \[ \bigl\lVert\gamma^E_{t_k+h}-G_h\bigr\rVert = \bigl\lVert\rme^{-\alpha h}(\gamma^E_{t_k}-L^E)\rme^{-\alpha h}\bigr\rVert \leqslant \bigl\lVert\rme^{-\alpha h}\bigr\rVert^2\bigl\lVert\gamma^E_{t_k}-L^E\bigr\rVert < \varepsilon \]
    for all $k\geqslant\tilde{k}$. The same holds for $\lVert\gamma^E_{t_{k+1}-}-G_{\Delta}\rVert$, and the claim follows.
   \end{proof}
    
   Using the previous lemmas we can show that under the assumption $\alpha=aI_d$ the height of any decrease in $\lVert\gamE\rVert$ between $t_k$ and $t_{k+1}$ goes to zero when $k$ goes to infinity.
   
   \begin{proposition}\label{prop:decrease_in_height}
    Let the assumptions of Lemma~\ref{lem:alpha=aI_d} be fulfilled. Let $\varepsilon>0$ and fix time points $0<h_1<h_2<\Delta$. Then there exists some $k_0\in\N$ such that
    \[ \bigl\lVert\gamma^E_{t_k+h_2}\bigr\rVert \geqslant \bigl\lVert\gamma^E_{t_k+h_1}\bigr\rVert-\varepsilon \]
    for all $k\geqslant k_0$.
   \end{proposition}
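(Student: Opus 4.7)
The plan is to show that under $\alpha = aI_d$ the limiting trajectory $h \mapsto G_h$ is monotone non-decreasing in the positive semidefinite order on $[0,\Delta]$, which transfers to monotonicity of the spectral norm, and then to transfer this to $\gamma^E_{t_k+h}$ via the uniform convergence from Lemma~\ref{lem:uniform_convergence}.

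First, since $\alpha = aI_d$ with $a>0$, the matrix exponential $\rme^{-\alpha h} = \rme^{-ah} I_d$ is a scalar multiple of the identity. The derivative formula stated just before Lemma~\ref{lem:alpha=aI_d} then simplifies to
\[ \ddt G_h = \rme^{-2ah}\bigl(-\alpha L^E - L^E\alpha + \beta\beta^T\bigr). \]
By Lemma~\ref{lem:alpha=aI_d} the matrix $Q := -\alpha L^E - L^E\alpha + \beta\beta^T$ is symmetric positive semidefinite, so $\ddt G_h = \rme^{-2ah}Q \geqslant \mathbf{0}_d$. Integrating from $h_1$ to $h_2$ yields
\[ G_{h_2} - G_{h_1} = \int_{h_1}^{h_2} \rme^{-2ah} Q \, \rmd h \geqslant \mathbf{0}_d. \]

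Next I would use that $G_{h_1}, G_{h_2}$ are symmetric positive semidefinite with $G_{h_2} \geqslant G_{h_1}$ to conclude $\lVert G_{h_2}\rVert \geqslant \lVert G_{h_1}\rVert$. Indeed, if $v$ is a unit eigenvector of $G_{h_1}$ with $v^T G_{h_1}v = \lambda_{\max}(G_{h_1}) = \lVert G_{h_1}\rVert$, then
\[ \lVert G_{h_2}\rVert = \lambda_{\max}(G_{h_2}) \geqslant v^T G_{h_2} v = v^T G_{h_1} v + v^T(G_{h_2}-G_{h_1})v \geqslant \lVert G_{h_1}\rVert. \]

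Finally, by Lemma~\ref{lem:uniform_convergence} there exists $k_0 \in \N$ such that $\sup_{h\in[0,\Delta)} \lVert \gamma^E_{t_k+h} - G_h\rVert < \varepsilon/2$ for all $k \geqslant k_0$. The reverse triangle inequality gives $\bigl|\lVert \gamma^E_{t_k+h}\rVert - \lVert G_h\rVert \bigr| < \varepsilon/2$ uniformly in $h$, so for $k \geqslant k_0$,
\[ \bigl\lVert\gamma^E_{t_k+h_2}\bigr\rVert \geqslant \lVert G_{h_2}\rVert - \varepsilon/2 \geqslant \lVert G_{h_1}\rVert - \varepsilon/2 \geqslant \bigl\lVert\gamma^E_{t_k+h_1}\bigr\rVert - \varepsilon, \]
which is the claim. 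The main (already resolved) obstacle is the monotonicity of $h \mapsto G_h$ in the PSD order, which rests crucially on $\alpha$ being a scalar multiple of the identity (so that $\rme^{-\alpha h}$ commutes with everything and the derivative reduces to a scalar factor times the PSD matrix $Q$); everything else is a clean transfer through Lemma~\ref{lem:alpha=aI_d} and the uniform convergence in Lemma~\ref{lem:uniform_convergence}.
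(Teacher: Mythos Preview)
Your proof is correct and takes a genuinely different route from the paper. The paper argues by contradiction: assuming the inequality fails along a subsequence $(k_n)$, it applies the mean value theorem to $g_n(h)=\lVert\gamma^E_{t_{k_n}+h}\rVert$ and expresses $g_n'(h_n^*)$ as $v_n^T\bigl(-\alpha\gamma^E_{t_{k_n}+h_n^*}-\gamma^E_{t_{k_n}+h_n^*}\alpha+\beta\beta^T\bigr)v_n$, where $v_n$ is a unit eigenvector to the top eigenvalue; this is then shown to contradict the uniform convergence to the PSD limit $\rme^{-2ah}Q$. Your approach instead works first with the limiting trajectory: you observe that under $\alpha=aI_d$ the derivative $\ddt G_h=\rme^{-2ah}Q$ is PSD, so $h\mapsto G_h$ is monotone in the Loewner order, hence $\lVert G_{h_2}\rVert\geqslant\lVert G_{h_1}\rVert$, and you then transfer this via Lemma~\ref{lem:uniform_convergence} and the reverse triangle inequality. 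Both proofs rest on the same two ingredients (Lemma~\ref{lem:alpha=aI_d} and Lemma~\ref{lem:uniform_convergence}), but yours is more elementary and arguably cleaner: it avoids invoking the mean value theorem for the spectral-norm map $h\mapsto\lVert\gamma^E_{t_{k_n}+h}\rVert$, which tacitly assumes differentiability and a specific form for the derivative that can be delicate when the maximal eigenvalue is not simple. The paper's approach, on the other hand, is set up so that it immediately generalises to the trace version in Theorem~\ref{thm:trace_of_gamma_E}, where the analogous contradiction argument is reused verbatim.
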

   \begin{proof}
    Suppose there exist some $\varepsilon>0$ and $0<h_1<h_2<\Delta$ as well as an increasing sequence $(k_n)_{n\in\N}$ with $\lim_{n\to\infty} k_n = \infty$ such that
    \[ \bigl\lVert\gamma^E_{t_{k_n}+h_2}\bigr\rVert < \bigl\lVert\gamma^E_{t_{k_n}+h_1}\bigr\rVert-\varepsilon \]
    for all $n\in\N$.
    Define the functions $g_n\colon(0,\Delta)\to\R$, $h\mapsto\lVert\gamma^E_{t_{k_n}+h}\rVert$ for $n\in\N$. Then
    \begin{align*}
     -\varepsilon &> \bigl\lVert\gamma^E_{t_{k_n}+h_2}\bigr\rVert-\bigl\lVert\gamma^E_{t_{k_n}+h_1}\bigr\rVert = g_n(h_2)-g_n(h_1) = g_n'(h_n^*)\cdot (h_2-h_1) \\
     &= v_n^T\bigl(-\alpha\gamma^E_{t_{k_n}+h_n^*}-\gamma^E_{t_{k_n}+h_n^*}\alpha+\beta\beta^T\bigr)v_n \cdot (h_2-h_1)
    \end{align*}
    for some $h_n^*\in(h_1, h_2)$ and where $v_n$ denotes the normalized eigenvector to the largest eigenvalue of $\gamma^E_{t_{k_n}+h_n^*}$.
    But then for all $n\in\N$
    \[ v_n^T\bigl(-\alpha\gamma^E_{t_{k_n}+h_n^*}-\gamma^E_{t_{k_n}+h_n^*}\alpha+\beta\beta^T\bigr)v_n < -\frac{\varepsilon}{h_2-h_1}. \]
    This is a contradiction to the uniform convergence of $-\alpha\gamma^E_{t_k+h}-\gamma^E_{t_k+h}\alpha+\beta\beta^T$ to
    \[ -\alpha G_h-G_h\alpha+\beta\beta^T = \rme^{-\alpha h}\bigl(-\alpha L^E-L^E\alpha+\beta\beta^T\bigr)\rme^{-\alpha h} \]
    which is positive semidefinite.
   \end{proof}
   
   From the previous lemma one can conclude in particular that under the given assumptions
   \[ \liminf_{t\to\infty}\; \bigl\lVert\gamma^E_t\bigr\rVert = \bigl\lVert L^E\bigr\rVert \quad \text{and} \quad \limsup_{t\to\infty}\; \bigl\lVert\gamma^E_t\bigr\rVert = \bigl\lVert U^E\bigr\rVert. \]
   
   Similarly to the spectral norm, for both $\gamE$ and $\gamC$ there exist parameter sets for which the Frobenius norm never becomes monotone between information dates. However, when considering the Frobenius norm of the square root of $\gamE$, we can prove asymptotic bounds. For this purpose, note that the square of the Frobenius norm of $(\gamma^E_t)^{1/2}$ is the trace of $\gamma^E_t$.
   
   \begin{theorem}\label{thm:trace_of_gamma_E}
    Consider the situation with expert opinions only and suppose that the limit matrices $L^E$ and $U^E$ exist. Then we have
    \begin{align*}
     \liminf_{t\to\infty} \;\tr\bigl(\gamE\bigr) &= \lim_{k\to\infty} \tr\bigl(\gamma^E_{t_k}\bigr), \\
     \limsup_{t\to\infty} \;\tr\bigl(\gamE\bigr) &= \lim_{k\to\infty} \tr\bigl(\gamma^E_{t_k-}\bigr).
    \end{align*}
   \end{theorem}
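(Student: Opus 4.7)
The plan is to reduce the problem to analysing the periodic limit profile $h \mapsto \tr(G_h)$ via uniform convergence, and then to show that this profile is non-decreasing on $[0,\Delta]$. The reduction is immediate from Lemma~\ref{lem:uniform_convergence}: together with continuity of the trace it yields
\[ \liminf_{t\to\infty} \tr\bigl(\gamE\bigr) = \min_{h\in[0,\Delta]} \tr(G_h), \qquad \limsup_{t\to\infty} \tr\bigl(\gamE\bigr) = \max_{h\in[0,\Delta]} \tr(G_h), \]
where the interval is extended to include $h=\Delta$ via $\gamma^E_{t_{k+1}-}\to G_{\Delta}=U^E$. Since $\lim_{k\to\infty}\tr(\gamma^E_{t_k})=\tr(L^E)=\tr(G_0)$ and $\lim_{k\to\infty}\tr(\gamma^E_{t_k-})=\tr(U^E)=\tr(G_{\Delta})$ by continuity of the trace and the existence of $L^E$, $U^E$, it suffices to prove that $h \mapsto \tr(G_h)$ is non-decreasing on $[0,\Delta]$.

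The key step is this monotonicity. Since $\alpha$ is symmetric positive definite, write $\alpha = Q\Lambda Q^T$ with $\Lambda=\diag(\lambda_1,\ldots,\lambda_d)$ and $\lambda_i>0$. Setting $\tilde{L} = Q^T L^E Q$ and $\tilde{B} = Q^T\beta\beta^T Q$, the explicit formula
\[ G_h = \rme^{-\alpha h}L^E \rme^{-\alpha h} + \int_0^h \rme^{-\alpha(h-s)}\beta\beta^T \rme^{-\alpha(h-s)}\,\rmd s \]
gives in the transformed basis
\[ \bigl(Q^T G_h Q\bigr)_{ii} = \tilde{L}_{ii}\,\rme^{-2\lambda_i h} + \tilde{B}_{ii}\,\frac{1-\rme^{-2\lambda_i h}}{2\lambda_i}, \qquad i=1,\ldots,d. \]
Summing over $i$ and differentiating,
\[ \frac{\rmd}{\rmd h}\tr(G_h) = \sum_{i=1}^d \bigl(\tilde{B}_{ii} - 2\lambda_i\tilde{L}_{ii}\bigr)\,\rme^{-2\lambda_i h}. \]
Evaluating the same diagonal entries at $h=\Delta$ and writing $\tilde{U}_{ii}=(Q^T U^E Q)_{ii}$, one obtains
\[ \tilde{U}_{ii}-\tilde{L}_{ii} = \bigl(1-\rme^{-2\lambda_i\Delta}\bigr)\Bigl(\tfrac{\tilde{B}_{ii}}{2\lambda_i}-\tilde{L}_{ii}\Bigr). \]
Passing to the limit in Proposition~\ref{prop:update_decreases_covariance} gives $L^E\leqslant U^E$, so the left-hand side is non-negative; since $1-\rme^{-2\lambda_i\Delta}>0$, it follows that $\tilde{B}_{ii}-2\lambda_i\tilde{L}_{ii}\geqslant 0$ for every $i$, and therefore $\tfrac{\rmd}{\rmd h}\tr(G_h)\geqslant 0$ throughout $[0,\Delta]$.

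This monotonicity step is the main obstacle, because Examples~\ref{ex:gamma_E_not_monotone} and~\ref{ex:gamma_C_not_monotone} show that $\gamE$ is not monotone between information dates in matrix order, nor is its spectral norm, so no direct matrix comparison principle is available. The decisive feature is that after an orthogonal change of basis diagonalizing $\alpha$ the diagonal entries decouple into independent scalar exponential trajectories, the trace picks out precisely these diagonal entries, and the constraint $L^E\leqslant U^E$ enforced by the update step is exactly what fixes the sign of each coefficient in the derivative above. Once $h\mapsto\tr(G_h)$ has been shown non-decreasing, the minimum on $[0,\Delta]$ equals $\tr(G_0)=\tr(L^E)$ and the maximum equals $\tr(G_{\Delta})=\tr(U^E)$, so the theorem follows from the reduction in the first paragraph.
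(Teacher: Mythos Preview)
Your proof is correct and takes a genuinely different route from the paper's.

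The paper does not directly show that $h\mapsto\tr(G_h)$ is non-decreasing. Instead it introduces the auxiliary process $\tilde\gamma^E_t$ solving $\dot\gamma=-\alpha\gamma-\gamma\alpha+\beta\beta^T$ without updates, uses the comparison $\gamma^E_t\leqslant\tilde\gamma^E_t$ to obtain $G_h\leqslant\tilde\gamma^E_\infty$, and then combines uniform convergence with the trace inequality $\tr(AB)\geqslant\tr(A)\lambda_{\min}(B)$ from Wang--Kuo--Hsu to establish
\[
\lim_{k\to\infty}\max_{h\in[0,\Delta]}\tr\bigl(\alpha\gamma^E_{t_k+h}\bigr)\leqslant\tfrac12\tr(\beta\beta^T),
\]
finishing with a mean-value-theorem contradiction argument in the style of Proposition~\ref{prop:decrease_in_height}. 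Your argument is considerably more elementary: diagonalising $\alpha$ decouples the diagonal entries of $Q^TG_hQ$ into explicit scalar exponentials, and the constraint $L^E\leqslant U^E$ inherited from Proposition~\ref{prop:update_decreases_covariance} forces each coefficient $\tilde B_{ii}-2\lambda_i\tilde L_{ii}$ to be non-negative. This yields actual monotonicity of $\tr(G_h)$, not merely asymptotic monotonicity, and avoids the external trace inequality entirely. What the paper's approach buys, however, is portability: the same comparison-with-the-stationary-solution machinery is what drives Proposition~4.10 for the $C$-investor, where the Riccati term destroys the closed-form expression for $G^C_h$ and your diagonalisation would not be available.
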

   \begin{proof}
    We first note that for all times $t$ between information dates the trace of $\gamE$ is differentiable with
    \begin{align*}
     \ddt \tr\bigl(\gamE\bigr) &= \ddt \sum_{i=1}^d \bigl(\gamE\bigr)_{ii} = \sum_{i=1}^d \bigl(-\alpha\gamE-\gamE\alpha+\beta\beta^T\bigr)_{ii} \\
     &= \tr\bigl(-\alpha\gamE-\gamE\alpha+\beta\beta^T\bigr) = -2\tr\bigl(\alpha\gamE\bigr)+\tr\bigl(\beta\beta^T\bigr).
    \end{align*}
    Consequently, the derivative of the trace is non-negative if and only if $\tr(\alpha\gamE)\leqslant\frac{1}{2}\tr(\beta\beta^T)$.
    Now, let $(\tilde{\gamma}^E_t)_{t\geqslant0}$ be the solution of the matrix Riccati differential equation
    \[ \ddt \tilde{\gamma}^E_t = -\alpha\tilde{\gamma}^E_t-\tilde{\gamma}^E_t\alpha+\beta\beta^T, \qquad \tilde{\gamma}^E_0=\Sigma_0. \]
    Note that $\tilde{\gamma}^E$ follows the same dynamics like $\gamma^E$ but we assume for $\tilde{\gamma}^E$ that no updates take place. As in Theorem~\ref{thm:convergence_of_gamma_R} it follows from Theorem~17 in Ku\u{c}era~\cite{kucera_1973} that
    \[ \lim_{t\to\infty} \tilde{\gamma}^E_t = \tilde{\gamma}^E_{\infty} \]
    where $\tilde{\gamma}^E_{\infty}\in\R^{d\times d}$ is a symmetric positive semidefinite matrix solving
    \[ -\alpha\tilde{\gamma}^E_{\infty}-\tilde{\gamma}^E_{\infty}\alpha+\beta\beta^T=\mathbf{0}_d. \]
    Hence, we also have $\tr(-\alpha\tilde{\gamma}^E_{\infty}-\tilde{\gamma}^E_{\infty}\alpha+\beta\beta^T)=0$, i.e.\ $\tr(\alpha\tilde{\gamma}^E_{\infty})=\frac{1}{2}\tr(\beta\beta^T)$.
    In the following we prove an asymptotic bound for the minimal eigenvalue of $\tilde{\gamma}^E_{\infty}-\gamma^E_{t_k+h}$ where $h$ ranges from 0 to $\Delta$ and $k$ goes to infinity.
    To avoid cumbersome notation we shall write $\gamma^E_{t_k+\Delta}$ when actually meaning $\gamma^E_{t_{k+1}-}$, i.e.\ the limit of the covariance matrix before the update takes place.
    Using Weyl's inequality, stated for example in Theorem~4.3.1 in Horn and Johnson~\cite{horn_johnson_1990}, we get
    \begin{align*}
     \min_{h\in[0,\Delta]} \lambda_{\min}\bigl(\tilde{\gamma}^E_{\infty}-\gamma^E_{t_k+h}\bigr) &= \min_{h\in[0,\Delta]} \lambda_{\min}\bigl(\tilde{\gamma}^E_{\infty}-G_h+G_h-\gamma^E_{t_k+h}\bigr) \\
     &\geqslant \min_{h\in[0,\Delta]} \lambda_{\min}\bigl(\tilde{\gamma}^E_{\infty}-G_h\bigr) + \lambda_{\min}\bigl(G_h-\gamma^E_{t_k+h}\bigr) \\
     &\geqslant \min_{h\in[0,\Delta]} \lambda_{\min}\bigl(\tilde{\gamma}^E_{\infty}-G_h\bigr) + \min_{h\in[0,\Delta]} \lambda_{\min}\bigl(G_h-\gamma^E_{t_k+h}\bigr).
    \end{align*}
    Now we use that $\tilde{\gamma}^E_{\infty}=\lim_{t\to\infty} \tilde{\gamma}^E_t = \lim_{k\to\infty} \tilde{\gamma}^E_{t_k+h}$ and $G_h=\lim_{k\to\infty} \gamma^E_{t_k+h}$ for any $h\in[0,\Delta]$ as well as the fact that $\gamE\leqslant\tilde{\gamma}^E_t$ for all $t\geqslant0$. This follows from Theorem~10 in Ku\u{c}era~\cite{kucera_1973} together with the proofs of Lemma~\ref{lem:monotonicity_gam_tk} and Proposition~\ref{prop:limits_of_matrices}. Combining these results we get
    \[ x^T\bigl(\tilde{\gamma}^E_{\infty}-G_h\bigr)x = \lim_{k\to\infty} x^T\bigl(\tilde{\gamma}^E_{t_k+h}-\gamma^E_{t_k+h}\bigr)x\geqslant0 \]
    for all $x\in\R^d$ and $h\in[0,\Delta]$. Consequently,
    \[ \min_{h\in[0,\Delta]} \lambda_{\min}\bigl(\tilde{\gamma}^E_{\infty}-G_h\bigr)\geqslant0. \]
    In Lemma~\ref{lem:uniform_convergence} we have shown uniform convergence of $(\gamma^E_{t_k+h})_{k\in\N}$. This implies that
    \[ \lim_{k\to\infty} \min_{h\in[0,\Delta]} \lambda_{\min}\bigl(G_h-\gamma^E_{t_k+h}\bigr) =0. \]
    Together with the above inequality this yields
    \[ \lim_{k\to\infty} \min_{h\in[0,\Delta]} \lambda_{\min}\bigl(\tilde{\gamma}^E_{\infty}-\gamma^E_{t_k+h}\bigr)\geqslant0. \]
    By applying a trace inequality proven in Lemma~1 in Wang, Kuo and Hsu~\cite{wang_kuo_hsu_1986} we can now conclude
    \[ \lim_{k\to\infty} \min_{h\in[0,\Delta]} \tr\Bigl(\alpha\bigl(\tilde{\gamma}^E_{\infty}-\gamma^E_{t_k+h}\bigr)\Bigr) \geqslant \lim_{k\to\infty} \min_{h\in[0,\Delta]} \tr(\alpha)\lambda_{\min}\bigl(\tilde{\gamma}^E_{\infty}-\gamma^E_{t_k+h}\bigr) \geqslant0, \]
    and therefore
    \begin{equation}\label{eq:limit_of_trace_alpha_gamma}
     \lim_{k\to\infty} \max_{h\in[0,\Delta]} \tr\bigl(\alpha\gamma^E_{t_k+h}\bigr) \leqslant \tr\bigl(\alpha\tilde{\gamma}^E_{\infty}\bigr) = \frac{1}{2}\tr\bigl(\beta\beta^T\bigr).
    \end{equation}
    The remaining part of the proof goes just like the proof of Proposition~\ref{prop:decrease_in_height}. Suppose there exist some $\varepsilon>0$ and $0<h_1<h_2<\Delta$ as well as an increasing sequence $(k_n)_{n\in\N}$ with $\lim_{n\to\infty} k_n=\infty$ such that
    \[ \tr\bigl(\gamma^E_{t_{k_n}+h_2}\bigr)<\tr\bigl(\gamma^E_{t_{k_n}+h_1}\bigr)-\varepsilon \]
    for all $n\in\N$. Define the functions $g_n\colon(0,\Delta)\to\R$, $h\mapsto\tr(\gamma^E_{t_{k_n}+h})$ for $n\in\N$. For all $n\in\N$ we then have
    \begin{align*}
     -\varepsilon &> \tr\bigl(\gamma^E_{t_{k_n}+h_2}\bigr)-\tr\bigl(\gamma^E_{t_{k_n}+h_1}\bigr) = g_n(h_2)-g_n(h_1) = g_n'(h_n^*)\cdot(h_2-h_1) \\
     &= \Bigl(-2\tr\bigl(\alpha\gamma^E_{t_{k_n}+h_n^*}\bigr)+\tr\bigl(\beta\beta^T\bigr)\Bigr)\cdot(h_2-h_1),
    \end{align*}
    for some $h_n^*\in(h_1,h_2)$. Hence
    \[ -2\tr\bigl(\alpha\gamma^E_{t_{k_n}+h_n^*}\bigr)+\tr\bigl(\beta\beta^T\bigr)<-\frac{\varepsilon}{h_2-h_1} \]
    for all $n\in\N$. But this is a contradiction to \eqref{eq:limit_of_trace_alpha_gamma}. Hence the assumption was wrong and we can conclude in particular that
    \[ \liminf_{t\to\infty} \;\tr\bigl(\gamE\bigr) = \lim_{k\to\infty} \tr\bigl(\gamma^E_{t_k}\bigr) \quad \text{and} \quad \limsup_{t\to\infty} \;\tr\bigl(\gamE\bigr) = \lim_{k\to\infty} \tr\bigl(\gamma^E_{t_k-}\bigr). \]
   \end{proof}
   
   For an investor who observes stock returns as well as expert opinions the above statement does not hold in general.
   
   \begin{example}\label{ex:trace_counterexample}
    Figure~\ref{fig:trace_counterexample} shows the trace of $\gam{C}{t}$ plotted over time for some exemplary parameters. In this example we have a financial market with $d=3$ stocks and one expert opinion each year, i.e.\ $\Delta=1$. The remaining model parameters are listed in Table~\ref{tab:trace_counterexample}. Note that the chosen expert matrix $\Gamma$ is approximately
    \[ \begin{pmatrix*}[r] 155.14 & 13.25 & -36.59 \\ 13.25 & 1.40 & -3.13 \\ -36.59 & -3.13 & 8.83 \end{pmatrix*} \]
    with eigenvalues approximately $164.92$, $0.19$ and $0.26$.
    The trace of $\gam{C}{t}$ decreases slightly right after any information date before eventually increasing until the next expert opinion arrives. Since we have, as in Example~\ref{ex:gamma_E_not_monotone} and Example~\ref{ex:gamma_C_not_monotone}, constructed the corresponding $\Gamma$ in such a way that $\tr(\gam{C}{t})$ is a periodic function, this shows that the claim from the previous theorem does not hold when replacing the $E$-investor with the $C$-investor.
   \end{example}
   
   \begin{table}[hb]
    \centering
    \setlength{\tabcolsep}{2pt}
    \begin{tabular}{p{3mm}p{3mm}p{5cm}p{4mm}p{3mm}p{4cm}}
     \hline\hline
     \addlinespace[2mm]
     $\alpha$ & $=$ & $\begin{pmatrix*}[r] 2.38 & 1.08 & -1.47 \\ 1.08 & 1.19 & -0.75 \\ -1.47 & -0.75 & 2.74 \end{pmatrix*}$ & $\beta$ & $=$ & $\begin{pmatrix*}[r] -7.68 & 9.76 & -3.26 \\ -4.38 & -9.01 & -0.52 \\ -0.30 & -4.51 & 2.98 \end{pmatrix*}$ \\ \addlinespace[2mm]
     $\sigma$ & $=$ & $\begin{pmatrix*}[r] -0.95 & -0.56 & -0.37 \\ -0.80 & -0.27 & 0.93 \\ 0.39 & -0.19 & 0.97 \end{pmatrix*}$ & $\Sigma_0$ & $=$ & $\begin{pmatrix*}[r] \phantom{-}6.07 & 2.63 & 0.36 \\ 2.63 & 2.38 & -0.02 \\ 0.36 & -0.02 & 0.94 \end{pmatrix*}$ \\ 
     \addlinespace[2mm]
     \hline\hline
    \end{tabular}
    \caption{Model parameters for Example~\ref{ex:trace_counterexample}}
    \label{tab:trace_counterexample}
   \end{table}
   
   \begin{figure}[ht]
    \centering
    \setlength\figureheight{5cm}
    \setlength\figurewidth{0.8\textwidth}
    \input{trace_counterexample.tikz}
    \caption{Development of $\tr(\gamma^C_t)$ in Example~\ref{ex:trace_counterexample}}
    \label{fig:trace_counterexample}
   \end{figure}
   
   Next, we identify one condition on the parameters such that the statement also holds for $H=C$.
   
   \begin{proposition}
    Assume that $\sigma\sigma^T=sI_d$ for some $s>0$ and that $(\gam{C}{t_k+h})_{k\in\N}$ converges uniformly. Then
    \begin{align*}
     \liminf_{t\to\infty} \;\tr\bigl(\gam{C}{t}\bigr) &= \lim_{k\to\infty} \tr\bigl(\gam{C}{t_k}\bigr), \\
     \limsup_{t\to\infty} \;\tr\bigl(\gam{C}{t}\bigr) &= \lim_{k\to\infty} \tr\bigl(\gam{C}{t_k-}\bigr).
    \end{align*}
   \end{proposition}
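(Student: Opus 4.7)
The plan is to adapt the argument of Theorem~\ref{thm:trace_of_gamma_E} to accommodate the additional quadratic term $\gamma^C_t(\sigma\sigma^T)^{-1}\gamma^C_t = \tfrac{1}{s}(\gamma^C_t)^2$ which now enters the Riccati dynamics of $\gamma^C$. Differentiating the trace between information dates yields
\[
\ddt \tr\bigl(\gamC\bigr) = -2\tr\bigl(\alpha\gamC\bigr) + \tr\bigl(\beta\beta^T\bigr) - \tfrac{1}{s}\tr\bigl((\gamC)^2\bigr),
\]
so the trace is non-decreasing precisely when $2\tr(\alpha\gamC) + \tfrac{1}{s}\tr((\gamC)^2) \leqslant \tr(\beta\beta^T)$. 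The natural reference point is the limit matrix $\gamma^R_\infty$ from Theorem~\ref{thm:convergence_of_gamma_R}, which in the present scalar-$\sigma\sigma^T$ setting satisfies the algebraic identity $2\tr(\alpha\gamma^R_\infty) + \tfrac{1}{s}\tr((\gamma^R_\infty)^2) = \tr(\beta\beta^T)$. The goal is therefore to prove the asymptotic inequality
\[
\lim_{k\to\infty}\max_{h\in[0,\Delta]}\Bigl(2\tr\bigl(\alpha\gam{C}{t_k+h}\bigr) + \tfrac{1}{s}\tr\bigl((\gam{C}{t_k+h})^2\bigr)\Bigr) \leqslant \tr\bigl(\beta\beta^T\bigr).
\]

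The key domination is $\gamC \leqslant \gamR$ from Proposition~\ref{prop:comparison_of_gammas}, combined with $\gamR \to \gamma^R_\infty$ from Theorem~\ref{thm:convergence_of_gamma_R}. Since $t_k + h \to \infty$ uniformly in $h\in[0,\Delta]$ as $k\to\infty$, for any $\varepsilon>0$ there is $k_0$ so that $\gam{C}{t_k+h} \leqslant \gamma^R_\infty + \varepsilon I_d$ for all $k\geqslant k_0$ and $h\in[0,\Delta]$. The first trace term is then bounded via Weyl's inequality and the Wang--Kuo--Hsu trace inequality used in Theorem~\ref{thm:trace_of_gamma_E}. For the quadratic term I would use the elementary fact that if $0\leqslant A \leqslant B$ then $\tr(A^2)\leqslant\tr(B^2)$ (which follows, for example, from $\tr(A^2) \leqslant \tr(A^{1/2}BA^{1/2}) = \tr(AB) \leqslant \tr(B^2)$), to conclude
\[
\tr\bigl((\gam{C}{t_k+h})^2\bigr) \leqslant \tr\bigl((\gamma^R_\infty + \varepsilon I_d)^2\bigr) = \tr\bigl((\gamma^R_\infty)^2\bigr) + 2\varepsilon\tr(\gamma^R_\infty) + d\varepsilon^2.
\]
Combining these estimates with the algebraic identity above, the sum $2\tr(\alpha\gam{C}{t_k+h}) + \tfrac{1}{s}\tr((\gam{C}{t_k+h})^2)$ differs from $\tr(\beta\beta^T)$ by at most a quantity $C\varepsilon$ for a constant $C$ depending only on $\alpha$, $\gamma^R_\infty$, $s$ and $d$.

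The final step replicates the contradiction argument of Proposition~\ref{prop:decrease_in_height} and Theorem~\ref{thm:trace_of_gamma_E}: if $\tr(\gam{C}{t_{k_n}+h_2}) < \tr(\gam{C}{t_{k_n}+h_1}) - \varepsilon$ for $0<h_1<h_2<\Delta$ along some subsequence $k_n\to\infty$, then the mean value theorem produces intermediate points $h_n^*$ at which $\ddt\tr(\gam{C}{t_{k_n}+h_n^*}) < -\varepsilon/(h_2-h_1)$, contradicting the asymptotic lower bound on the derivative for all $\varepsilon$ small enough. Together with the drop $\tr(\gamma^C_{t_k}) \leqslant \tr(\gamma^C_{t_k-})$ at information dates (Proposition~\ref{prop:update_decreases_covariance_C}) and the existence of the limits $\lim_k \tr(\gamma^C_{t_k})$, $\lim_k \tr(\gamma^C_{t_k-})$ guaranteed by the uniform convergence hypothesis, one reads off the stated $\liminf$ and $\limsup$ identities.

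The main obstacle is the quadratic Riccati term: unlike in the $\gamma^E$ case, the derivative of the trace involves $\tr((\gamma^C_t)^2)$, which is not linear in $\gamma^C$ and therefore cannot be controlled simply through $\tr(\alpha\cdot)$. The assumption $\sigma\sigma^T = sI_d$ is exactly what reduces this term to a scalar multiple of the squared Frobenius norm, enabling the Loewner-order monotonicity $A\leqslant B \Rightarrow \tr(A^2)\leqslant \tr(B^2)$ to bridge from $\gam{C}{t_k+h}$ to the limit $\gamma^R_\infty$. Without this assumption one would have to handle $\tr(\gamma^C(\sigma\sigma^T)^{-1}\gamma^C)$ for a non-scalar weight, which breaks the direct comparison argument; this is exactly why Example~\ref{ex:trace_counterexample} can produce a counterexample in the general case.
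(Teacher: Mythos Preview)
Your argument is correct and follows the same overall architecture as the paper: differentiate the trace, compare against the algebraic Riccati identity satisfied by $\gamma^R_\infty$, control the linear term via a trace inequality for products of positive semidefinite matrices, control the quadratic term via the Loewner-order monotonicity $0\leqslant A\leqslant B\Rightarrow\tr(A^2)\leqslant\tr(B^2)$, and finish with the mean-value contradiction from Proposition~\ref{prop:decrease_in_height}.

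There is one genuine simplification in your route worth noting. The paper introduces the uniform limit $G^C_h=\lim_{k\to\infty}\gamma^C_{t_k+h}$, first proves $G^C_h\leqslant\gamma^R_\infty$ via $\gamma^C\leqslant\gamma^R$ in the limit, and then invokes the uniform-convergence hypothesis to pass from $\gamma^C_{t_k+h}$ to $G^C_h$ in both the linear and quadratic trace terms. You instead bound $\gamma^C_{t_k+h}\leqslant\gamma^R_{t_k+h}\leqslant\gamma^R_\infty+\varepsilon I_d$ directly from Proposition~\ref{prop:comparison_of_gammas} and Theorem~\ref{thm:convergence_of_gamma_R}, which yields the asymptotic derivative bound without touching $G^C_h$ at all. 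Consequently, in your version the uniform-convergence hypothesis is used only to guarantee that the right-hand limits $\lim_k\tr(\gamma^C_{t_k})$ and $\lim_k\tr(\gamma^C_{t_k-})$ actually exist, whereas the paper leans on it also inside the asymptotic estimate. Your approach is cleaner in this respect; the paper's route has the minor advantage of making explicit the role of the periodic limit profile $h\mapsto G^C_h$. A small remark: for the linear term you do not actually need Weyl's inequality once you have $\gamma^C_{t_k+h}\leqslant\gamma^R_\infty+\varepsilon I_d$; the bound $\tr(\alpha\gamma^C_{t_k+h})\leqslant\tr(\alpha\gamma^R_\infty)+\varepsilon\tr(\alpha)$ follows immediately from positivity of $\alpha$ and positive semidefiniteness of the difference.
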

   \begin{proof}
    First, we note that between information dates the trace of $\gam{C}{t}$ is differentiable with
    \begin{align*}
     \ddt \tr\bigl(\gam{C}{t}\bigr) &= \tr\bigl(-\alpha\gam{C}{t}-\gam{C}{t}\alpha+\beta\beta^T-\gam{C}{t}(\sigma\sigma^T)^{-1}\gam{C}{t}\bigr) \\
     &= -2\tr\bigl(\alpha\gam{C}{t}\bigr)+\tr\bigl(\beta\beta^T\bigr)-\tr\bigl(\gam{C}{t}(\sigma\sigma^T)^{-1}\gam{C}{t}\bigr).
    \end{align*}
    Hence it holds $\ddt \tr(\gam{C}{t})\geqslant0$ if and only if
    \[ 2\tr\bigl(\alpha\gam{C}{t}\bigr)+\tr\bigl(\gam{C}{t}(\sigma\sigma^T)^{-1}\gam{C}{t}\bigr)\leqslant\tr\bigl(\beta\beta^T\bigr). \]
    Furthermore, we have shown that $\lim_{t\to\infty} \gam{R}{t}=\gam{R}{\infty}$ where
    \[ -\alpha\gam{R}{\infty}-\gam{R}{\infty}\alpha+\beta\beta^T-\gam{R}{\infty}(\sigma\sigma^T)^{-1}\gam{R}{\infty}=\mathbf{0}_d. \]
    It follows that
    \[ \tr\bigl(-\alpha\gam{R}{\infty}-\gam{R}{\infty}\alpha+\beta\beta^T-\gam{R}{\infty}(\sigma\sigma^T)^{-1}\gam{R}{\infty}\bigr)=0, \]
    i.e.\
    \[ 2\tr\bigl(\alpha\gam{R}{\infty}\bigr)+\tr\bigl(\gam{R}{\infty}(\sigma\sigma^T)^{-1}\gam{R}{\infty}\bigr)=\tr\bigl(\beta\beta^T\bigr). \]
    Again using Weyl's inequality we deduce
    \begin{align*}
     \min_{h\in[0,\Delta]} \lambda_{\min}\bigl(\gam{R}{\infty}-\gam{C}{t_k+h}\bigr) &= \min_{h\in[0,\Delta]} \lambda_{\min}\bigl(\gam{R}{\infty}-G^C_h+G^C_h-\gam{C}{t_k+h}\bigr) \\
     &\geqslant \min_{h\in[0,\Delta]} \lambda_{\min}\bigl(\gam{R}{\infty}-G^C_h\bigr)+\min_{h\in[0,\Delta]} \lambda_{\min}\bigl(G^C_h-\gam{C}{t_k+h}\bigr),
    \end{align*}
    where $G^C_h=\lim_{k\to\infty} \gam{C}{t_k+h}$ for any $h\in[0,\Delta]$. Recall that $\gam{R}{\infty}=\lim_{t\to\infty}\gam{R}{t}=\lim_{k\to\infty}\gam{R}{t_k+h}$ for each $h\in[0,\Delta]$ and that $\gam{C}{t}\leqslant\gam{R}{t}$ for each $t\geqslant0$, see Proposition~\ref{prop:comparison_of_gammas}. This implies that
    \[ x^T\bigl(\gam{R}{\infty}-G^C_h\bigr)x = \lim_{k\to\infty} x^T\bigl(\gam{R}{t_k+h}-\gam{C}{t_k+h}\bigr)x \geqslant0 \]
    for all $x\in\R^d$, $h\in[0,\Delta]$, and therefore
    \[ \min_{h\in[0,\Delta]}\lambda_{\min}\bigl(\gam{R}{\infty}-G^C_h\bigr)\geqslant0. \]
    By uniform convergence of $(\gam{C}{t_k+h})_{k\in\N}$ to $G^C_h$ we get
    \[ \lim_{k\to\infty} \min_{h\in[0,\Delta]}\lambda_{\min}\bigl(G^C_h-\gam{C}{t_k+h}\bigr)=0. \]
    Hence, putting these results together, we obtain
    \[ \lim_{k\to\infty} \min_{h\in[0,\Delta]} \lambda_{\min}\bigl(\gam{R}{\infty}-\gam{C}{t_k+h}\bigr)\geqslant0. \]
    In order to prove our claim we need to show that
    \[ \lim_{k\to\infty} \max_{h\in[0,\Delta]} 2\tr\bigl(\alpha\gam{C}{t_k+h}\bigr)+\tr\bigl(\gam{C}{t_k+h}(\sigma\sigma^T)^{-1}\gam{C}{t_k+h}\bigr)\leqslant\tr\bigl(\beta\beta^T\bigr), \]
    where the right-hand side is equal to $2\tr(\alpha\gam{R}{\infty})+\tr(\gam{R}{\infty}(\sigma\sigma^T)^{-1}\gam{R}{\infty})$. Putting this together and using cyclicity of the trace we see that we need to prove
    \begin{equation}\label{eq:claimed_inequality}
     \lim_{k\to\infty} \min_{h\in[0,\Delta]} 2\tr\Bigl(\alpha\bigl(\gam{R}{\infty}-\gam{C}{t_k+h}\bigr)\Bigr)+\tr\Bigl((\sigma\sigma^T)^{-1}\bigl((\gam{R}{\infty})^2-(\gam{C}{t_k+h})^2\bigr)\Bigr)\geqslant0.
    \end{equation}
    By Lemma~1 in Wang, Kuo and Hsu~\cite{wang_kuo_hsu_1986} it follows
    \[ \lim_{k\to\infty} \min_{h\in[0,\Delta]} 2\tr\Bigl(\alpha\bigl(\gam{R}{\infty}-\gam{C}{t_k+h}\bigr)\Bigr) \geqslant \lim_{k\to\infty} \min_{h\in[0,\Delta]} 2\tr(\alpha)\lambda_{\min}\bigl(\gam{R}{\infty}-\gam{C}{t_k+h}\bigr) \geqslant0. \]
    For the second summand we make use of our assumption $\sigma\sigma^T=sI_d$ where $s>0$. It follows that
    \[ \tr\Bigl((\sigma\sigma^T)^{-1}\bigl((\gam{R}{\infty})^2-(\gam{C}{t_k+h})^2\bigr)\Bigr)=\frac{1}{s}\cdot\tr\bigl((\gam{R}{\infty})^2-(\gam{C}{t_k+h})^2\bigr). \]
    Similarly to above, we write
    \[ \lim_{k\to\infty} \min_{h\in[0,\Delta]} \tr\bigl((\gam{R}{\infty})^2-(\gam{C}{t_k+h})^2\bigr) = \min_{h\in[0,\Delta]} \tr\bigl((\gam{R}{\infty})^2-(G^C_h)^2\bigr)+\lim_{k\to\infty} \min_{h\in[0,\Delta]} \tr\bigl((G^C_h)^2-(\gam{C}{t_k+h})^2\bigr). \]
    By uniform convergence, the second summand is zero. For the first summand, we recall that for any $h\in[0,\Delta]$ we have shown $G^C_h\leqslant\gam{R}{\infty}$ where both matrices are symmetric positive semidefinite. It follows that
    \[ \tr\bigl((G^C_h)^2\bigr)\leqslant\tr\bigl((\gam{R}{\infty})^2\bigr). \]
    Putting these results together, we obtain
    \[ \lim_{k\to\infty} \min_{h\in[0,\Delta]} \tr\bigl((\gam{R}{\infty})^2-(\gam{C}{t_k+h})^2\bigr) \geqslant0. \]
    Now, the inequality in \eqref{eq:claimed_inequality} has been shown.
   \end{proof}
   
   Instead of requiring $\sigma\sigma^T$ to be a multiple of the unit matrix, we can also put some restriction on the form of the expert's covariance matices $\Gamma_k$ to ensure monotonicity of $\gamC$ in between information dates. Here, the information dates $t_k$ are again arbitrary and we allow for non-constant $\Gamma_k$.
   
   \begin{proposition}
    Suppose that the initial covariance matrix $\Sigma_0$ is positive definite and fulfills
    \[ -\alpha\Sigma_0-\Sigma_0\alpha+\beta\beta^T-\Sigma_0(\sigma\sigma^T)^{-1}\Sigma_0 \geqslant\mathbf{0}_d \]
    and that the expert's covariance matrices are of the form $\Gamma_k=c_k\gam{C}{t_k-}$ for some $c_k>0$ at any information date $t_k$.
    Then between any two successive information dates, $\gam{C}{t}$ is non-decreasing in the sense of the positive semidefinite ordering.
   \end{proposition}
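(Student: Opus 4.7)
The idea is to show that $P(t) := \dot\gamma^C_t = F(\gamma^C_t)$, where
\[
F(\gamma) := -\alpha\gamma-\gamma\alpha+\beta\beta^T-\gamma(\sigma\sigma^T)^{-1}\gamma,
\]
remains positive semidefinite throughout each interval between information dates. The argument rests on three ingredients: an explicit expression for the update under the assumption $\Gamma_k=c_k\gamma^C_{t_k-}$, a scaling identity for $F$, and a Lyapunov-type linear matrix ODE governing $P$ along the Riccati flow.

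Under the hypotheses $\Sigma_0>\mathbf{0}_d$ and $\beta\beta^T>\mathbf{0}_d$, a standard cone-invariance argument shows that the Riccati flow preserves positive definiteness, and each update multiplies by a positive scalar, so every $\gamma^C_{t_k-}$ is invertible. Writing $\lambda_k:=c_k/(1+c_k)\in(0,1)$ and $S:=(\sigma\sigma^T)^{-1}$, the update matrix collapses to
\[
\Lambda^C_k = c_k\gamma^C_{t_k-}\bigl((1+c_k)\gamma^C_{t_k-}\bigr)^{-1} = \lambda_k I_d,
\]
so that $\gamma^C_{t_k}=\lambda_k\gamma^C_{t_k-}$. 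A short computation then gives the scaling identity
\[
F(\lambda\gamma) = \lambda F(\gamma) + (1-\lambda)\beta\beta^T + \lambda(1-\lambda)\gamma S\gamma \qquad (\lambda\in(0,1),\,\gamma\geqslant\mathbf{0}_d),
\]
whose right-hand side dominates $\lambda F(\gamma)$. Consequently $F(\gamma^C_{t_k})\geqslant\lambda_k F(\gamma^C_{t_k-})$, so positive semidefiniteness of $F$ is preserved across every update.

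Between two successive information dates I would differentiate $P(t)=F(\gamma^C_t)$ along the Riccati flow; using symmetry of $\alpha$ this gives
\[
\dot P(t) = -A(t)P(t) - P(t)A(t)^T,\qquad A(t) := \alpha+\gamma^C_t S.
\]
If $\Phi(t,s)$ denotes the fundamental matrix of $\dot x = -A(t)x$, the unique solution of this homogeneous Lyapunov-type equation admits the congruence form $P(t)=\Phi(t,s)P(s)\Phi(t,s)^T$, from which positive semidefiniteness of $P$ propagates forward in time.

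Assembling the pieces into an induction on $k$, the base case $P(0)=F(\lambda_0\Sigma_0)\geqslant\lambda_0 F(\Sigma_0)\geqslant\mathbf{0}_d$ follows from the standing hypothesis on $\Sigma_0$. The Lyapunov equation then yields $P(t)\geqslant\mathbf{0}_d$ on $[0,t_1)$ and, by continuity, also at $t_1-$; the scaling identity transfers this to $F(\gamma^C_{t_1})\geqslant\mathbf{0}_d$, and the argument iterates. Hence $\dot\gamma^C_t\geqslant\mathbf{0}_d$ on every interval $[t_k,t_{k+1})$, which is the claimed monotonicity. The most delicate point is the Lyapunov assertion that $\dot P=-AP-PA^T$ preserves the positive semidefinite cone for continuous time-varying $A$ — classical, but worth pinning to an explicit reference (e.g.\ the review by Ku\u{c}era~\cite{kucera_1973}).
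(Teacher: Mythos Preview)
Your proof is correct and follows essentially the same route as the paper: the scalar update $\gamma^C_{t_k}=\lambda_k\gamma^C_{t_k-}$, the scaling identity for $F$, and induction on $k$ are identical. The only difference is that where the paper invokes Theorem~2.1 of Rodriguez-Canabal~\cite{rodriguez-canabal_1975} to propagate $F(\gamma^C_t)\geqslant\mathbf{0}_d$ along the Riccati flow, you give the underlying argument directly via the Lyapunov ODE $\dot P=-AP-PA^T$ and its congruence solution $P(t)=\Phi(t,s)P(s)\Phi(t,s)^T$; this is exactly what that cited theorem encapsulates, so the two proofs coincide.
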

   \begin{proof}
    Suppose that
    \[ -\alpha\gam{C}{t_k-}-\gam{C}{t_k-}\alpha+\beta\beta^T-\gam{C}{t_k-}(\sigma\sigma^T)^{-1}\gam{C}{t_k-} \geqslant\mathbf{0}_d \]
    for some $k\geqslant0$. We look at the covariance matrix after the update.
    First, note that
    \[ \gam{C}{t_k}=\Gamma_k\bigl(\gam{C}{t_k-}+\Gamma_k\bigr)^{-1}\gam{C}{t_k-}=c_k\gam{C}{t_k-}\bigl(\gam{C}{t_k-}+c_k\gam{C}{t_k-}\bigr)^{-1}\gam{C}{t_k-}=\frac{c_k}{1+c_k}\gam{C}{t_k-}. \]
    We write $d_k=\frac{c_k}{1+c_k}$ and note that $d_k<1$. Hence, the updated covariance matrix is just a multiple of the matrix before the update. Now we can write
    \begin{align*}
     &-\alpha\gam{C}{t_k}-\gam{C}{t_k}\alpha+\beta\beta^T-\gam{C}{t_k}(\sigma\sigma^T)^{-1}\gam{C}{t_k}\\
     &\qquad = d_k\bigl(-\alpha\gam{C}{t_k-}-\gam{C}{t_k-}\alpha\bigr)+\beta\beta^T-d_k^2\gam{C}{t_k-}(\sigma\sigma^T)^{-1}\gam{C}{t_k-} \\
     &\qquad = d_k\bigl(-\alpha\gam{C}{t_k-}-\gam{C}{t_k-}\alpha+\beta\beta^T-\gam{C}{t_k-}(\sigma\sigma^T)^{-1}\gam{C}{t_k-}\bigr)+(1-d_k)\beta\beta^T+(d_k-d_k^2)\gam{C}{t_k-}(\sigma\sigma^T)^{-1}\gam{C}{t_k-}.
    \end{align*}
    Since $d_k-d_k^2>0$, and by assumption on $\gam{C}{t_k-}$, this is a sum of positive semidefinite matrices, and hence itself a positive semidefinite matrix.
    By Theorem~2.1 in Rodriguez-Canabal~\cite{rodriguez-canabal_1975} it follows that $\gam{C}{t}$ is monotone non-decreasing in the interval $[t_k,t_{k+1})$ and that
    \[ -\alpha\gam{C}{t_{k+1}-}-\gam{C}{t_{k+1}-}\alpha+\beta\beta^T-\gam{C}{t_{k+1}-}(\sigma\sigma^T)^{-1}\gam{C}{t_{k+1}-} \geqslant\mathbf{0}_d. \]
    Inductively, it follows that $\gam{C}{t}$ is monotone non-decreasing between any two successive information dates.
   \end{proof}
   
   The above theorem implies in particular that the trace of $\gam{C}{t}$ as well as its spectral norm is increasing between any two successive information dates. Hence, under the assumptions of the theorem we also deduce
   \begin{align*}
    \liminf_{t\to\infty} \;\tr\bigl(\gam{C}{t}\bigr) = \lim_{k\to\infty} \tr\bigl(\gam{C}{t_k}\bigr) \quad &\text{and} \quad \liminf_{t\to\infty} \;\bigl\lVert\gam{C}{t}\bigr\rVert = \lim_{k\to\infty} \bigl\lVert\gam{C}{t_k}\bigr\rVert, \\
    \limsup_{t\to\infty} \;\tr\bigl(\gam{C}{t}\bigr) = \lim_{k\to\infty} \tr\bigl(\gam{C}{t_k-}\bigr) \quad &\text{and} \quad \limsup_{t\to\infty} \;\bigl\lVert\gam{C}{t}\bigr\rVert = \lim_{k\to\infty} \bigl\lVert\gam{C}{t_k-}\bigr\rVert.
   \end{align*}
   
 \section{Portfolio Optimization Problem}\label{sec:portfolio_optimization_problem}
  
  \subsection{Optimal Strategy and Value Function}
   
   An investor's trading is described by a self-financing trading strategy $\pi=(\pi_t)_{t\in[0,T]}$ where $\pi_t$ takes values in $\R^d$. Here, the value $\pi^i_t$, $i=1, \dots, d$, represents the proportion of wealth invested in stock $i$ at time $t$, while the proportion $1-\ones^T\pi_t$ is invested in the risk-free bond $S^0$. Here $\mathbf{1}_d\in\R^d$ denotes the vector consisting of ones. Let $X^\pi=(X^\pi_t)_{t\in[0,T]}$ denote the wealth process corresponding to $\pi$. For the dynamics of the wealth process we get
   \[ \rmd X^\pi_t = X^\pi_t \Bigl(\pi_t^T\bigl((\mu_t-r_t\mathbf{1}_d)\,\rmd t + \sigma \,\rmd W_t\bigr) +r_t\,\rmd t\Bigr). \]
   The investor's initial capital at time zero is $X^\pi_0=x_0>0$.
   We denote by
   \[ \mathcal{A}^H(x_0) = \biggl\{\pi=(\pi_t)_{t\in[0,T]} \;\bigg|\; \pi \text{ is } \mathbb{F}^H\text{-adapted}, \; X^\pi_0=x_0, \; \E\biggl[\int_0^T \lVert\sigma^T\pi_t\rVert^2\,\rmd t\biggr]<\infty\biggr\} \]
   the class of admissible trading strategies where $H\in\{R,E,C,F\}$. Here, and in everything that follows, when $v\in\R^d$ is some vector then $\lVert v\rVert$ denotes the Euclidean norm of $v$. The objective of our portfolio optimization problem is to maximize expected logarithmic utility of terminal wealth. We call
   \[ V^H(x_0) = \sup\Bigl\{\E[\log(X^\pi_T)] \;\Big|\; \pi\in\mathcal{A}^H(x_0)\Bigr\} \]
   the \emph{value function} of the optimization problem.
   One can show that for $H\in\{R,E,C,F\}$ and writing $\hat{\mu}_t$ for $\hat{\mu}^H_t$ it holds
   \begin{align*}
    & \E\bigl[(\hat{\mu}_t-r_t\ones)^T(\sigma\sigma^T)^{-1}(\hat{\mu}_t-r_t\ones)\bigr] \\
    & \qquad = \tr\bigl((\sigma\sigma^T)^{-1}\E[\hat{\mu}_t(\hat{\mu}_t)^T]\bigr)-2(r_t\ones)^T(\sigma\sigma^T)^{-1}m_t +(r_t\ones)^T(\sigma\sigma^T)^{-1}(r_t\ones),
   \end{align*}
   where $m_t$ is the mean of $\mu_t$.
   
   Using this equality it is possible to calculate the optimal strategy for our optimization problem and to show that it is admissible.
   
   \begin{proposition}\label{prop:optimal_strategy}
    Let $H\in\{R,E,C,F\}$. The optimal strategy for the optimization problem
    \[ V^H(x_0) = \sup\Bigl\{\E[\log(X^\pi_T)] \;\Big|\; \pi\in\mathcal{A}^H(x_0)\Bigr\} \]
    is $\pi^*=(\pi^*_t)_{t\in[0,T]}$ with $\pi^*_t=(\sigma\sigma^T)^{-1}(\hat{\mu}^H_t-r_t\ones)$.
   \end{proposition}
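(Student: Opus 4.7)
The plan is to compute $\E[\log(X^\pi_T)]$ for an arbitrary admissible strategy $\pi\in\mathcal{A}^H(x_0)$, reduce it to an expected integral of a pointwise concave function of $\pi_t$, and then maximize inside the integral. First I would apply It\^o's formula to $\log(X^\pi_t)$ using the wealth dynamics, obtaining
\[ \log(X^\pi_T) = \log(x_0) + \int_0^T\!\Bigl(\pi_t^T(\mu_t-r_t\ones)+r_t-\tfrac{1}{2}\pi_t^T\sigma\sigma^T\pi_t\Bigr)\rmd t + \int_0^T \pi_t^T\sigma\,\rmd W_t. \]
The admissibility condition $\E[\int_0^T\lVert\sigma^T\pi_t\rVert^2\,\rmd t]<\infty$ immediately yields that the stochastic integral is a true martingale, so its expectation vanishes.

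For $H\in\{R,C\}$ I would rewrite the drift using the innovation representation of the return process, $\rmd R_t=\hat\mu^H_t\,\rmd t+\sigma\,\rmd\tilde W^H_t$, where $\tilde W^H$ is an $\mathbb F^H$-Brownian motion. This lets me replace $\pi_t^T(\mu_t-r_t\ones)\,\rmd t+\pi_t^T\sigma\,\rmd W_t$ by $\pi_t^T(\hat\mu^H_t-r_t\ones)\,\rmd t+\pi_t^T\sigma\,\rmd\tilde W^H_t$ without changing the value. For $H=E$ the observation $\sigma$-algebra is independent of $W$, so the tower property gives $\E[\pi_t^T\mu_t]=\E[\pi_t^T\hat\mu^E_t]$ directly, and the stochastic integral still has zero expectation by admissibility. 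In the full-information case $H=F$ no substitution is needed. In all cases I arrive at
\[ \E[\log(X^\pi_T)] = \log(x_0) + \E\!\left[\int_0^T\!\Bigl(\pi_t^T(\hat\mu^H_t-r_t\ones)+r_t-\tfrac{1}{2}\pi_t^T\sigma\sigma^T\pi_t\Bigr)\rmd t\right]. \]

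The integrand is strictly concave and quadratic in $\pi_t$, so pointwise maximization is immediate: setting the gradient to zero yields the unique maximizer $\pi_t^*=(\sigma\sigma^T)^{-1}(\hat\mu^H_t-r_t\ones)$. This is an upper bound valid for every $\pi\in\mathcal A^H(x_0)$, and it is attained provided $\pi^*$ itself is admissible.

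The final step is the admissibility verification, which I expect to be the main technical point: $\pi^*$ is clearly $\mathbb F^H$-adapted because $\hat\mu^H$ is, so only the integrability condition needs checking. Writing
\[ \lVert\sigma^T\pi_t^*\rVert^2 = (\hat\mu^H_t-r_t\ones)^T(\sigma\sigma^T)^{-1}(\hat\mu^H_t-r_t\ones), \]
I would apply identity~\eqref{eq:second_moments_filter} to get $\E\bigl[(\hat\mu^H_t)(\hat\mu^H_t)^T\bigr]=\Sigma_t+m_tm_t^T-\gamma^H_t$, which, combined with continuity of $\Sigma_t$, $m_t$ and $r_t$ on $[0,T]$ and positive semidefiniteness of $\gamma^H_t$, produces an integrable bound. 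Hence $\pi^*\in\mathcal A^H(x_0)$ and attains the supremum, proving the claim.
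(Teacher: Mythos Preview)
Your proof is correct and follows the same overall structure as the paper's: It\^o formula, vanishing of the stochastic integral by admissibility, substitution $\mu_t\to\hat\mu^H_t$, pointwise quadratic maximization, and admissibility check via~\eqref{eq:second_moments_filter}. The only notable difference is in how you perform the substitution. You split into cases and, for $H\in\{R,C\}$, invoke the innovation representation $\rmd R_t=\hat\mu^H_t\,\rmd t+\sigma\,\rmd\tilde W^H_t$; the paper instead handles all four cases uniformly with the tower property,
\[
\E\bigl[\pi_t^T(\mu_t-r_t\ones)\bigr]=\E\bigl[\E[\pi_t^T(\mu_t-r_t\ones)\mid\mathcal F^H_t]\bigr]=\E\bigl[\pi_t^T(\hat\mu^H_t-r_t\ones)\bigr],
\]
using only that $\pi_t$ is $\mathcal F^H_t$-measurable. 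The innovation route is valid but heavier: it presupposes the Kalman-filter innovation decomposition and the fact that $\tilde W^H$ remains an $\mathbb F^H$-Brownian motion across update times. The tower-property route is shorter, avoids any case distinction, and is in fact what you already use for $H=E$ (your independence remark there is unnecessary---$\mathcal F^H_t$-measurability of $\pi_t$ alone suffices).
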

   \begin{proof}
    From the dynamics of the wealth process we get for any $\pi\in\mathcal{A}^H(x_0)$ that
    \[ \log(X^\pi_T) = \log(x_0)+\int_0^T \bigl(\pi_t^T(\mu_t-r_t\mathbf{1}_d)+r_t-\frac{1}{2}\lVert \sigma^T\pi_t\rVert^2 \bigr) \,\rmd t + \int_0^T \pi_t^T\sigma \,\rmd W_t.\]
    We now apply Fubini and use that, since $\pi\in\mathcal{A}^H(x_0)$, the stochastic integral has expectation zero. Hence, we deduce
    \begin{align*}
     \E[\log(X^\pi_T)] &= \log(x_0) + \int_0^T \E\bigl[\pi_t^T(\mu_t-r_t\mathbf{1}_d)+r_t-\frac{1}{2}\lVert \sigma^T\pi_t\rVert^2\bigr] \,\rmd t \\
     &= \log(x_0) + \int_0^T \E\Bigl[\E\bigl[\pi_t^T(\mu_t-r_t\mathbf{1}_d)+r_t-\frac{1}{2}\lVert \sigma^T\pi_t\rVert^2\;\big|\; \mathcal{F}^H_t\bigr]\Bigr] \,\rmd t \\
     &= \log(x_0) + \int_0^T \E\bigl[\pi_t^T(\hat{\mu}^H_t-r_t\mathbf{1}_d)+r_t-\frac{1}{2}\lVert \sigma^T\pi_t\rVert^2\bigr] \,\rmd t.
    \end{align*}
    Now we fix some $t\in[0,T]$. Following a pointwise maximization, we formally take the derivative of the expression inside the expectation with respect to $\pi_t$. Using the first-order condition, we set the derivative to zero, which means setting $\hat{\mu}^H_t-r_t\mathbf{1}_d-\sigma\sigma^T\pi_t$ equal to the zero vector. Since we have assumed that $\sigma\sigma^T$ is positive definite, this implies that $\pi^*_t=(\sigma\sigma^T)^{-1}(\hat{\mu}^H_t-r_t\ones)$ maximizes the above integrand pointwise.
    It remains to check that $(\pi^*_t)_{t\in[0,T]}$ is indeed admissible. First, we note that
    \begin{align*}
     \int_0^T \lVert \sigma^T\pi^*_t \rVert^2 \,\rmd t &= \int_0^T \bigl\lVert \sigma^T(\sigma\sigma^T)^{-1}(\hat{\mu}^H_t-r_t\ones) \bigr\rVert^2 \,\rmd t \\
     &= \int_0^T (\hat{\mu}^H_t-r_t\ones)^T(\sigma\sigma^T)^{-1}(\hat{\mu}^H_t-r_t\ones) \,\rmd t.
    \end{align*}
    Taking the expectation and applying Fubini we get
    \begin{align*}
     &\E\biggl[\int_0^T \lVert \sigma^T\pi^*_t \rVert^2 \,\rmd t\biggr] = \int_0^T \E\bigl[(\hat{\mu}^H_t-r_t\ones)^T(\sigma\sigma^T)^{-1}(\hat{\mu}^H_t-r_t\ones)\bigr] \,\rmd t \\
     &\qquad\;\; = \int_0^T \tr\bigl((\sigma\sigma^T)^{-1}\E[\hat{\mu}_t(\hat{\mu}_t)^T]\bigr)-2(r_t\ones)^T(\sigma\sigma^T)^{-1}m_t + (r_t\ones)^T(\sigma\sigma^T)^{-1}(r_t\ones) \,\rmd t,
    \end{align*}
    where the last equality has been stated above. The integrals over the last two summands are finite due to continuity of both $(r_t)_{t\in[0,T]}$ and $(m_t)_{t\in[0,T]}$. We consider the first summand in more detail. We have $\E[\hat{\mu}_t(\hat{\mu}_t)^T]= \Sigma_t+m_tm_t^T-\gamma^H_t$ by Equation~\eqref{eq:second_moments_filter}. Hence,
    \begin{align*}
     \int_0^T \tr\bigl((\sigma\sigma^T)^{-1}\E[\hat{\mu}_t(\hat{\mu}_t)^T]\bigr) \,\rmd t &= \int_0^T \tr\bigl((\sigma\sigma^T)^{-1}(\Sigma_t+m_tm_t^T-\gamma^H_t)\bigr)\,\rmd t \\
     &= \int_0^T \tr\bigl((\sigma\sigma^T)^{-1}(\Sigma_t+m_tm_t^T)\bigr) - \tr\bigl((\sigma\sigma^T)^{-1}\gamma^H_t\bigr) \,\rmd t.
    \end{align*}
    Recall that $(\sigma\sigma^T)^{-1}$ is symmetric positive definite and $\gamma^H_t$ is symmetric positive semidefinite. It can be shown that the product $(\sigma\sigma^T)^{-1}\gamma^H_t$ has a non-negative trace, hence
    \[ \int_0^T \tr\bigl((\sigma\sigma^T)^{-1}\E[\hat{\mu}_t(\hat{\mu}_t)^T]\bigr) \,\rmd t \leqslant \int_0^T \tr\bigl((\sigma\sigma^T)^{-1}(\Sigma_t+m_tm_t^T)\bigr) \,\rmd t, \]
    which is finite due to continuity. It follows that
    \[ \E\biggl[\int_0^T \lVert \sigma^T\pi^*_t \rVert^2 \,\rmd t\biggr] < \infty, \]
    so $(\pi^*_t)_{t\in[0,T]}$ is an admissible strategy.
   \end{proof}
   
   Note that under full information the optimal strategy is $(\sigma\sigma^T)^{-1}(\mu_t-r_t\ones)$. That means that for our portfolio optimization problem under partial information, the \textit{certainty equivalence principle} holds, meaning that the drift $\mu_t$ in the optimal strategy is replaced by the filter $\hat{\mu}^H_t$. Now that we have an explicit formula for the optimal trading strategy it is easy to write down the optimal value function for the optimization problem.
   
   \begin{theorem}\label{thm:optimal_value}
    The optimal value of the portfolio optimization problem is
    \begin{align*}
     V^H(x_0) &= \log(x_0)+\int_0^T \frac{1}{2}\E\bigl[(\hat{\mu}^H_t-r_t\ones)^T(\sigma\sigma^T)^{-1}(\hat{\mu}^H_t-r_t\ones)\bigr]+r_t \,\rmd t \\
     &= \log(x_0)+\int_0^T r_t-(r_t\ones)^T(\sigma\sigma^T)^{-1}m_t+\frac{1}{2}(r_t\ones)^T(\sigma\sigma^T)^{-1}(r_t\ones)\,\rmd t \\
     & \qquad\;\; +\frac{1}{2}\int_0^T \tr\bigl((\sigma\sigma^T)^{-1}(\Sigma_t+m_tm_t^T-\gamma^H_t)\bigr)\,\rmd t.
    \end{align*}
   \end{theorem}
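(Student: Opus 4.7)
The plan is to build directly on the computations performed in the proof of Proposition~\ref{prop:optimal_strategy}. There we already derived, for any admissible $\pi\in\mathcal{A}^H(x_0)$, the identity
\[
 \E[\log(X^\pi_T)] = \log(x_0) + \int_0^T \E\bigl[\pi_t^T(\hat{\mu}^H_t - r_t\ones) + r_t - \tfrac{1}{2}\lVert\sigma^T\pi_t\rVert^2\bigr]\,\rmd t,
\]
and we identified the pointwise maximizer of the integrand as $\pi^*_t = (\sigma\sigma^T)^{-1}(\hat{\mu}^H_t - r_t\ones)$, which lies in $\mathcal{A}^H(x_0)$. Hence the value function is obtained by substituting $\pi^*$ into this formula and justifying that pointwise maximization of the integrand indeed maximizes the full integral, which follows because the pointwise bound holds almost surely and under the expectation.

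The first step in the proof will be this substitution. Observe that $(\pi^*_t)^T(\hat{\mu}^H_t - r_t\ones) = (\hat{\mu}^H_t - r_t\ones)^T(\sigma\sigma^T)^{-1}(\hat{\mu}^H_t - r_t\ones)$ and that $\lVert\sigma^T\pi^*_t\rVert^2 = (\pi^*_t)^T\sigma\sigma^T\pi^*_t$ equals the same quadratic form, using the symmetry of $(\sigma\sigma^T)^{-1}$. Combining the linear and quadratic terms then yields
\[
 V^H(x_0) = \log(x_0) + \int_0^T \tfrac{1}{2}\,\E\bigl[(\hat{\mu}^H_t - r_t\ones)^T(\sigma\sigma^T)^{-1}(\hat{\mu}^H_t - r_t\ones)\bigr] + r_t\,\rmd t,
\]
which is the first form of the stated value function.

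For the second form, I would invoke the auxiliary identity stated earlier in Section~\ref{sec:portfolio_optimization_problem},
\[
 \E\bigl[(\hat{\mu}^H_t - r_t\ones)^T(\sigma\sigma^T)^{-1}(\hat{\mu}^H_t - r_t\ones)\bigr] = \tr\bigl((\sigma\sigma^T)^{-1}\E[\hat{\mu}^H_t(\hat{\mu}^H_t)^T]\bigr) - 2(r_t\ones)^T(\sigma\sigma^T)^{-1}m_t + (r_t\ones)^T(\sigma\sigma^T)^{-1}(r_t\ones),
\]
and replace $\E[\hat{\mu}^H_t(\hat{\mu}^H_t)^T]$ by $\Sigma_t + m_tm_t^T - \gamma^H_t$ using equation~\eqref{eq:second_moments_filter}. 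Splitting the resulting integral into its $\gamma^H_t$-free part and the term $-\tfrac{1}{2}\tr((\sigma\sigma^T)^{-1}\gamma^H_t)$ gives exactly the second representation.

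There is essentially no hard step here; the main subtlety worth commenting on is the interchange of expectation and the $\rmd t$-integral, which was already handled by Fubini in the proof of Proposition~\ref{prop:optimal_strategy} using admissibility of $\pi^*$, and the fact that pointwise maximization of the integrand is legitimate because the integrand is a deterministic function of $t$ (after taking expectations) that depends on $\pi_t$ only through the pointwise expression inside the expectation, so any strategy doing worse pointwise also does worse in integral.
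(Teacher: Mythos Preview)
Your proof is correct and follows essentially the same route as the paper: substitute the optimal strategy from Proposition~\ref{prop:optimal_strategy} into the expected log-wealth formula, simplify the quadratic terms using symmetry of $(\sigma\sigma^T)^{-1}$ to obtain the first representation, and then apply the auxiliary identity together with \eqref{eq:second_moments_filter} for the second. Your additional remarks on Fubini and the legitimacy of pointwise maximization are sound and slightly more explicit than the paper's treatment, but the argument is the same.
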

   \begin{proof}
    Throughout the proof we shortly write $\hat{\mu}_t$ for $\hat{\mu}^H_t$. As above,
    \[ V^H(x_0) = \log(x_0)+\int_0^T \E\bigl[(\pi^*_t)^T(\hat{\mu}_t-r_t\mathbf{1}_d)+r_t-\frac{1}{2}\lVert \sigma^T\pi^*_t\rVert^2\bigr] \,\rmd t. \]
    After inserting the optimal strategy from Proposition~\ref{prop:optimal_strategy} the integral term becomes
    \[ \int_0^T \E\Bigl[(\hat{\mu}_t-r_t\mathbf{1}_d)^T(\sigma\sigma^T)^{-1}(\hat{\mu}_t-r_t\mathbf{1}_d)+r_t-\frac{1}{2}\bigl\lVert (\hat{\mu}_t-r_t\mathbf{1}_d)^T(\sigma\sigma^T)^{-1}\sigma\bigr\rVert^2\Bigr]\,\rmd t. \]
    The last summand inside the expectation can be written as
    \begin{align*}
     \frac{1}{2}\bigl\lVert (\hat{\mu}_t-r_t\mathbf{1}_d)^T(\sigma\sigma^T)^{-1}\sigma\bigr\rVert^2 &= \frac{1}{2}\Bigl((\hat{\mu}_t-r_t\mathbf{1}_d)^T(\sigma\sigma^T)^{-1}\sigma\Bigr) \Bigl((\hat{\mu}_t-r_t\mathbf{1}_d)^T(\sigma\sigma^T)^{-1}\sigma\Bigr)^T \\
     &= \frac{1}{2}(\hat{\mu}_t-r_t\mathbf{1}_d)^T(\sigma\sigma^T)^{-1}(\hat{\mu}_t-r_t\mathbf{1}_d),
    \end{align*}
    so all in all we get for the value function
    \[ V^H(x_0) = \log(x_0) +\int_0^T \E\Bigl[\frac{1}{2}(\hat{\mu}_t-r_t\mathbf{1}_d)^T(\sigma\sigma^T)^{-1}(\hat{\mu}_t-r_t\mathbf{1}_d)+r_t\Bigr]\,\rmd t. \]
    Now the claim follows.
   \end{proof}
   
  \subsection{Properties of the Value Function}
   
   \begin{corollary}\label{cor:comparison_of_value_functions}
    For any $x_0>0$ it holds $\max\bigl\{V^R(x_0),V^E(x_0)\bigr\}\leqslant V^C(x_0)\leqslant V^F(x_0)$.
   \end{corollary}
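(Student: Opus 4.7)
The plan is to read off the value functions from Theorem~\ref{thm:optimal_value} and observe that the only $H$\nobreakdash-dependent contribution is the term $-\frac{1}{2}\int_0^T \tr\bigl((\sigma\sigma^T)^{-1}\gamma^H_t\bigr)\,\rmd t$. Hence the entire comparison reduces to comparing these trace integrals for $H\in\{R,E,C,F\}$, and ultimately to comparing the conditional covariance matrices $\gamma^H_t$ in the positive semidefinite ordering.

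The ordering of the matrices is already at hand. Proposition~\ref{prop:comparison_of_gammas} gives $\gamma^C_t\leqslant\gamma^R_t$ and $\gamma^C_t\leqslant\gamma^E_t$ for every $t\in[0,T]$, while the full information case satisfies $\gamma^F_t=\mathbf{0}_d\leqslant\gamma^C_t$, as noted at the end of Section~\ref{sec:market_model_and_filtering_equations}. To transfer these matrix inequalities to the traces, I would use the elementary fact that if $A,B\in\R^{d\times d}$ are symmetric with $A\leqslant B$ and $M\in\R^{d\times d}$ is symmetric positive semidefinite, then $\tr(MA)\leqslant\tr(MB)$. A one-line justification: write $B-A=LL^T$ and compute $\tr\bigl(M(B-A)\bigr)=\tr(MLL^T)=\tr(L^TML)\geqslant 0$, since $L^TML$ is positive semidefinite. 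Applying this with $M=(\sigma\sigma^T)^{-1}$, which is symmetric positive definite by our standing assumption, yields
\[
 \tr\bigl((\sigma\sigma^T)^{-1}\gamma^F_t\bigr)\leqslant \tr\bigl((\sigma\sigma^T)^{-1}\gamma^C_t\bigr)\leqslant \tr\bigl((\sigma\sigma^T)^{-1}\gamma^H_t\bigr)\quad\text{for }H\in\{R,E\},\ t\in[0,T].
\]

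Integrating these pointwise inequalities over $[0,T]$, multiplying by $-\frac{1}{2}$, and adding the common terms from Theorem~\ref{thm:optimal_value} reverses the inequalities at the level of the integrands and gives
\[
 V^F(x_0)\geqslant V^C(x_0)\geqslant \max\bigl\{V^R(x_0),V^E(x_0)\bigr\},
\]
which is exactly the claim. I do not foresee a genuine obstacle here: the statement is essentially a formal corollary of the explicit formula in Theorem~\ref{thm:optimal_value} combined with Proposition~\ref{prop:comparison_of_gammas}. The only minor point that needs a brief mention is the trace monotonicity lemma, which I would either prove in one line as above or quote as a standard fact.
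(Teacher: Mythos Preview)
Your proposal is correct and follows essentially the same approach as the paper: reduce the comparison to the $H$-dependent trace term in Theorem~\ref{thm:optimal_value}, invoke Proposition~\ref{prop:comparison_of_gammas} and $\gamma^F_t=\mathbf{0}_d$ for the semidefinite ordering, and then use that $\tr\bigl((\sigma\sigma^T)^{-1}(B-A)\bigr)\geqslant 0$ whenever $B\geqslant A$. The paper merely asserts the trace fact without proof, whereas you supply the one-line $LL^T$ argument, but otherwise the arguments are identical.
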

   \begin{proof}
    By Proposition~\ref{prop:comparison_of_gammas} we know that $\gamR-\gamC$ is a positive semidefinite matrix for any $t\in[0,T]$. By assumption, $(\sigma\sigma^T)^{-1}$ is positive definite. Hence, the product of these matrices has a non-negative trace, so
    \[ \tr\bigl((\sigma\sigma^T)^{-1}(\gamR-\gamC)\bigr)\geqslant 0. \]
    Therefore,
    \[ \tr\bigl((\sigma\sigma^T)^{-1}\gamR\bigr)\geqslant \tr\bigl((\sigma\sigma^T)^{-1}\gamC\bigr), \]
    which implies by the previous theorem that $V^C(x_0)\geqslant V^R(x_0)$. The same holds for $H=E$ instead of $R$, so $V^C(x_0)\geqslant V^E(x_0)$. Since $\gamF=\mathbf{0}_d$ we also have $V^F(x_0)\geqslant V^C(x_0)$.
   \end{proof}
   
   From Theorem~\ref{thm:asymptotics_for_N_to_infinity} we immediately deduce the following result about the asymptotic behaviour of the value function when the number of expert opinions goes to infinity.
   
   \begin{corollary}\label{cor:asymptotics_of_value_function}
    Let the assumptions of Theorem~\ref{thm:asymptotics_for_N_to_infinity} be fulfilled. Denote the value functions corresponding to the $N$ expert opinions by $V^{E,N}(x_0)$ and $V^{C,N}(x_0)$. Then
    \[ \lim_{N\to\infty} V^{E,N}(x_0) = \lim_{N\to\infty} V^{C,N}(x_0) = V^F(x_0). \]
   \end{corollary}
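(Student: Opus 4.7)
The plan is to reduce everything to the integral representation of the value function provided by Theorem~\ref{thm:optimal_value} and then apply dominated convergence, invoking Theorem~\ref{thm:asymptotics_for_N_to_infinity} for pointwise convergence of the integrands.

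First, I would read off from Theorem~\ref{thm:optimal_value} that, since $\gamma^F_t=\mathbf{0}_d$ for all $t\in[0,T]$, the value function for the $H$-investor differs from the full-information value function only through the $\gamma$-term, namely
\[
V^F(x_0)-V^{H,N}(x_0)=\tfrac{1}{2}\int_0^T \tr\bigl((\sigma\sigma^T)^{-1}\gamma^{H,N}_t\bigr)\,\rmd t
\qquad\text{for }H\in\{E,C\}.
\]
Thus the claim reduces to showing that the right-hand side tends to zero as $N\to\infty$.

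Second, for fixed $t\in(0,T]$ the integrand converges to zero: by Theorem~\ref{thm:asymptotics_for_N_to_infinity} we have $\lVert\gamma^{H,N}_t\rVert\to 0$, and since $\gamma^{H,N}_t$ is symmetric positive semidefinite while $(\sigma\sigma^T)^{-1}$ is symmetric positive definite, one has the elementary bound $0\leqslant\tr((\sigma\sigma^T)^{-1}\gamma^{H,N}_t)\leqslant \tr((\sigma\sigma^T)^{-1})\lVert\gamma^{H,N}_t\rVert$, which vanishes in the limit. The set $\{0\}$ where convergence is not guaranteed is a null set for Lebesgue measure, so it plays no role in the integral.

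Third, to justify passage to the limit under the integral I need a dominating function integrable on $[0,T]$. For the $C$-investor, Proposition~\ref{prop:comparison_of_gammas} gives $\gamma^{C,N}_t\leqslant\gamma^R_t$, and $\gamma^R_t$ does not depend on $N$ and is continuous on $[0,T]$, hence bounded. For the $E$-investor one can compare $\gamma^{E,N}_t$ with the solution $\tilde\gamma^E_t$ of the updateless Lyapunov equation $\ddt\tilde\gamma^E=-\alpha\tilde\gamma^E-\tilde\gamma^E\alpha+\beta\beta^T$ starting in $\Sigma_0$: between information dates $\gamma^{E,N}$ evolves according to the same equation, while every update can only decrease it in the positive semidefinite ordering by Proposition~\ref{prop:update_decreases_covariance}. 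Inductively therefore $\gamma^{E,N}_t\leqslant\tilde\gamma^E_t$, and $\tilde\gamma^E_t$ is continuous on $[0,T]$, so $\lVert\gamma^{E,N}_t\rVert$ is bounded uniformly in $t$ and $N$. In both cases the integrands $\tr((\sigma\sigma^T)^{-1}\gamma^{H,N}_t)$ are dominated by a constant, so dominated convergence applies and yields $\lim_{N\to\infty}(V^F(x_0)-V^{H,N}(x_0))=0$.

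The only subtle step is the uniform domination, since Theorem~\ref{thm:asymptotics_for_N_to_infinity} gives only pointwise convergence and no quantitative uniform control. Everything else is a direct substitution into the formula of Theorem~\ref{thm:optimal_value} and an application of standard inequalities; the hard work has already been done in proving Theorem~\ref{thm:asymptotics_for_N_to_infinity}.
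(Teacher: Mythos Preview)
Your proof is correct and follows essentially the same route as the paper: rewrite $V^F-V^{H,N}$ via the integral formula of Theorem~\ref{thm:optimal_value}, use Theorem~\ref{thm:asymptotics_for_N_to_infinity} for pointwise convergence of the integrand, and apply dominated convergence. The only cosmetic difference is the dominating function---the paper simply bounds $\tr\bigl((\sigma\sigma^T)^{-1}(\Sigma_t+m_tm_t^T-\gamma^{H,N}_t)\bigr)\leqslant\tr\bigl((\sigma\sigma^T)^{-1}(\Sigma_t+m_tm_t^T)\bigr)$ uniformly in $H$ and $N$, which avoids your case split between $H=E$ and $H=C$.
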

   \begin{proof}
    Recall from Theorem~\ref{thm:asymptotics_for_N_to_infinity} that
    \[ \lim_{N\to\infty} \bigl\lVert \gamma_u^{E,N} \bigr\rVert = \lim_{N\to\infty} \bigl\lVert \gamma_u^{C,N} \bigr\rVert = 0 \]
    and that $\gamma^F_u = \mathbf{0}_d$ for all $u\in(0,T]$.
    We observe that
    \[ \tr\bigl((\sigma\sigma^T)^{-1}(\Sigma_t+m_tm_t^T-\gamma^H_t)\bigr) \leqslant \tr\bigl((\sigma\sigma^T)^{-1}(\Sigma_t+m_tm_t^T)\bigr) \]
    since both $(\sigma\sigma^T)^{-1}$ and $\gamma^H_t$ are positive semidefinite.
    Using dominated convergence we conclude from the representation of the value function in Theorem~\ref{thm:optimal_value} that $V^{E,N}(x_0)$ and $V^{C,N}(x_0)$ converge to $V^F(x_0)$ when $N$ goes to infinity.
   \end{proof}
   
 \section{Numerical Results}\label{sec:numerical_results}
  
  \subsection{Filters for Various Investors}
   
   After having analyzed in detail the behaviour of the conditional covariance matrices $\gamma^H_t$ for $t\in[0,T]$ we shortly illustrate the development of the filters $\hat{\mu}^H_t$ over time. We have seen that $\hatmuR$ follows a stochastic differential equation. For $\hatmuE$ and $\hatmuC$ there are information dates at which an update of the filter takes place. In the case $H=E$ we have an explicit formula for the development of $\hatmuE$ between these information dates. In the case $H=C$, the filter follows a stochastic differential equation between any two incoming expert opinions.
   
   \begin{example}\label{ex:filters}
    We consider an example for an investment horizon $T$ of one year with $N=10$ equidistant information dates. In our financial market there are three stocks. The expert's covariance matrices are assumed to be constant, i.e.\ $\Gamma_k=\Gamma$ for all $k\in\{0, \dots, N-1\}$. The underlying model parameters are listed in Table~\ref{tab:model_parameters_numerics}. The mean $\delta$ of the drift process $\mu$ is given by the vector $(0.05, 0.10, 0.08)^T\in\R^3$.
    
    In Figure~\ref{fig:filters} one sees one possible realization of the drift process $\mu$ as well as the various filters. The respective first components are plotted in the uppermost subplot, the second components in the middle subplot, and the third components in the lowest subplot. Additionally, the expert opinions $Z_k$ are included in the graphs.
    
    Whereas the filter $\hatmuR$ only takes into account the return observations, $\hatmuE$ only depends on the expert opinions $Z_k$. At each information date the filter $\hat{\mu}^E_{t_k}$ is formed by taking a weighted mean of the former filter and the expert opinion $Z_k$. The combined filter $\hatmuC$ includes both aspects. It takes into account return observations and has update steps at which it takes notice of the information arriving in form of expert opinions.
   \end{example}


   \begin{table}[hb]
    \centering
    \setlength{\tabcolsep}{2pt}
    \begin{tabular}{p{3mm}p{3mm}p{3.8cm}p{4mm}p{3mm}p{2.7cm}}
     \hline\hline
     \addlinespace[2mm]
     $\alpha$ & $=$ & $\begin{pmatrix*}[r] 2\phantom{0} & 1\phantom{0} & -1\phantom{0} \\ 1\phantom{0} & 2\phantom{0} & -1\phantom{0} \\ -1\phantom{0} & -1\phantom{0} & 2\phantom{0} \end{pmatrix*}$ & $\beta$ & $=$ & $\begin{pmatrix} 0.3 & 0.5 & 0.1 \\ 0.5 & 0.2 & 0.2 \\ 0.1 & 0.2 & 0.2 \end{pmatrix}$ \\
     \addlinespace[2mm]
     $\sigma$ & $=$ & $\begin{pmatrix*}[r] 0.30 & 0.08 & 0.05 \\ 0.08 & 0.40 & 0.05 \\ 0.05 & 0.05 & 0.35 \end{pmatrix*}$ & $\Sigma_0$ & $=$ & $\begin{pmatrix} 0.2 & 0.1 & 0.1 \\ 0.1 & 0.3 & 0.1 \\ 0.1 & 0.1 & 0.2 \end{pmatrix}$ \\
     \addlinespace[2mm]
     $\Gamma$ & $=$ & $\begin{pmatrix*}[r] 0.80 & 0.32 & 0.16 \\ 0.32 & 0.72 & 0.24 \\ 0.16 & 0.24 & 0.64 \end{pmatrix*}$ & & & \\ 
     \addlinespace[2mm]
     \hline\hline
    \end{tabular}
    \caption{Model parameters for Example~\ref{ex:filters}}
    \label{tab:model_parameters_numerics}
   \end{table}

   
   \begin{figure}[ht]
    \centering
    \setlength\figureheight{11.2cm}
    \setlength\figurewidth{0.8\textwidth}
    \input{filters_parameters_update2.tikz}
    \caption{First, second and third component of $\mu$ and the various filters in Example~\ref{ex:filters}}
    \label{fig:filters}
   \end{figure}
   
  \subsection{Analysis of the Value Function}
   
   We can also define the efficiency of the $H$-investor, $H\in\{R,E,C\}$, in the sense of Rogers~\cite{rogers_2001}. We therefore denote by $x_0^H$ the initial capital needed by the $H$-investor to achieve the same expected logarithmic utility of terminal wealth as the $F$-investor starting with an initial capital of $x_0^F=1$. That means, $x_0^H$ is obtained by solving $V^H(x_0^H)=V^F(1)$.
   The value
   \[ \rho^H=\frac{x_0^F}{x_0^H}=\frac{1}{x_0^H} \]
   is then called the \emph{efficiency} of the $H$-investor. It can be shown that in the multi-dimensional case we have
   \[ \rho^H=\exp\biggl(-\frac{1}{2}\int_0^T \tr\bigl((\sigma\sigma^T)^{-1}\gamma^H_t\bigr)\,\rmd t\biggr). \]
   From this representation and our results about the covariance matrices $(\gamma^H_t)_{t\in[0,T]}$ we can deduce that $\max\{\rho^R,\rho^E\}\leqslant\rho^C$. This is intuitive since the $C$-investor can use both return observations and expert opinions for making trading decisions, whereas the $R$-investor and the $E$-investor each only have one of these sources of information at hand.
   
   In Theorem~\ref{thm:asymptotics_for_N_to_infinity} we have seen that when we let the number $N$ of expert opinions go to infinity and the expert covariance matrices $\Gamma^{(N)}_k$ are bounded, the filter covariance matrices $\gamma^{E,N}_t$ and $\gamma^{C,N}_t$ converge to the zero matrix, i.e.\ to $\gamma^F_t$ for each $t\in(0,T]$. By dominated convergence one can conclude
   \[ \lim_{N\to\infty} \rho^{E,N}=\exp\biggl(-\frac{1}{2}\int_0^T \tr\bigl((\sigma\sigma^T)^{-1}\gamma^F_t\bigr)\,\rmd t\biggr) = 1 \]
   and analogously $\lim_{N\to\infty} \rho^{C,N}=1$.
   Hence, in the limit an increasing number of expert opinions yields the highest possible efficiency.
   
   \begin{example}\label{ex:value_function}
    We consider a market with $d=3$ stocks, $m=d$, an investment horizon $T$ of one year and equidistant information dates $t_k=kT/N$ where the expert's covariance matrices are constant, i.e.\ $\Gamma^{(N)}_k=\Gamma$ for all $N\in\N$ and $k\in\{0, \dots, N-1\}$. The model parameters are the same as in Example~\ref{ex:filters}.
    
    For the sake of simplicity we assume that the interest rate $r_t$ of the risk-free bond is zero for all $t\in[0,T]$ and that we start with an initial capital of $x_0=1$. We can now calculate the value function for the different investors and, in the case $H=E$ and $H=C$, different values of $N$. Note that since we have set $r$ to zero and $x_0$ to one, we get as a simpler form of the value function
    \[ V^H(1)= \frac{1}{2}\int_0^T \tr\bigl((\sigma\sigma^T)^{-1}(\Sigma_t+m_tm_t^T-\gamma^H_t)\bigr)\,\rmd t. \]
    We obtain $V^R(1)=0.4503$, $V^F(1)=1.5358$, as well as the values listed in the left part of Table~\ref{tab:value_function_for_various_N_scaled} for $H=E$ and $H=C$. Here, $V^{E,N}(1)$ and $V^{C,N}(1)$ correspond to the situation with $N$ equidistant information dates. Note that $N=0$ yields the special case $V^{C,0}(1)=V^R(1)$ and $V^{E,0}(1)$ is the value function for the investor who has no information at all apart from the model parameters. In that case $\gam{E,0}{t}=\Sigma_t$ for all $t\in[0,T]$. In the last row, we have stated the value function $V^F(1)$ for the fully informed investor.
    
    We can observe that
    \[ \max\bigl\{V^R(1),V^{E,N}(1)\bigr\}\leqslant V^{C,N}(1)\leqslant V^F(1) \]
    for all $N$, a fact that has been proven in Corollary~\ref{cor:comparison_of_value_functions}. Furthermore, the value functions $V^{E,N}(1)$ and $V^{C,N}(1)$ are increasing in $N$. For large values of $N$ the difference between $V^{E,N}(1)$ and $V^{C,N}(1)$ goes to zero and both value functions get close to the value function under full information, $V^F(1)$. This has been proven in Corollary~\ref{cor:asymptotics_of_value_function}. Note that the asserted convergence is rather slow.
    
    The right-hand side of Table~\ref{tab:value_function_for_various_N_scaled} shows the efficiencies of the different investors for the various values of $N$.
   \end{example}


   \begin{table}[ht]
    \centering
    \begin{tabular}{|r|cc|cc|}
     \hline
     $N$ & $V^{E,N}(1)$ & $V^{C,N}(1)$ & $\rho^{E,N}$ & $\rho^{C,N}$\tstrut\bstrut \\ \hline
     0 & 0.0429 & 0.4503 & 22.47\,\% & 33.77\,\%\tstrut \\
     10 & 0.6294 & 0.7414 & 40.40\,\% & 45.18\,\% \\
     100 & 1.1358 & 1.1463 & 67.03\,\% & 67.74\,\% \\
     1000 & 1.4006 & 1.4010 & 87.35\,\% & 87.39\,\% \\
     10000 & 1.4933 & 1.4933 & 95.84\,\% & 95.84\,\% \\ \hline
     $H=F$ &\multicolumn{2}{c|}{1.5358} & \multicolumn{2}{c|}{100.00\,\%}\tstrut \\ \hline
    \end{tabular}
    \caption{Value functions and efficiencies for various $N$ in Example~\ref{ex:value_function}}
    \label{tab:value_function_for_various_N_scaled}
   \end{table}
   
 

\end{document}